\ifwindows\usepackage{lmodern}\fi
\newenvironment{module}{}{}
\newcommand{\specialcell}[2][t]{\begin{tabular}[#1]{@{}l@{}}#2\end{tabular}}
\newcommand{\qw}[1][-1]{\ar @{-} [0,#1]}
\newcommand{\qwx}[1][-1]{\ar @{-} [#1,0]}
\newcommand{\cw}[1][-1]{\ar @{=} [0,#1]}
\newcommand{\gate}[1]{*+<.6em>{#1} \POS ="i","i"+UR;"i"+UL **\dir{-};"i"+DL **\dir{-};"i"+DR **\dir{-};"i"+UR **\dir{-},"i" \qw}
\newcommand{\multimeasureD}[2]{*+<1em,.9em>{\hphantom{#2}} \POS [0,0]="i",[0,0].[#1,0]="e",!C *{#2},"e"+UR-<.8em,0em>;"e"+UL **\dir{-};"e"+DL **\dir{-};"e"+DR+<-.8em,0em> **\dir{-};{"e"+DR+<0em,.8em>\ellipse^{}};"e"+UR+<0em,-.8em> **\dir{-};{"e"+UR-<.8em,0em>\ellipse^{}},"i" \qw}
\newcommand{\control}{*!<0em,.025em>-=-<.2em>{\bullet}}
\newcommand{\controlo}{*+<.01em>{\xy -<.095em>*\xycircle<.19em>{} \endxy}}
\newcommand{\ctrl}[1]{\control \qwx[#1] \qw}
\newcommand{\targ}{*+<.02em,.02em>{\xy ="i","i"-<.39em,0em>;"i"+<.39em,0em> **\dir{-}, "i"-<0em,.39em>;"i"+<0em,.39em> **\dir{-},"i"*\xycircle<.4em>{} \endxy} \qw}
\newcommand{\multigate}[2]{*+<1em,.9em>{\hphantom{#2}} \POS [0,0]="i",[0,0].[#1,0]="e",!C *{#2},"e"+UR;"e"+UL **\dir{-};"e"+DL **\dir{-};"e"+DR **\dir{-};"e"+UR **\dir{-},"i" \qw}
\newcommand{\ghost}[1]{*+<1em,.9em>{\hphantom{#1}} \qw}
\newcommand{\push}[1]{*{#1}}
\newcommand{\gategroup}[6]{\POS"#1,#2"."#3,#2"."#1,#4"."#3,#4"!C*+<#5>\frm{#6}}
\newcommand{\lstick}[1]{*!R!<.5em,0em>=<0em>{#1}}
\newcommand{\ustick}[1]{*!D!<0em,-.5em>=<0em>{#1}}
\newcommand{\dstick}[1]{*!U!<0em,.5em>=<0em>{#1}}
\newcommand{\Qcircuit}{\xymatrix @*=<0em>}
\newcommand{\pureghost}[1]{*+<1em,.9em>{\hphantom{#1}}}
\newcommand{\phantomtarg}{*+<.02em,.02em>{\xy ="i",
        "i"-<.39em,0em>;"i"+<.39em,0em> **\dir{-}, 
        "i"-<0em,.39em>;"i"+<0em,.39em> **\dir{},
        "i"*\xycircle<.4em>{\dir{}} 
        \endxy} \qw}
\newcommand{\phantomcontrol}{*{\rule{0pt}{.225em}}}
\newcommand*{\qkr}{Quantum Key Repeater}
\newcommand*\locc{{LOCC}}
\newcommand{\keyof}[1]{#1_{\smash k}}
\newcommand{\shieldof}[1]{#1_{\smash s}}
\newcommand{\keyA}{{\keyof{A}}}
\newcommand{\keyB}{{\keyof{B}}}
\newcommand{\key} {{\keyA\keyB}}
\newcommand{\shieldA}{{\shieldof{A}}}
\newcommand{\shieldB}{{\shieldof{B}}}
\newcommand{\shield} {{\shieldA\shieldB}}
\providecommand\bell{}
\renewcommand\bell{{Bell}}
\newcommand*\measurement{\mathcal{M}}
\newcommand\measurements{{\mathbb{M}}}
\newcommand\maps{{\mathbb{L}}}
\newcommand*\corr{{\smash{\hat\Phi}\vphantom\Phi}}
\newcommand*\sep{{SEP}}
\newcommand*\bnot{{BNOT}}
\newcommand*\cnot{CNOT}
\newcommand{\ab} {{A{:}B}}
\newcommand{\atb}{{A{\to}B}}
\newcommand\bgamma{\gamma_{\bell}}
\newcommand{\keyattacked}
{key attacked state}
\newcommand\bit
\newcommand\eprbit
\newcommand\firstebit
\newcommand\mebit
\newcommand\secretbit
\newcommand\firstsbit
\newcommand\privatebit
\newcommand\firstpbit
\newcommand\hbit{data hiding \bit}
\newcommand\rent
\newcommand\rkey
\newcommand\skey
\newcommand\textrate[1]{rate at which {#1} can be distilled}
\newcommand\repeatersystems{%
    \providecommand{\charlieA}{C}
    \providecommand{\charlieB}{C^{\smash\prime}} 
    \providecommand{\charlie} {{\bm C}}
    \providecommand{\rhoA}{\rho} 
    \providecommand{\rhoB}{\rho^{\smash\prime}} 
    \providecommand{\ac}  {{A{:}\charlieA}}
    \providecommand{\cb}  {{\smash\charlieB{:}B}}
    \providecommand{\accb}{{A{:}\charlieA{:}\smash\charlieB{:}B}}
    \providecommand{\acb} {{A{:}\charlie{:}B}}
    \providecommand{\abc} {{AB{:}\charlie}}
    \providecommand{\cab} {{\charlie{:}AB}}
    \providecommand{\abtc}{{AB\leftrightarrow\charlie}}
    \providecommand{\ctab}{{\charlie\rightarrow AB}}
    \providecommand{\abfc}{{AB\leftarrow\charlie}}
    \providecommand{\sabc}{{AB\mathrlap{{:}\charlie}}}
    \providecommand{\repcutA}{{A{:}\charlie B}}
    \providecommand{\repcutB}{{A\charlie{:}B}}
    \providecommand{\trc}{\tr_{\scriptscriptstyle\charlie}}
    \renewcommand{\achievedrate}{R}
    \renewcommand{\distillmaps}{{\Lambda_\acb}}
    \renewcommand{\distillmap} {{\trc\Lambda}}
}
\newcommand*{\naturals} {{\mathbb{N}}}
\newcommand*{\complex}  {{\mathbb{C}}}
\newcommand*{\integers} {{\mathbb{Z}}}
\newcommand*{\one}{\mathbbm{1}}
\newcommand*{\eps}{\varepsilon}
\newcommand*{\id} {{\mathrm{id}}}
\providecommand{\tensor}[1]{^{\otimes #1}}
\renewcommand{\tensor}[1]{^{\otimes #1}}
\newcommand*{\otime}{{\otimes}}
\renewcommand{\lim}{\qopname \relax m{lim\vphantom{p}}}
\renewcommand{\inf}{\qopname \relax m{inf\vphantom{p}}}
\newcommand{\coloneqq}{\mkern-4mu\vcentcolon\mathrel{\mkern-1.2mu}=}
\newcommand*{\proj}[1]{\ketbra{#1}{#1}}
\newcommand*{\leftguil} {{\textnormal\guilsinglleft}}
\newcommand*{\rightguil}{{\textnormal\guilsinglright}}
\newcommand{\onept}{\hspace{-1pt}}
\newcommand{\gstrict}{\leftguil\onept\gamma\hspace{-.5pt}\rightguil}
\newcommand{\strict}[2]{#1^{\onept\leftguil\onept#2\onept\rightguil}}
\newcommand\swapkey{{ \textstyle ^\leftguil\hspace{-3pt}R\mathrlap{\hspace{-.9pt}^\rightguil}^{\,\rightarrow}_D }}
\newcommand{\twisting}{T} 
\newcommand{\kindex}  {j}
\newcommand{\phaseflip}{\mathcal{Z}}
\newcommand{\reversiblemap}{\mathcal{E}}
\newcommand{\distillmap} {\Lambda}
\newcommand{\distillmaps}{{\distillmap_{A{:}B}}}
\newcommand{\achievedrate}{K}
\newcommand{\closeto}[2]{#1 \approx_\eps #2}
\newcommand*{\distillrate}[2]{
    {\lim _{\eps\to0} \lim_{n\to\infty} 
        \sup _{\distillmaps} 
        \left\{
        \achievedrate: 
        \closeto{\distillmap(#1^{\otimes n})}{#2^{n\achievedrate}}
        \right\}
    }
}
\renewcommand\th@plain     {\thm@notefont{}\itshape   }
\renewcommand\th@definition{\thm@notefont{}\normalfont}
\theoremstyle{plain}
\newcommand{\plainname}{Theorem}
\newtheorem {plain} {\plainname}
\newtheorem*{plain*}{\plainname}
\newtheorem {theorem}[plain]{Theorem}
\newtheorem*{theorem*}       {Theorem}
\newtheorem {lemma}[plain]{Lemma}
\newtheorem*{lemma*}      {Lemma}
\newtheorem {corollary}[plain]{Corollary}
\newtheorem*{corollary*}      {Corollary}
\newtheorem {definition}[plain]{Definition}
\newtheorem*{definition*}      {Definition}
\newtheorem*{conjecture*}      {Conjecture}
\newtheorem*{construction*}      {Construction}
\theoremstyle{definition}
\newtheorem*{exampleplain*}      {Example}
\newenvironment{example}
{\pushQED{\qed}\renewcommand{\qedsymbol}{$\blacktriangle$}
    \exampleplain}
{\popQED\endexampleplain}    
\newenvironment{example*}
{\pushQED{\qed}\renewcommand{\qedsymbol}{$\blacktriangle$}
    \begin{exampleplain*}}
    {\popQED\end{exampleplain*}}    
\theoremstyle{remark}
\newcommand{\remarkname}{Remark}
\begin{document}


\title{Bell private states}
\title{{\qkr}s: Swapping Perfect Key}
\title{Private States, Quantum Data Hiding and the Swapping of Perfect Secrecy}

\author{Matthias Christandl}
\email{christandl@math.ku.dk}

\author{Roberto Ferrara}
\email{roberto@math.ku.dk}

\affiliation{QMATH, Department of Mathematical Sciences, University of Copenhagen, Universitetsparken 5, 2100 Copenhagen \O, Denmark}

\date{\today}

\begin{abstract}
We derive a formal connection between quantum data hiding and quantum privacy,
confirming the intuition behind the construction of bound entangled states
from which secret bits can be extracted.
We present three main results.
First, we show how to simplify the class of {\privatebit}s
via reversible local operation and one-way communication.
Second, we obtain a bound on the one-way distillable entanglement
of {\privatebit}s in terms of restricted relative entropy measures,
which is tight in many cases and
shows that protocols for one-way distillation of key
from states with low distillable entanglement
lead to the distillation of data hiding states.
Third, we consider the problem of extending the distance of quantum key distribution
with help of intermediate stations.
In analogy to the quantum repeater,
this paradigm has been called the quantum key repeater.
We show that when extending {\privatebit}s with one-way communication,
the resulting rate is bounded by the one-way distillable entanglement.
In order to swap perfect secrecy it is thus essentially optimal to use entanglement swapping.
\end{abstract}

\begin{abstract}
    An important contribution to the understanding of quantum key distribution
    has been the discovery of entangled states from which secret bits,
    but no maximally entangled states,
    can be extracted [Horodecki \emph{et al.}, Phys. Rev. Lett. 94, 200501 (2005)].
    The construction of those states was based on an intuition
    that the quantum mechanical phenomena of data hiding and privacy might be related.
    In this Letter we firmly connect these two phenomena and highlight three aspects of this result.
    First, we simplify the definition of the secret key rate.
    Second, we give a formula for the one-way distillable entanglement
    of certain {\privatebit}s.
    Third, we consider the problem of extending
    the distance of quantum key distribution
    with help of intermediate stations,
    a setting called the quantum key repeater.
    We show that for protocols that first distill {\privatebit}s,
    it is essentially optimal to use the standard quantum repeater protocol
    based on entanglement distillation and entanglement swapping.
\end{abstract}

\maketitle


\section{Introduction}

\begin{module}
Entanglement distillation~\cite{dist_e} is
the process of producing high-fidelity {\firstebit}
from copies of a noisy entangled state $\rho$,
using only Local Operations and Classical Communication (LOCC),
between two parties Alice  and Bob.
The {\eprbit}s can then be used for teleportation, Bell inequality violation, etc.
The \textrate{they} from $\rho$
is called the \emph{distillable entanglement}, $E_D(\rho)$.
Because {\eprbit}s are pure,
they are in product with the environment
 and, therefore,
 measuring them leads to perfectly correlated and perfectly secure pairs of bits, the \emph{{\secretbit}s}. 
It turns out that there exist mixed states, the {\firstpbit},
that also lead to perfectly secure bits just by measurement~\cite{secure}.
While the \emph{distillable key}, $K_D(\rho)$,
is defined as the \textrate{{\secretbit}s}
by local operations and \emph{public} communication,
it was shown that it also equals the \textrate{{\privatebit}s} by LOCC\@.
Proving this equivalence allowed the authors to show
that distillable entanglement and distillable key can be very different~\cite{secure}.
There even exists a low-dimensional experimental realization
of this separation with photonic states~\cite{realpbit}.

In light of this, it is natural to ask
how much the separation extends to general network scenarios,
and in particular whether it persists
if we insert a repeater station between the two parties.
In~\cite{limits} the first examples have been produced of states that,
while having high distillable key,
do not allow for distillation of
significant amounts of the key across 
the repeater station.
This may be an indication that the separation
between the distillable key and distillable entanglement
does not survive in all general network scenarios.

Here we provide a new perspective on key distillation,
and thus quantum key distribution,
by relating {\privatebit}s
to quantum data hiding~\cite{datahiding1,datahiding2}.
This provides a tool for the study
of long-distance quantum key distribution
involving intermediate repeater stations,
where for the first time we are able to show
a close connection with entanglement distillation.
In this framework~\cite{limits},
noisy entanglement is distributed between the end points
and the repeater station
and arbitrary noiseless LOCC protocols are allowed.
If this setting is used to distill {\eprbit}s at the end points then
this is an idealized version of the well-known  quantum repeater
and if it is used  to distill {\privatebit}s it is called a quantum key repeater.
We provide an upper bound on the quantum key repeater rate
with one-way classical communication;
as such, the bound holds also for noisy protocols
that can only lower the rate and thus, if anything, leave room for improvement.
Our results go beyond the use of the partial transpose
and thus apply to states that are not positive under partial transposition (NPT states) as well as states that are invariant under partial transposition (PPT invariant states),
which are out of reach for~\cite{limits}.
\\\indent
The Letter is organized as follows. 
First, we simplify the class of {\privatebit}s,
introducing what we call {\bell} {\privatebit}s.
We show that these states are, for all entanglement-related purposes,
equivalent to {\privatebit}s.
Second, the simplified structure of {\bell} {\privatebit}s
allows us to confirm the intuition that
the separation between the distillable key and distillable entanglement
is due to quantum data hiding.
More precisely, we show that the states with a separation are those
made of a {\eprbit} subject to phase flip error,
where the error information is conserved in data-hiding states.
Such hidden information of the error preserves the key,
but prevents Alice and Bob from correcting the {\eprbit} and distill entanglement.
Third,
as an application to the quantum key repeater with one-way classical communication from the repeater station, we
show that a large class of states and protocols
cannot be used to distill the key across a repeater station
better than by performing entanglement distillation and swapping.

\end{module}

\begin{module}
\newcommand\zkindex{{\kindex}}
\newcommand\zzindex{{0}}


\section{{\privatebit}s}

\begin{module}
Consider two parties Alice and Bob sharing a {\eprbit} $\Phi$
of two qubit systems $\key$, the \emph{key} systems.
Measuring $\Phi$ in the computational basis 
will produce a {\secretbit} with respect to any adversary;
the  postmeasurement state of such a measurement is called a \emph{\keyattacked},
this will play an important role in our results
and will be denoted by a hat ($\hat{\;\;}$): 
\begin{align}
    \label{eq:bell-bits}
    \Phi &\coloneqq {\frac{1}{2}}
          (\ket{00} + \ket{11})
          (\bra{00} + \bra{11})
    \\\nonumber
    \hat\Phi &= {\frac{1}{2}} \left(\proj {00} + \proj{11}\right)\;.
\end{align}
The support of $\hat\Phi$ is known as the \emph{maximally correlated subspace}.
%
Now let  Alice and Bob share additional systems $\shield$,
the \emph{shield} systems.
A \emph{\privatebit} $\gamma$ is a state on $\key\shield$
that generalizes the {\eprbit},
in the sense that measuring $\key$ produces a {\secretbit}
with respect to any adversary. 
$\gamma$ is a {\privatebit} if and only if
it has the form~\cite{secure}:
\begin{align}
    \label{eq:pbit}
    \gamma     &\coloneqq
        \twisting  \left(
        \Phi 
        \otimes \sigma
        \right) \twisting^\dagger
    &
    \hat\gamma &=
        \twisting \left(
        \smash{\hat\Phi}
        \otimes \sigma
        \right) \twisting^\dagger \;.
\end{align}
for some state $\sigma$ on $\shield$ and 
controlled unitary $\twisting$ 
called \emph{twisting};
with no shield systems the only {\privatebit}s are {\eprbit}s.
However, the first example of a {\privatebit}s with low distillable entanglement
was constructed as follows~\cite{secure}:
\begin{align}
    \label{eq:flag-form}
        \gamma &=    p_0  \cdot \phaseflip_\keyB^0(\Phi) \otimes \sigma _0
                  +  p_1 \cdot \phaseflip_\keyB^1(\Phi) \otimes \sigma _1
\end{align}
where $\sigma _\kindex$ are the extremal Werner states~\cite{wernerstates}
and $\phaseflip^\kindex(\varrho)\coloneqq{Z}^\kindex\varrho{Z}^{-\kindex}$ is the $\kindex$th phase flip map,
namely the map that conjugates by the $j$th power of the Pauli $Z$.
The intuition behind the example is the following:
orthogonal data-hiding states $\sigma_\kindex$~\cite{datahiding1,datahiding2}, 
like the Werner states~\cite{wernerhiding},
should hinder the ability to correct the phase flip 
locally and, thus, they should suppress the distillable entanglement,
nevertheless, because the states are orthogonal,
the {\secretbit} is still protected from the environment.
\end{module} %

\begin{module}%
%
%
\providecommand\zkindex{{0\kindex}}%
\providecommand\zzindex{{00}}%
\expandafter\MakeUppercase\privatebit s like the ones in \Cref{eq:flag-form}
are only a special case
(see~\cite{lowdimensional} for different examples);
we call them \emph{{\bell} {\privatebit}s}.
We now show how to convert all {\privatebit}s
into {\bell} {\privatebit}s reversibly
using only LOCC.
We need two generalizations.

We generalize the {\eprbit} to any key systems
of equal finite dimension $|\keyA| = |\keyB|$.
We define the Bell states
$\phi_\zkindex=\proj{\phi_\zkindex}=\mathcal{Z}_{\keyB}^\kindex(\Phi)$
for $j=0,\dots, |\keyB|{-}1$.
Notice that $\{\ket{\phi_\zkindex}\}$
form a basis for the maximally correlated subspace,
which brings us to the next generalization.
We consider any state 
supported only on the maximally correlated subspace of $\key$,
we call such states \emph{key correlated}.
They have no bit-flip error and we can write them as:
\newcommand{\rhoblock}{P_}
\begin{align}
    \rho &= \sum _{\mu\nu} 
        \ketbra{\phi_\mu}{\phi_\nu} 
        \otimes \rhoblock{\mu\nu} 
    \;,
    \label{eq:rho-corr}
\end{align}
where  $\rhoblock{\mu\nu}$ are matrices on $\shield$.
\pagebreak[3]

%
%
\newcommand{\targetA}{\keyA} 
\newcommand{\targetB}{\keyB} 
\newcommand{\target}{{\targetA\targetB}}%
\renewcommand{\bnot}{V}%
To define the reversible LOCC map,
consider two copies of systems $\key$.
Let $\bnot=\cnot_{\keyA\targetA}\!\otime\cnot_{\keyB\targetB}$
be the local unitary illustrated in \Cref{bnot-small},
namely the generalization
of the qubit BNOT~\cite{dist_e}.
It holds that:
    \begin{equation}
    \label{eq:bnot-action}
    \bnot (
    \ket{\phi _\zkindex} _{\target} \!
    \otimes
    \ket{\phi_\mu} _{\key}
    ) =
    \ket{\phi _\zkindex} _{\target} \!
    \otimes Z_\keyB^{-\kindex}
    \ket{\phi _{\mu}} _{\key}.
    \end{equation}

\begin{figure}
\renewcommand{\zkindex}{{\mathrlap\kindex\phantom\mu}}
\renewcommand{\zzindex}{\mu}
\renewcommand{\bnot}{BNOT}
\newcommand{\circuitC}{2em}
\newcommand{\controlone}{\keyA}
\newcommand{\controltwo}{\keyB}
\newcommand{\targetone}{\targetA}
\newcommand{\targettwo}{\targetB}

\begin{module}
\providecommand{\tinymath}[1]{\scalebox{.5}{$#1$}}
\providecommand{\pauli}[2]{X^{#1}Z^{#2}}
\providecommand{\circuitC}{1.4em}
\providecommand{\circuitR}{.5em}
\providecommand{\circuitRfix}{.7em}
\providecommand{\zzindex}{{00}}
\providecommand{\zkindex}{{0k}}

\[
\phantom{\ket{\phi_\zzindex}\;}
\begin{aligned}
\Qcircuit  @C=\circuitC @R=\circuitR @!R=\circuitRfix {
& \ustick{\scriptstyle \controlone} & \ctrl{2} & \qw
\\\lstick{Alice}\\
& \ustick{\scriptstyle \targetone } & \targ    & \qw
\\
\lstick{\ket{\phi _\zzindex}} \ar @{-} [-3,1] \ar @{-} [3,1]
\\
\lstick{\otimes\;\;\,} &&\push{\scriptscriptstyle \bnot}
\\
\lstick{\ket{\phi _\zkindex}} \ar @{-} [-3,1] \ar @{-} [3,1]
\\
& \dstick{\scriptstyle \controltwo} & \ctrl{2} & \qw
\\\lstick{Bob\;}\\
& \dstick{\scriptstyle \targettwo } & \targ    & \qw
\gategroup{1}{3}{9}{3}{2.5em}{--}
}
\end{aligned}
\;=\;
\phantom{\ket{\phi _\zzindex}\;\;}
\begin{aligned}
\Qcircuit  @C=\circuitC @R=\circuitR @!R=\circuitRfix {
& \ustick{\scriptstyle \controlone} & \phantomcontrol\qw & \qw
\\\\
& \ustick{\scriptstyle \targetone } & \qw & \qw
\\
\lstick{\ket{\phi _\zzindex}} \ar @{-} [-3,1] \ar @{-} [3,1]
\\
\lstick{\otimes\;\;\,}
\\
\lstick{\ket{\phi _\zkindex}} \ar @{-} [-3,1] \ar @{-} [3,1]
\\
& \dstick{\scriptstyle \controltwo} & \gate{Z^{-\kindex}} & \qw
\\\\
& \dstick{\scriptstyle \targettwo } & \phantomtarg\qw & \qw
\gategroup{1}{3}{9}{3}{2.5em}{}
}
\end{aligned}
\]

\end{module}
\caption{Quantum circuit for the Bilateral CNOT %
acting on Bell states, 
the core of the map $\reversiblemap$
of \Cref{lemma:reversible-rho}.}
\label{bnot-small}
\end{figure}

%
%
\begin{lemma}%
\label{lemma:reversible-rho}
Define $\reversiblemap: \key \to \target\key$ as
\begin{equation*}
\mathcal{E}(\varrho _{\key}) \coloneqq
\bnot^\dagger \left(
\smash{\hat\Phi} _{\target} 
\otimes \varrho _{\key}
\right) \bnot.
\end{equation*}
Then for any key correlated state $\rho$ (on $\key\shield$):
\begin{align}
    \label{eq:reversible-rho}
    \reversiblemap (\rho) \equiv
    (\reversiblemap _\key {\otimes} \id _\shield) (\rho)
        &=  \frac{1}{|\keyB|}\sum _\kindex 
            \phi _{\kindex} 
            {\otimes} \phaseflip^\kindex_\keyB (\rho)
.
\end{align}
\end{lemma}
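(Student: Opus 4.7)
The plan is to compute $\reversiblemap(\rho)$ directly, by first rewriting $\hat\Phi$ in the Bell basis on $\target$, then substituting the key-correlated form of $\rho$ from \Cref{eq:rho-corr}, and finally applying \Cref{eq:bnot-action} term by term.

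The only nontrivial ingredient is the identity
\[
\hat\Phi = \frac{1}{|\keyB|} \sum_{k=0}^{|\keyB|-1} \proj{\phi_k},
\]
which holds because $\hat\Phi$ is proportional to the projector onto the maximally correlated subspace, while $\{\ket{\phi_k}\}_k$ is also an orthonormal basis of that same subspace. This is the main (mild) obstacle; everything else is bookkeeping.

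From \Cref{eq:bnot-action} and unitarity of $\bnot$ one gets the adjoint action
\[
\bnot^\dagger \bigl(\ket{\phi_k}_\target \otimes \ket{\phi_\mu}_\key\bigr) = \ket{\phi_k}_\target \otimes Z^{k}_\keyB \ket{\phi_\mu}_\key,
\]
with the mirror identity for bras from the right obtained by taking the Hermitian adjoint. Substituting the Bell expansion of $\hat\Phi$ above and the key-correlated form $\rho = \sum_{\mu\nu}\ketbra{\phi_\mu}{\phi_\nu}\otimes P_{\mu\nu}$ into $\bnot^\dagger(\hat\Phi_\target \otimes \rho_\key)\bnot$, and noting that $\bnot$ acts as the identity on $\shield$, each term in the resulting triple sum reduces to
\[
\frac{1}{|\keyB|}\, \proj{\phi_k}_\target \otimes Z^{k}_\keyB \ketbra{\phi_\mu}{\phi_\nu} Z^{-k}_\keyB \otimes P_{\mu\nu}.
\]

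To finish, I would pull the conjugation by $Z^{k}_\keyB$ out of the sum over $\mu,\nu$, recognizing the $\key\shield$ factor as $\phaseflip^{k}_\keyB(\rho)$, and collect the Bell-state projectors on $\target$ to obtain the claimed formula. The whole argument is a direct linear-algebraic calculation with no analytic or measure-theoretic subtlety.
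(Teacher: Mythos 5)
Your proposal is correct and follows essentially the same route as the paper's proof: expand $\hat\Phi=\frac{1}{|\keyB|}\sum_\kindex\phi_\kindex$, write $\rho$ in the key-correlated form of \Cref{eq:rho-corr}, and apply the BNOT action of \Cref{eq:bnot-action} term by term to recognize $\phaseflip^\kindex_\keyB(\rho)$. Your only extra step, deriving the adjoint action $\bnot^\dagger(\ket{\phi_\kindex}\otimes\ket{\phi_\mu})=\ket{\phi_\kindex}\otimes Z^\kindex_\keyB\ket{\phi_\mu}$ from unitarity, is correct and gives the same sign as the paper's computation.
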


\noindent
Because ${\hat\Phi}$ is separable and $\bnot$ is local,
$\reversiblemap$ is one-way LOCC 
(classical communication only from Alice to Bob or vice versa).
$\reversiblemap$ is reversible by inverting $\bnot$ and tracing out the target,
which requires only local operations.
Notice that the output key systems are still $\key$
but the output shield systems are now $\keyA\shieldA\keyB\shieldB $.

\begin{proof}
Using~\eqref{eq:rho-corr},~\eqref{eq:bnot-action} and
$\hat\Phi = \frac{1}{|\keyB|}\sum_\kindex  \phi_\zkindex$
we find
\allowdisplaybreaks%
\begin{align*}
(\mathcal{E}\otimes\id)(\rho)
  &= \sum_{\mu\nu}
     \bnot^\dagger (\hat\Phi \otimes \ketbra{\phi_\mu}{\phi_\nu}) \bnot
     \otimes \rhoblock{\mu\nu}
\\&= \frac{1}{|\keyB|} \sum_{\mu\nu\kindex}
     \bnot^\dagger({\phi_\zkindex} \otimes \ketbra{\phi_\mu}{\phi_\nu})\bnot
     \otimes \rhoblock{\mu\nu}
\\&= \frac{1}{|\keyB|} \sum_{\mu\nu\kindex}
     \phi_\zkindex
     \otimes \phaseflip^\kindex_\keyB(\ketbra{\phi_{\mu}}{\phi_{\nu}})
     \otimes \rhoblock{\mu\nu}
\;.\qedhere
\end{align*}
\end{proof}

%
%
{\bell} {\privatebit}s now come as a special case.
A \emph{{\bell} {\privatebit}} is any {\privatebit} of the form:
\begin{align*}
    \gamma_\bell &= \sum \nolimits_\kindex p_\kindex \cdot \phi _\zkindex \otimes \sigma _\kindex
\end{align*}
where $\sigma _\kindex$ are
arbitrary orthogonal states of  $\shield$
and $p_\kindex$ are arbitrary probabilities.
Notice that $\phaseflip^\kindex_\keyB(\rho)$ are orthogonal, and thus
$\reversiblemap(\rho)$ is a {\bell} {\privatebit},
only when $\rho$ is a {\privatebit}.
Because  $\reversiblemap$ is reversible,
any property of entanglement monotones
(entanglement measures like the distillable entanglement and distillable key)
for {\bell} {\privatebit}s also holds for {\privatebit}s and vice versa.
For example, we can always convert the output of a key distillation protocol
into an approximate {\bell} {\privatebit},
thus simplifying the distillable key
to the \textrate{{\bell} {\privatebit}s}.

\end{module}


\section{Entanglement~Distillation and Quantum~Data~Hiding\label{sec:datahiding}}

\begin{module}%
We now show that
{\bell} {\privatebit}s with low distillable entanglement
are states that hide the phase of the {\eprbit}s from local detection.
Specifically, we give a lower bound
on the one-way distillable entanglement $E_D^\to(\rho)$,
where the communication is one way from Alice to Bob.
This lower bound is the rate achieved
by the best protocol
that starts with a measurement on Alice's shield.

%
%
\newcommand{\rhomeasured}{\tilde\rho}
First for simplicity, let Alice and Bob share a key correlated state
of the form 
\begin{equation}
\label{def:keycorrelated-diagonal}
\rho =  \frac{1}{|\keyB|} \sum_\kindex  
        \phi_\zkindex \otimes \sigma_\kindex
\end{equation}
We now let Alice perform a measurement on her shield and
send the outcome to Bob. 
Then Alice and Bob use the hashing
protocol~\cite{dist_e,dw} and we find:
\begin{align}
\label{eq:datahiding-belldiagonal}
E_D^\to(\rho)
    &\geq \sup _{\measurement\in LO_A}
    \frac{1}{|\keyB|} \sum \nolimits_\kindex
    D(\measurement(\sigma _\kindex) \parallel \measurement(\sigma))
\end{align}
where $\sigma=\frac{1}{|\keyB|}\sum_\kindex\sigma_\kindex$,
$D(\varrho\|\varsigma)=\tr[\varrho\log\varrho-\rho\log\varsigma]$
is the relative entropy, 
and $\measurement$ is a local measurement at Alice
($\measurement\equiv\measurement_{A}\otimes\id_{B}$).
See also the Appendix \Cref{lemma:datahiding-belldiagonal}
for the details.
The relative entropy quantifies the distinguishability between states;
the relative entropy of the measurement outcomes~\cite{piani} 
quantifies how much of this distinguishability is left
when Alice and Bob can only act locally. 
In the particular case of {\privatebit}s,
the $\sigma_\kindex$ states of \Cref{def:keycorrelated-diagonal} are orthogonal;
thus, they are perfectly distinguishable 
and {\emph\kindex} can be recovered with a global measurement.
However, \Cref{eq:datahiding-belldiagonal} implies
that if the distillable entanglement is low,
then the local distinguishability of $\sigma_\kindex$
is low and {\emph\kindex} cannot be determined accurately locally:
the $\sigma_\kindex$ are data hiding~\cite{datahiding1,datahiding2}.

For general key correlated states $\rho$
we can use \Cref{eq:datahiding-belldiagonal}
after using \Cref{lemma:reversible-rho}, this gives 
\begin{align*}
E_D^\to(\rho) 
&\geq \sup  _{\measurement \in LO_A} \frac{1}{|\keyB|} \sum _\kindex
D(\measurement(\phaseflip^\kindex_\keyB (\rho))
\parallel \measurement(\hat\rho))\;.
\label{eq:datahiding-correlated}
\end{align*}
Namely we see 
that because of the reversible map,
we can think of the {\privatebit} itself as a data-hiding state,
where \emph{\kindex} is encoded using the local phase flip.


We can exploit the measurement being local to simplify our bounds.
More precisely, we find that for all local measurements at Alice
    \begin{align*}
D(\measurement (\phaseflip^\kindex_\keyB (\rho))
\|\measurement (\hat\rho))
&=  D(\measurement (\rho) \|\measurement (\hat\rho)) \;.
\end{align*}
Namely, the optimal measurement is independent of the phase flip,
which allows us to remove the phase flip in the formula.
This is an important feature because it suddenly allows us
to regularize~\cite{fekete,regularization} our lower bound.
If $\hat\rho$ is separable, then we can combine the regularized lower bound 
with a known upper bound from~\cite{Emeasures},
and obtain equality with the distillable entanglement
as stated in this theorem 
(see the Appendix \Cref{lemma:datahiding-oneway-full} 
for the details).

\begin{theorem}%
\label{lemma:datahiding-oneway}
    For any key correlated state $\rho$, it holds
    \begin{align*}
        E_D^\to(\rho) 
            &\geq
            D_A(\rho\parallel\hat\rho)
            \coloneqq
            \sup  _{\measurement \in LO_A}
            D(\measurement(\rho)\parallel \measurement(\hat\rho))
        \\
        E_D^\to(\rho) 
            &\geq D_A^{\infty}(\rho\parallel\hat\rho)
            \coloneqq
            \lim _{n\to\infty} \frac{1}{n}
            D_A(\rho^{\otimes n}\parallel\hat\rho^{\otimes n})
    \end{align*}
    If $\hat\rho$ is also separable then:
    \begin{equation*}
    E_D^\to(\rho) =    D_A^\infty(\rho\parallel \hat\rho) \;.
    \end{equation*}
\end{theorem}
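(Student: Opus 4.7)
The plan is to build on the inequality
$E_D^\to(\rho)\geq \sup_{\measurement\in LO_A}\frac{1}{|\keyB|}\sum_\kindex D(\measurement(\phaseflip^\kindex_\keyB(\rho))\parallel\measurement(\hat\rho))$
stated in the preceding discussion, and to exploit a symmetry to collapse the average over $\kindex$. Because $\hat\rho$ is diagonal in the computational basis of the key systems, every phase flip on $\keyB$ leaves it invariant, $\phaseflip^\kindex_\keyB(\hat\rho)=\hat\rho$. Since $\measurement$ acts only on Alice while $\phaseflip^\kindex_\keyB$ acts only on Bob, the two commute, giving $\measurement(\phaseflip^\kindex_\keyB(\rho))=\phaseflip^\kindex_\keyB(\measurement(\rho))$ and $\measurement(\hat\rho)=\phaseflip^\kindex_\keyB(\measurement(\hat\rho))$. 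Unitary invariance of the relative entropy then yields $D(\measurement(\phaseflip^\kindex_\keyB(\rho))\parallel\measurement(\hat\rho))=D(\measurement(\rho)\parallel\measurement(\hat\rho))$ for every $\kindex$, so the sum collapses and optimizing over $\measurement\in LO_A$ gives the first inequality.

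For the regularized lower bound I would apply the first inequality to $\rho^{\otimes n}$, which is still key correlated and satisfies $\widehat{\rho^{\otimes n}}=\hat\rho^{\otimes n}$; combined with the identity $E_D^\to(\rho^{\otimes n})=n E_D^\to(\rho)$ this yields $E_D^\to(\rho)\geq \frac{1}{n}D_A(\rho^{\otimes n}\parallel\hat\rho^{\otimes n})$. The sequence $D_A(\rho^{\otimes n}\parallel\hat\rho^{\otimes n})$ is superadditive in $n$, because product measurements are admissible in $LO_A$ on $n$ copies and the relative entropy is additive on tensor products. Fekete's lemma then guarantees that the limit $D_A^\infty(\rho\parallel\hat\rho)$ exists and equals the supremum in $n$, so passing to the limit on the right-hand side gives the second inequality.

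For the equality when $\hat\rho$ is separable I would invoke a known upper bound of the form $E_D^\to(\rho)\leq D_A^\infty(\rho\parallel\sigma)$ valid for any separable $\sigma$, available from~\cite{Emeasures}; the choice $\sigma=\hat\rho$ then matches the regularized lower bound exactly. The conceptual heart of the argument, and the step I expect to be the trickiest to recognize, is the first one: seeing that the phase-flip average degenerates thanks to the invariance of $\hat\rho$ under $\phaseflip^\kindex_\keyB$ together with the locality of $\measurement$. Once that identity is in place, the regularization via Fekete and the appeal to the separable upper bound are essentially routine.
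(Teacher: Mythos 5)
Your proposal is correct and follows essentially the same route as the paper's proof: the phase-flip invariance of $\hat\rho$ combined with the locality of $\measurement$ and unitary invariance of the relative entropy to collapse the average over $\kindex$, superadditivity of $D_A$ plus Fekete's lemma for the regularized bound, and the upper bound $E_D^\to(\rho)\leq D_A^\infty(\rho\parallel\sigma)$ for separable $\sigma$ from~\cite{Emeasures} applied with $\sigma=\hat\rho$ to obtain equality. No gaps to report.
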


\end{module}

\end{module}

\begin{figure}
    \repeatersystems%
    \newcommand{\circuitC}{.81em}
    \newcommand{\target}{\gamma^{n\achievedrate}}
    \renewcommand{\trc}{\tr}
    
    \begin{module}
        \providecommand{\puremultigate}[2]{*+<1em,.9em>{\hphantom{#2}} \POS[0,0]="i",[0,0].[#1,0]="e",!C *{#2},"e"+UR;"e"+UL **\dir{-};"e"+DL **\dir{-};"e"+DR **\dir{-};"e"+UR **\dir{-},"i"} 
        \newcommand{\dw}{\ar@{.} [0,-1]&\ar@{.} [0,-1]} 
        \providecommand{\circuitC}{.7em}
        \providecommand{\circuitR}{.3em}
        \providecommand{\circuitRfix}{.7em}
        \providecommand{\target}{\Phi^{ne}}

        \[
        \phantom{{\rhoB}^{\otimes n}}\hspace{.5em}
        \begin{aligned}
        \Qcircuit  @C=\circuitC @R=\circuitR @!R=\circuitRfix {%
            &\ustick{\scriptstyle {Alice}\;\;}
            &\qw& \multigate{+1}\empty &&&&\qw[-4]&\dw& \multigate{+1}\empty &\qw&\qw
            \\
            &&&\pureghost\empty&\controlo\cw&&&&&\controlo&\pureghost\empty\cw
            \\\lstick{{\rhoA}^{\otimes n}}\ar @{-} [-2,1] \ar @{-} [2,1]&&&&&&&&&\\
            &&&&&\controlo\ar @{=}[-2,-1]&\puremultigate{4}\empty\cw&\controlo\cw&
            \\
            &&&&&\qw[-4]&\ghost\empty&\qw&\dw&\qw&\multimeasureD{2}{\scriptstyle \trc}
            \\&{\scriptstyle Charlie\quad} \\ 
            &&&&&\qw[-4]&\ghost\empty&\qw&\dw&\qw&\ghost{\scriptstyle \trc}
            \\
            &&&&&\controlo\ar @{=}[2,-1]&\pureghost        \empty\cw&\controlo\cw&
            \\\lstick{{\rhoB}^{\otimes n}} \ar @{-} [-2,1] \ar @{-} [2,1]&&&&&&&&&\\
            &&&\pureghost\empty&\controlo\cw&&&&&\controlo&\pureghost\empty\cw
            \\
            &\dstick{\scriptstyle Bob \;}
            &\qw& \multigate{-1}\empty &&&&\qw[-4]&\dw& \multigate{-1}\empty &\qw&\qw
            \gategroup{1}{4}{11}{11}{.5em}{--}
            \save"2,9"."3,8"  \ar@{:}"4,8"  \restore 
            \save"4,9"."3,10" \ar@{:}"2,10" \restore 
            \save"10,9"."9,8" \ar@{:}"8,8"  \restore 
            \save"8,9"."9,10" \ar@{:}"10,10"\restore 
        }
        \end{aligned}
        \;\;\approx _\eps \;
        \target\hspace{.25em}
        \begin{aligned}
        \Qcircuit  @C=\circuitC @R=\circuitR @!R=\circuitRfix {
            &\ustick{\scriptstyle Alice}&\qw
            \\\\\\\\\\
            \ar @{-} [-5,1] \ar @{-} [5,1]
            \\\\\\\\\\ \push{\rule{0pt}{1.08em}}
            &\dstick{\scriptstyle Bob}  &\qw
        }
        \end{aligned}
        \]
        
    \end{module}
    
    \caption{
	\label{figure:repeater}
        Quantum circuit for the key repeater protocols
        in a single node repeater.
        The dashed box is a tripartite LOCC protocol.
        The double lines are the classical communication.}%
\end{figure}


\section{Quantum Key Repeaters}
%
%
\begin{module}
\repeatersystems%
\renewcommand{\charlie}{{\charlieA\charlieB}}%
We now apply our findings to long-distance quantum communication,
where noise prevents Alice and Bob from sharing entanglement and thus secrecy,
and where an intermediate repeater station, Charlie,
is necessary to mediate the entanglement.

More precisely, let Alice and Charlie ($A$ and $C$) share $\rhoA$
and  Charlie and Bob ($\charlieB$ and $B$) share $\rhoB$.
While the goal of a quantum repeater
 is to distill {\eprbit}s between Alice and Bob~\cite{repeater},
the goal of a quantum key repeater is to distill {\secretbit}s 
or, equivalently, {\privatebit}s~\cite{limits}, see \Cref{figure:repeater}.
The best rate for this task
is called the quantum \emph{\rkey}, $R_D(\rhoA,\rhoB)$
.
Realistic repeaters have multiple stations;
however, we reduce to a single station by grouping them into one, 
which can only increase the rate.
The reduction to a single station thus provides upper bounds
without loss of generality.

$\rhoA$ and $\rhoB$
are usually generated by sharing {\eprbit}s 
through noisy channels (Choi-Jamio{\l}kowski states). 
While clever channel codes may
reach higher rates~\cite{concatenated,repeater-generations},
note that the free classical side information 
allows us in most realistic channels 
to implement the codes via teleportation 
from the Choi-Jamio{\l}kowski state
(e.g.\ depolarizing channel)~\cite{dist_e}. 
Thus our upper bounds also apply to such codes and channels 
(see also~\cite{alex-capacities}).

The optimal noise-free protocol for the quantum repeater
performs entanglement distillation
between Alice and Charlie, and between Charlie and Bob,
followed by entanglement swapping.
This results in the rate
$\min\{E_D(\rhoA),E_D(\rhoB)\}$,
but in the quantum key repeater setting
the situation is less clear.
In alternative to the mentioned protocol,
Alice and Charlie can distill {\privatebit}s,
and use the {\eprbit}s distilled by Charlie and Bob to teleport Charlie's part
of the {\privatebit}s.
If $E_D(\rhoB)$ is larger than
the {\privatebit}s size at Charlie's,
then the rate of this ``trivial'' protocol equals $K_D(\rhoA)$
and thus it will be positive even 
when $\rhoA$ has zero distillable entanglement~\cite{secure,lowdimensional}. 
In short, while for quantum repeaters
the active area of research studies the effect of noisy operations,
for quantum key repeaters there are open questions
even with perfect operations.

We will consider the \emph{one-way {\rkey}} variation,
also introduced in~\cite{limits}.
In this variation,
Alice and Bob's communication with the repeater station Charlie is only one way:
Charlie can send messages to Alice and Bob but not vice versa.
Alice and Bob \emph{can} still communicate normally with each other.
We denote this rate with $R_D^{\charlie\to\ab}$, or simply  $R_D^\rightarrow$.
In~\cite{limits} the question was posed 
whether there exist non-trivial protocols
beyond distillation and swapping,
but only negative examples were found.
Here we show that for a large class of states and protocols,
the one-way distillable entanglement  is an upper bound on the one-way {\rkey}, and thus distillation and swapping are essentially optimal
and far from the trivial upper bounds $K_D(\rhoA)$ and $K_D(\rhoB)$.

We need a general upper bound which follows from~\cite[Theorem 4]{limits}:
\begin{equation}
\label{lemma:repeatersep-oneway}
R_D^{\charlie\to\ab}(\rhoA_{A\charlieA},\rhoB_{\charlieB B})
\leq D_{\charlie}^\infty(\rhoA_{A\charlieA}\otimes\rhoB_{\charlieB B} 
\| \sigma)
\end{equation}
for any state $\sigma$ separable in the $A\charlie{:}B$ or $A{:}\charlie B$ cut.
So far, this bound could only be estimated 
via a relaxation that only works for states that are PPT (Positive under Partial Transposition).
Choosing $\sigma=\hat\rhoA\otimes\hat\rhoB$
and applying \Cref{lemma:datahiding-oneway}
to \Cref{lemma:repeatersep-oneway}
now shows the following corollary, independently of the partial transpose.
\begin{corollary}%
\label{lemma:rded-oneway}
For any key correlated states $\rhoA$ and $\rhoB$
with at least one separable {\keyattacked}, it holds:
\[ R_D^{\charlie\to\ab}(\rhoA_{A\charlieA},\rhoB_{\charlieB B})
\leq E_D^{\charlieA\to A} (\rhoA_{AC}\otimes\rhoB_{CA})\;.\]
\end{corollary}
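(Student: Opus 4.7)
The plan is to chain the general upper bound \Cref{lemma:repeatersep-oneway} on the one-way key repeater rate with the data-hiding lower bound of \Cref{lemma:datahiding-oneway} on the one-way distillable entanglement, aligning the two by choosing $\sigma=\hat\rhoA\otimes\hat\rhoB$ in the former. The whole argument is essentially one substitution once a few compatibility checks are made.

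The first step is to verify that $\hat\rhoA\otimes\hat\rhoB$ is admissible in \Cref{lemma:repeatersep-oneway}, i.e.\ separable across $\repcutA$ or $\repcutB$. Because $\hat\rhoA$ lives on $A\charlieA$ and $\hat\rhoB$ on $\charlieB B$, if $\hat\rhoA$ is separable across $A{:}\charlieA$ then $\hat\rhoA\otimes\hat\rhoB$ is separable across $\repcutA$, and symmetrically separability of $\hat\rhoB$ across $\charlieB{:}B$ yields separability across $\repcutB$. This is exactly the hypothesis ``at least one separable {\keyattacked}''. Substituting gives
\[
R_D^{\charlie\to\ab}(\rhoA,\rhoB)
\leq D_\charlie^\infty\bigl(\rhoA\otimes\rhoB\,\big\|\,\hat\rhoA\otimes\hat\rhoB\bigr).
\]

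The second step is to view $\rhoA\otimes\rhoB$ as a bipartite state across the cut $\charlie{:}\ab$ and apply \Cref{lemma:datahiding-oneway} with Charlie playing the role of Alice. Two small facts are used: the tensor product is still key correlated across this cut, since each factor is supported on its maximally-correlated subspace and tensoring preserves this structure on the combined key systems; and the key-attacking channel commutes with the tensor product, so $\widehat{\rhoA\otimes\rhoB}=\hat\rhoA\otimes\hat\rhoB$. The theorem then gives
\[
E_D^{\charlie\to\ab}(\rhoA\otimes\rhoB)
\geq D_\charlie^\infty\bigl(\rhoA\otimes\rhoB\,\big\|\,\hat\rhoA\otimes\hat\rhoB\bigr),
\]
and chaining with the previous display produces the claim.

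The one point that requires genuine care is the bookkeeping of directions: the repeater bound is stated with one-way communication from $\charlie$ outward and uses $D_\charlie^\infty$ (measurement on Charlie), whereas \Cref{lemma:datahiding-oneway} places measurement and communication on the ``Alice'' side. These line up provided that in the theorem one identifies Alice with $\charlie$ and Bob with $\ab$, which is why the cut $\charlie{:}\ab$ was the correct choice above. Beyond this identification and the separability check, no new estimate is needed, which is why the statement can be presented as a corollary.
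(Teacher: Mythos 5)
Your proposal is correct and follows the paper's own route exactly: choose $\sigma=\hat\rhoA\otimes\hat\rhoB$ in the repeater bound \eqref{lemma:repeatersep-oneway} (admissible precisely because one \keyattacked{} is separable), then apply \Cref{lemma:datahiding-oneway} across the $\charlie{:}AB$ cut with Charlie in the role of Alice, using that tensor products of key correlated states are key correlated and that the hat operation factorizes. The compatibility checks you spell out (separability cut, direction of communication, $\widehat{\rhoA\otimes\rhoB}=\hat\rhoA\otimes\hat\rhoB$) are exactly the implicit steps in the paper's one-line derivation.
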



Since all {\privatebit}s are NPT
(Non-positive under Partial Transposition)~\cite{paradigms},
this gives the first examples of NPT states
with a high distillable key but low one-way {\rkey}. 

\begin{example*}%
\label{example:main}
    Consider the following {\bell} {\privatebit}
    (see~\cite{secure} and \Cref{eq:flag-form}):
    \[\gamma
    = \frac{1}{2} \left( 1 + \frac{1}{d} \right) 
    \phi_0 \otimes \sigma_0 
    + \frac{1}{2} \left( 1 - \frac{1}{d} \right) 
    \phi_1 \otimes \sigma_1 
    \]
    where $\sigma_0$ and $\sigma_1$ 
    are, respectively, the symmetric and anti-symmetric states
    in $\complex^{d}\otimes\complex^{d}$ 
    (the extreme Werner states~\cite{wernerstates}
    which are known to be data-hiding states~\cite{wernerhiding}).
    Since distillable entanglement is upper bounded
    by the log-negativity $E_N$~\cite{logneg},
    we have the following upper bound which vanishes for large $d$:
    \begin{equation}
    \label{eq:repeater-main-example}
    R_D^\to(\gamma,\gamma)
    \leq 2 E_D^\to(\gamma)
    \leq 2 E_N(\gamma)
    = 2 \log \left(1+\frac{1}{d} \right)
    \;.
    \end{equation}
    This state was implemented experimentally for $d=2$~\cite{realpbit}.
    The key was distilled at a rate
    $K\approx{0.69}$,
    enough to break the bound at
    $E_N(\gamma)=\log\frac{3}{2}\approx 0.58$.
    However, because of the factor of 2 in \Cref{eq:repeater-main-example},
    an implementation with $d=4$ at the same key rate is required
    for the same proof of concept.
    Still, scaling up the implementation
    should be experimentally feasible,
    since in $d=4$ the gate used (swap)
    is tensor product of qubit gates.
    In the Appendix \Cref{example:ppt-invariant}
    we show how to apply \Cref{lemma:rded-oneway} to some PPT invariant states.
\end{example*}

\end{module}


\section{Conclusions}

\begin{module}%
\repeatersystems%
\Cref{lemma:rded-oneway} bounds the {\rkey}
of a restricted class of states,
but it also generalizes to all states if we restrict the protocols to
first distill {\privatebit}s with separable {\keyattacked} between the nodes
and then try to repeat.
In the Appendix \Cref{def:key-swapping}
we define a new {\rkey} $\swapkey(\rhoA,\rhoB)$
from these protocols and prove that
for all states, this rate is upper bounded by $E_D^\to (\rhoA \otimes \rhoB)$.
The restricted protocols still include 
one-way entanglement distillation and swapping;
thus, the new {\rkey} is still lower bounded by the minimum
of the one-way distillable entanglements.
While being restrictive,
we would like to stress that the communication between Alice and Bob 
is two way, and also that if the two-way step is limited to bipartite distillation between the nodes, we can always apply the result to the outcomes of the distillation.
In particular even if the two-way recurrence protocol is used to distill between the nodes, as in the case of heralded entanglement generation and purification,
we can apply the bound on $\swapkey$ to the outputs of the recurrence protocol.
The bound also applies to key repeater schemes based on quantum error correction.
The link with outgoing communication from the station is trivially covered.
For the link with incoming communication, the bound on $\swapkey$ applies to the output of the code (as mentioned above), since usually the code is decoded or corrected at the station rendering it a bipartite distillation protocol.
As such, we can apply our bound in some way to most repeater schemes
(see also~\cite{repeater-generations} and references therein 
for an overview)
and where it applies, any attempt to improve the rate
of key distillation above that of entanglement distillation will not work.
For example, attempting to use the noisy processing protocol\cite{renner2005information} would yield no advantage.
We are not aware that there exist any protocol that contains a truly two-way tripartite step.


Finally, we note that, 
because optimal one-way protocols exist when close to the target states,
the optimal two-way protocols are composed of a two-way ``lift-off'' protocol followed by a one-way ``conclusion'' protocol~\cite{kretschmann2004tema}.

We leave as an open problem whether \Cref{lemma:rded-oneway}
generalizes to all states and protocols, including two-way communication.
Such a result would show that all
entangled states with zero distillable entanglement, 
including those with distillable key, have zero {\rkey}.
Another open problem, called the PPT\textsuperscript2 conjecture~\cite{ppt2},
asks whether swapping PPT states in all dimensions always yields separable states.
If the conjecture is true,
then it would imply that all PPT states have zero {\rkey}.
In that, the results here presented support the conjecture.
Since our results are asymptotic in nature,
they give a complementary view on the PPT\textsuperscript2 conjecture
to that of the study of swapping specific states in specific dimensions.

The connection made between key distillation, entanglement distillation
and quantum data hiding raises the possibility of finding
a \textrate{{\hbit}s}, $H_D$
(which we refrain from defining formally).
Namely, in performing entanglement distillation on {\privatebit}s,
it may be possible to retain the undistillable correlations
into {\hbit}s with zero distillable entanglement
so that they could be used as a resource, such that
\[K_D(\rho) = H_D(\rho) + E_D(\rho).\]

\end{module}

\hypersetup{bookmarksdepth=-2}

\begin{acknowledgments} 
We thank Alexander M\"uller-Hermes, C\'ecilia Lancien, and M\=aris Ozols for helpful discussions.
We acknowledge financial support from
the European Research Council (ERC Grant Agreement No. 337603),
the Danish Council for Independent Research (Sapere Aude),
and VILLUM FONDEN via the QMATH Centre of Excellence (Grant No. 10059).
\end{acknowledgments}



\section*{Appendices}
In these appendices we present the background concepts
used in this article, we show how to further apply our findings
to more complex repeater scenarios and we provide more examples.

We begin with a review of the generalized {\bell} states
and of their properties, especially with respect to the
bilateral CNOT which is fundamental to the reversible map. 
Then we analyse {\bell} {\privatebit}s further and present some minor properties.

After this, we move onto entanglement measures.
First we review the various entanglement measures based
on the relative entropy and its restriction to quantum measurements;
this heavily relies on the work made in~\cite{piani}.
We then explain all the distillation
rates mentioned in this paper and give their explicit definition, 
this includes the various forms of
distillable entanglement, distillable key and {\rkey}.
We also discuss in details some known upper bounds on these distillation rates
that were used in the main text.

With these concepts in place we present further applications of our results:
the one-way key swapper (a novel repeater rate) and the single-copy repeater rate.
Finally, we give further examples of states with vanishing one-way {\rkey},
which include NPT states, PPT states and PPT invariant states.

Last but not least, the reader can find an exhaustive list of notations in
\Cref{table:notation} and \Cref{table:notation-repeater}.

\appendix

\section{{Generalized Bell states}}

\begin{module}
    Consider a two qubit system ($\complex^2\otimes\complex^2$)
    with the {\eprbit}:
    \[\ket\Phi \coloneqq \frac{1}{\sqrt 2} (\ket{00} + \ket{11})\;.\]
    We can write the Bell states as
    bit flips and phase flips of the {\eprbit}
    where the bit flip and phase flip unitaries are 
    \begin{align*}
    X &\coloneqq \quantity( \begin{matrix} 0&1\\1&0\end{matrix})
    &
    Z &\coloneqq \quantity( \begin{matrix} 1&0\\0&-1\end{matrix}) \;,
    \end{align*}
    and they generate the Bell states
    when acting on a single qubit of a {\eprbit}.
    In general, acting on one qubit or the other yields different bases.
    We will choose to act always on the first qubit and so we define
    \[\ket{\phi_{ij}} \coloneqq (X^i Z^j \otimes \one) \ket{\Phi}\qquad i,j=0,1\]
    which gives the Bell states:
    \begin{align*}
    \ket{\phi_{00}} &= \frac{1}{\sqrt 2} (\ket{00} + \ket{11}) = \ket\Phi
    \\
    \ket{\phi_{01}} &= \frac{1}{\sqrt 2} (\ket{00} - \ket{11})
    \\
    \ket{\phi_{10}} &= \frac{1}{\sqrt 2} (\ket{10} + \ket{01})
    \\
    \ket{\phi_{11}} &= \frac{1}{\sqrt 2} (\ket{10} - \ket{01})\;.
    \end{align*}
    
    The generalized Bell states are defined in a similar way.
    We consider now a two qudit system ($\complex^d\otimes\complex^d$)
    whose {\eprbit} is now:
    \begin{align*}
    \ket{\Phi} &= \frac{1}{d} \sum _{i\in\integers_d} \ket{ii}\;,
    \end{align*}
    {where $\integers_d = \integers/d \integers\equiv\{0,\dots,d-1\}$
        is the cyclic additive group of order $d$ on the integers,
        namely the integers with addition modulo $d$: $i+j= i+j \mod{d}$.
        The unitary generalization of the bit and phase flip are:}
    \begin{align*}
    X &\coloneqq \sum _{j\in\integers_d} \ketbra{j+1}{j}\;,
    &
    Z &\coloneqq \sum _{j\in\integers_d}  \omega^j \proj{j}\;,
    \end{align*}
    where $\omega= e^{i \frac{2\pi}{d}}$
    is the $d$'th root of unity.
    Just as before, the Bell states are now defined
    using powers of $X$ and $Z$.
    \begin{definition}[Generalized Bell states]%
        \label{def:bellbasis}
        \begin{align*}
        \ket{\phi _{ij}}&\coloneqq (X^i Z^j \otime \one) \ket{\Phi}
        \end{align*}
        for $i,j=\integers_d$.
    \end{definition}
    Finally, we further define the following notation for the density matrix of these states:
    \begin{align*}
    \Phi &\coloneqq  \proj{\Phi}\;,
    &
    \phi _{ij} &= \proj{\phi _{ij}} \;.
    \end{align*}
    As mentioned in the main text, we denote with the hat ($\hat{\;\;}$)
    the operation that measures the key systems in the computational basis.
    For the case of the {\eprbit} this measurement yields 
    the \emph{maximally correlated state}:
    \begin{equation*}
    \hat\Phi = \frac{1}{d} \sum _{i\in\integers_d} \proj{ii} \;.
    \end{equation*}
    The maximally correlated state is a uniform mixture of orthogonal pure states,
    we thus define the \emph{maximally correlated subspace}
    as the support of the maximally correlated state.
    We call $\one_\corr\coloneqq d \hat\Phi$ the projector onto this subspace.
    Measuring any of $\phi_{0j}$ (the Bell states with phase flips only)
    in the computational basis yields the same maximally correlated state
    \[\hat\phi_{0j} = \hat\Phi \quad\forall\ j \]
    and $\one_{\hat\Phi}$ can be written as a mixture of Bell states
    in the following way:
    \begin{align*}
    \one_\corr
    &
    = \sum _{i\in\integers_d} \proj{ii}
    = \sum _{i,j\in\integers_d} \delta_{ij} \ketbra{ii}{jj}
    \\&
    = \sum _{{i,}j,k\in\integers_d}
    \frac{1}{d} \omega ^{k(i-j)} \ketbra{ii}{jj}
    = \sum _{k\in\integers_d} \proj{\phi _{0k}}\;.
    \\&
    = \sum _{k\in\integers_d} \phi _{0k}\;.
    \end{align*}
    For the remainder of the article, we will omit $\integers_d$
    and, unless otherwise stated,
    all indexes will be summed over the cyclic group,
    where the  order is given by the dimension of the corresponding Hilbert space.
\end{module}

%
%
\begin{module}
    \newcommand{\scdot}{\;\cdot\;}
    
    \begin{figure}
        \providecommand{\circuitC}{1.5em}
        \providecommand{\circuitRfix}{1em}
        \setlength{\baselineskip}{0pt}
        
        \begin{module}
            \providecommand{\tinymath}[1]{\scalebox{.5}{$#1$}}
            \providecommand{\pauli}[2]{X^{#1}Z^{#2}}
            \providecommand{\circuitC}{1em}
            \providecommand{\circuitR}{.5em}
            \providecommand{\circuitRfix}{.7em}
            \providecommand{\controlone}{C_1}
            \providecommand{\controltwo}{C_2}
            \providecommand{\targetone}{T_1}
            \providecommand{\targettwo}{T_2}
            \[
            \phantom{\ket\Phi\;}
            \begin{gathered} \Qcircuit  @C=\circuitC @R=\circuitR @!R=\circuitRfix {
                & & \ustick{\scriptstyle \controlone}\qw & \gate{\scriptstyle \pauli ij} & \ctrl{2} & \qw
                \\\\
                & & \ustick{\scriptstyle \targetone }\qw & \gate{\scriptstyle \pauli kl} & \targ    & \qw
                \\
                \lstick{\ket\Phi} \ar @{-} [-3,1] \ar @{-} [3,1]
                \\
                \lstick{\otimes\;\,} &&&&{\scriptscriptstyle \bnot}
                \\
                \lstick{\ket\Phi} \ar @{-} [-3,1] \ar @{-} [3,1]
                \\
                & & \ustick{\scriptstyle \controltwo}\qw & \qw & \ctrl{2} & \qw
                \\\\
                & & \ustick{\scriptstyle \targettwo }\qw & \qw & \targ    & \qw
                \gategroup{1}{5}{9}{5}{1.9em}{--}
            }\end{gathered}
            \;=\;
            \phantom{\ket\Phi\;}
            \begin{gathered}\Qcircuit  @C=\circuitC @R=\circuitR @!R=\circuitRfix {
                & & \ustick{\scriptstyle \controlone}\qw & \gate{\scriptstyle \pauli {i  }{j\tinymath-l}}     & \qw
                \\\\
                & & \ustick{\scriptstyle \targetone }\qw & \gate{\scriptstyle \pauli {k\tinymath+i}{l  }} \qw & \qw
                \\
                \lstick{\ket\Phi} \ar @{-} [-3,1] \ar @{-} [3,1]
                \\
                \lstick{\otimes\;\,}
                \\
                \lstick{\ket\Phi} \ar @{-} [-3,1] \ar @{-} [3,1]
                \\
                & & \ustick{\scriptstyle \controltwo}\qw & \qw & \qw
                \\\\
                & & \ustick{\scriptstyle \targettwo }\qw & \push{\rule{0pt}{1.85em}}\qw & \qw
            }\end{gathered}
            \]
            
        \end{module}
        \caption{\label{figure:bnot}%
            Effect of the BNOT on Bell states, see \Cref{lemma:bnot}.}
        \renewcommand{\circuitC}{.7em}
        \begin{minipage}[t][][t]{.567\columnwidth}
            {\newlength\bnotheight%
                \settoheight\bnotheight{\vbox{ %
                        %
                        \begin{module}
                            \providecommand{\tinymath}[1]{\scalebox{.5}{$#1$}}
                            \providecommand{\pauli}[2]{X^{#1}Z^{#2}}
                            \providecommand{\circuitC}{1em}
                            \providecommand{\circuitR}{1em}
                            \providecommand{\circuitRfix}{1em}
                            \[
                            \phantom{\ket\Phi\hspace{3.7pt}}
                            \begin{gathered} \Qcircuit  @C=\circuitC @R=\circuitR @!R=\circuitRfix {
                                & & \ctrl{1} & \qw\\
                                \lstick{\scriptstyle \ket\Phi}\ar @{-} [-1,1] \ar @{-} [1,1]& & \targ    &\qw\\
                                \lstick{\scriptstyle \ket\Phi}\ar @{-} [-1,1] \ar @{-} [1,1]& & \ctrl{1} &\qw\\
                                & & \targ &\qw
                            } \end{gathered}
                            =
                            \phantom{\ket\Phi}
                            \begin{gathered} \Qcircuit  @C=\circuitC @R=\circuitR @!R=\circuitRfix {
                                & &\qw&\qw\\
                                \lstick{\scriptstyle \ket\Phi}\ar @{-} [-1,1] \ar @{-} [1,1]& & \qw &\qw\\
                                \lstick{\scriptstyle \ket\Phi}\ar @{-} [-1,1] \ar @{-} [1,1]& & \qw &\qw\\
                                & &\push{\rule{0pt}{.85em}}\qw&\qw
                            } \end{gathered}
                            \]
                        \end{module}
                }}\parbox[t][\bnotheight][c]{\linewidth}{%
                    \begin{module}
                        \providecommand{\tinymath}[1]{\scalebox{.5}{$#1$}}
                        \providecommand{\pauli}[2]{X^{#1}Z^{#2}}
                        \providecommand{\circuitC}{1em}
                        \providecommand{\circuitR}{1em}
                        \[
                        \begin{gathered} \Qcircuit  @C=\circuitC @R=\circuitR  {
                            & \gate{\scriptstyle \pauli ij}& \ctrl{1} & \qw\\
                            & \gate{\scriptstyle \pauli kl}&\targ \qw & \qw\\
                        }\end{gathered}
                        =
                        \begin{gathered}
                        \Qcircuit  @C=\circuitC @R=\circuitR {
                            & \ctrl{1}& \gate{\scriptstyle \pauli {k}{j\tinymath-l}} & \qw\\
                            & \targ   & \gate{\scriptstyle \pauli {k\tinymath+i}{l}} & \qw\\
                        }\end{gathered}
                        \]
                    \end{module}
            }}\caption{Effect of commuting CNOT with arbitrary bit/phase flips.}%
            \label{figure:clifford}
        \end{minipage}
        \hfill
        \begin{minipage}[t][][t]{.4\columnwidth}
            \begin{module}
                \providecommand{\tinymath}[1]{\scalebox{.5}{$#1$}}
                \providecommand{\pauli}[2]{X^{#1}Z^{#2}}
                \providecommand{\circuitC}{1em}
                \providecommand{\circuitR}{1em}
                \providecommand{\circuitRfix}{1em}
                \[
                \phantom{\ket\Phi\hspace{3.7pt}}
                \begin{gathered} \Qcircuit  @C=\circuitC @R=\circuitR @!R=\circuitRfix {
                    & & \ctrl{1} & \qw\\
                    \lstick{\scriptstyle \ket\Phi}\ar @{-} [-1,1] \ar @{-} [1,1]& & \targ \qw &\qw\\
                    \lstick{\scriptstyle \ket\Phi}\ar @{-} [-1,1] \ar @{-} [1,1]& & \ctrl{1}\qw &\qw\\
                    & & \targ &\qw
                } \end{gathered}
                =
                \phantom{\ket\Phi}
                \begin{gathered} \Qcircuit  @C=\circuitC @R=\circuitR @!R=\circuitRfix {
                    & &\qw&\qw\\
                    \lstick{\scriptstyle \ket\Phi}\ar @{-} [-1,1] \ar @{-} [1,1]&  &\qw&\qw\\
                    \lstick{\scriptstyle \ket\Phi}\ar @{-} [-1,1] \ar @{-} [1,1]&  &\qw&\qw\\
                    & &\push{\rule{0pt}{.85em}}\qw&\qw
                } \end{gathered}
                \]
            \end{module}
            \caption{Bell states with no bit/phase flips are invariant under BNOT.}%
            \label{figure:bnot-invariant}
        \end{minipage}
        
    \end{figure}
    
    \bigskip
    The bilateral $CNOT$
    is a gate of four systems obtained by applying two {\cnot} gates,
    two of the systems will be the controls
    and the others will be the targets.
    In~\cite{dist_e} it was called Bilateral XOR (BXOR)
    and it was defined only for qubit systems.
    Here we use the generalized CNOT on
    $\complex^d \otimes \complex^d$:
    \[ \cnot \coloneqq \sum_{ab} \ketbra{a, a{+}b}{ab}.\] 
    The straightforward generalization of the BXOR gate,
    the \emph{bilateral {\cnot}}, is then:
    \[\bnot _{CT} \coloneqq \cnot_{C_1T_1}\otimes \cnot_{C_2T_2}\]
    where systems $C=C_1C_2$ are the control qudits,
    systems $T=T_1T_2$ are the target qudits
    and all qudits have the same size, i.e.\ $|C_1|=|C_2|=|T_1|=|T_2|=d$.
    Like in~\cite{dist_e}, our interest in the gate lies in its effect on Bell states.
    Notice how the $\bnot$ is a local operation
    as long as the system is partitioned as $C_1T_1 {:} C _2T_2$.
    
    \begin{lemma}%
        \label{lemma:bnot}
        For all $i,j,k,l\in\integers_d$:
        \[\bnot_{CT} \cdot \ket{\phi _{ij}}_C\otimes \ket{\phi _{kl}}_T
        = \ket{\phi _{i,j-l}}_C \otimes \ket{\phi _{k+i,l}}_T\]
    \end{lemma}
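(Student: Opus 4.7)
The plan is to reduce the identity to two ingredients already shown in the excerpt: the Clifford commutation rules between $\cnot$ and the qudit Pauli operators (\Cref{figure:clifford}), and the invariance $\bnot\,\ket\Phi_C\otimes\ket\Phi_T = \ket\Phi_C\otimes\ket\Phi_T$ (\Cref{figure:bnot-invariant}). Since $\bnot = \cnot_{C_1T_1}\otimes\cnot_{C_2T_2}$ is a product of two local CNOTs, these two facts should carry the entire argument.

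First I would rewrite the input via \Cref{def:bellbasis} as $\ket{\phi_{ij}}_C\otimes\ket{\phi_{kl}}_T = (X^iZ^j)_{C_1}\,(X^kZ^l)_{T_1}\,\ket\Phi_C\otimes\ket\Phi_T$, so that all four Pauli factors sit on the first qudits $C_1$ and $T_1$. Consequently $\cnot_{C_2T_2}$ commutes past every Pauli in sight and need not be tracked further, leaving only $\cnot_{C_1T_1}$ to interact non-trivially.

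Next I would push $\cnot_{C_1T_1}$ to the right of $(X^iZ^j)_{C_1}(X^kZ^l)_{T_1}$ using the generalized qudit Clifford relations $\cnot(X^i\otimes\one)=(X^i\otimes X^i)\cnot$, $\cnot(\one\otimes X^k)=(\one\otimes X^k)\cnot$, $\cnot(Z^j\otimes\one)=(Z^j\otimes\one)\cnot$, and $\cnot(\one\otimes Z^l)=(Z^{-l}\otimes Z^l)\cnot$, each of which is a one-line check on the basis vectors $\ket{ab}$ and is the content of \Cref{figure:clifford}. Applying them in sequence, the surviving prefactor of $\cnot_{C_1T_1}$ becomes $(X^iZ^{j-l})_{C_1}\,(X^{i+k}Z^l)_{T_1}$.

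Finally I would combine this with \Cref{figure:bnot-invariant} to conclude $\bnot\,\ket{\phi_{ij}}_C\otimes\ket{\phi_{kl}}_T = (X^iZ^{j-l})_{C_1}(X^{i+k}Z^l)_{T_1}\,\ket\Phi_C\otimes\ket\Phi_T = \ket{\phi_{i,j-l}}_C\otimes\ket{\phi_{k+i,l}}_T$, which is the claim. The only real bookkeeping subtlety is the asymmetric shift $j\mapsto j-l$ generated by $\cnot(\one\otimes Z^l)=(Z^{-l}\otimes Z^l)\cnot$: in the qubit case of~\cite{dist_e} the inverse is invisible since $Z^{-1}=Z$, but for general $d$ this minus sign is precisely what produces the $-l$ in the first index of the output, and it is the one place where care is needed.
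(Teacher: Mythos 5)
Your proof is correct, but it is not the route taken in the paper's formal proof: there, \Cref{lemma:bnot} is established by direct computation, expanding both Bell states in the computational basis, applying the bilateral CNOT, and re-identifying the output after the change of variable $\tilde b = b+a$. What you propose coincides with the alternative argument the paper only sketches immediately afterwards (\Cref{eq:bnot-invariant,eq:cnot-clifford}, illustrated in \Cref{figure:clifford,figure:bnot-invariant,figure:bnot}): write $\ket{\phi_{ij}}_C\otimes\ket{\phi_{kl}}_T$ via \Cref{def:bellbasis} as Paulis on $C_1,T_1$ acting on $\Phi\otimes\Phi$, push $\mathrm{CNOT}_{C_1T_1}$ through them with the Clifford update rules (while $\mathrm{CNOT}_{C_2T_2}$ meets no Paulis at all), and absorb the BNOT into the invariant state $\Phi\otimes\Phi$. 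Your elementary relations are exact operator identities on $\ket{ab}$ and, as you compose them, no $X$/$Z$ reordering on a single qudit is ever required, so no extra phases appear; in particular your $\mathrm{CNOT}\,(\one\otimes Z^{l})=(Z^{-l}\otimes Z^{l})\,\mathrm{CNOT}$ is precisely where the index shift $j\mapsto j-l$ originates, and you rightly flag that this is invisible in the qubit case of the original BXOR. (Note in passing that the composed rule as displayed in \Cref{eq:cnot-clifford} carries a $\mathrm{CNOT}^{\dagger}$ on the right-hand side, which is only consistent for $d=2$; your version with plain $\mathrm{CNOT}$ is the identity actually needed.) The trade-off between the two approaches is the usual one: the paper's basis computation is entirely self-contained and verifies all phases in one sweep, whereas your Clifford argument is more modular and conceptually transparent but leans on the BNOT-invariance of $\Phi\otimes\Phi$ (\Cref{figure:bnot-invariant}), which the paper asserts only at figure level and which deserves its own one-line check in a fully written-out proof.
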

    \begin{proof}
        \begin{align*}
        &\bnot  \cdot \ket{\phi _{ij}}\otimes \ket{\phi _{kl}}
        \\&
        = \bnot \cdot \frac{1}{d} \sum _{ab}
        \omega^{ja} \ket{a{+}i,a} \otimes \omega^{lb}\ket{b{+}k,b}
        \\&
        = \frac{1}{d} \sum _{ab}
        \omega^{ja} \ket{a{+}i,a} \otimes \omega^{lb}\ket{b{+}k{+}a{+}i,b{+}a}\;.
        \intertext{Now we make a change of variable $\tilde b = b + a$:}
        &
        = \frac{1}{d} \sum _{a \tilde b}
        \omega^{ja} \ket{a{+}i,a} \otimes \omega^{l(\tilde b - a )}\ket{\tilde b{+}k{+}i,\tilde b}
        \\&
        = \frac{1}{d} \sum _{a \tilde b}
        \omega^{(j-l)a} \ket{a{+}i,a} \otimes \omega^{l\tilde b}\ket{\tilde b{+}k{+}i,\tilde b}
        \\&
        = \ket{\phi _{i,j-l}}\otimes \ket{\phi _{k+i,l}}
        \;. \qedhere
        \end{align*}
    \end{proof}
    
    An alternative and maybe more intuitive way to prove \Cref{lemma:bnot},
    is to notice that $\Phi \otimes \Phi$
    is  invariant under the action of the $\bnot$, 
    see \Cref{figure:bnot-invariant},
    and that $\cnot$ is a Clifford gate with a simple update rule,
    see \Cref{figure:clifford}.
    Namely, it holds that
    \begin{equation}
    \label{eq:bnot-invariant}
    \bnot_{CT} \cdot
    (\ket\Phi _C \otimes \ket\Phi  _T) =
    \ket\Phi _C \otimes \ket\Phi _T
    \end{equation}
    and
    \begin{equation}
    \label{eq:cnot-clifford}
    \cnot \cdot X^i Z^j \otime X^k Z^l = X^i Z^{j-l} \otime X^{i+k} Z^l \cdot \cnot^\dagger.
    \end{equation}
    Applying \Cref{eq:bnot-invariant} and \Cref{eq:cnot-clifford}
    to \Cref{def:bellbasis} proves \Cref{lemma:bnot}
    as displayed in \Cref{figure:bnot}.
    From \Cref{lemma:bnot}, it follows in particular that
    \begin{equation*}
    \label{eq:bnot-phi}
    \bnot^\dagger \cdot \phi _{00}\otimes \phi _{0\kindex} \cdot \bnot = \phi _{0\kindex}\otimes \phi _{0\kindex}
    \end{equation*}
    which is what we use in \Cref{lemma:reversible-rho} (see main text).

\end{module}


\section{{\bell} {\privatebit}s}
\label{sec:privatestates}

\begin{module}
    \newcommand{\bp}{\{p_k\}}
    The {\eprbit}s produce {\secretbit}s with respect to the environment
    when measured in the computational basis.
    The {\privatebit}s are those states that generalize this property,
    namely they are all those states that produce {\secretbit}s with respect to the environment
    when measured in the computational basis.
    Let us recall that a state is a {\privatebit} if and only if
    it can be written as\cite{secure}:
    \[\gamma^m \coloneqq \twisting(\Phi\otimes \sigma)\twisting^\dagger\]
    where $m=\log|\keyA|=\log|\keyB|$,  
    $\twisting$ is a controlled unitary 
    $\twisting = \sum _{ij} \proj{ij} \otimes U_{i}$, 
    and $\sigma$ is an arbitrary state.
    It is implicit that $\Phi$ is a state of $\key$ and $\sigma$ is a state of $\shield$.
    Let us also recall the definition of Bell private states: 
    \[\gamma_\bell^m \coloneqq \sum_{\kindex\in\integers_{2^m}}
    p_\kindex \, \phi _{0\kindex} \otimes \sigma _\kindex\]
    where $\sigma _\kindex$ are arbitrary orthogonal states.
    Again $\phi_{ij}$ are states of $\key$ and $\sigma_\kindex$ are states of $\shield$.

    \begin{module}
        All {\bell} {\privatebit}s are {\privatebit}s.
        This can be  proved either
        by checking that the measurement in $\key$ gives perfectly secure bits
        or by showing that they admit an expression as {\privatebit}s, here we show the latter.
        \begin{lemma}%
            \label{lemma:bell-twisting}
            A state $\gamma_\bell^m$ is a {\bell} {\privatebit}
            if and only if 
            it can be written as $\gamma_\bell^m=\twisting(\Phi\otimes \sigma)\twisting^\dagger$ 
            using 
            \begin{align*}
            \twisting &= \sum \nolimits_{ij} \proj{ij} \otimes U_\sigma^i
            &
            \sigma &= \sum \nolimits_\kindex p_\kindex \sigma _\kindex
            \intertext{where $U_\sigma^i$ is the $i$-th power of } 
            U_\sigma &= \sum \nolimits_\kindex \omega^\kindex P_{\sigma _\kindex} + P_{\sigma _\perp}
            \end{align*}
            with $P_{\sigma _\kindex}$ the projectors onto the supports of $\sigma _\kindex$ and
            $P_{\sigma _\perp} = \one -\sum_\kindex P_{\sigma_\kindex}$ the remaining orthogonal projector.
        \end{lemma}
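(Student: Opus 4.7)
The plan is to verify the single identity
\[
\twisting(\Phi\otimes\sigma)\twisting^\dagger \;=\; \sum_k p_k\,\phi_{0k}\otimes\sigma_k,
\]
from which both directions of the lemma follow at once: if a state equals the right-hand side it is, by definition, a Bell private state, and conversely any Bell private state is reproduced by the specific $\twisting$ and $\sigma$ given in the statement.

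First I would check that $U_\sigma$ is a well-defined unitary. The orthogonality of the $\sigma_k$ makes their support projectors $P_{\sigma_k}$ mutually orthogonal, so $\{P_{\sigma_k}\}_k\cup\{P_{\sigma_\perp}\}$ is an orthogonal resolution of the identity on $\shield$. Consequently $U_\sigma$ has eigenvalue $\omega^k$ on $\supp\sigma_k$ and eigenvalue $1$ on the orthogonal complement, and satisfies $U_\sigma^i\sigma_k = \omega^{ik}\sigma_k = \sigma_k U_\sigma^{i}$ for all $i,k$. It follows that $\twisting = \sum_{ij}\proj{ij}\otimes U_\sigma^i$ is a controlled unitary, i.e.\ a valid twisting in the sense of the paper.

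Next I would expand the left-hand side directly. Writing $\Phi = \frac{1}{d}\sum_{a,b}\ketbra{aa}{bb}$ on $\key$, only the diagonal terms $\ketbra{ii}{i'i'}$ survive the projections in $\twisting$ and $\twisting^\dagger$, which gives
\[
\twisting(\Phi\otimes\sigma)\twisting^\dagger
= \frac{1}{d}\sum_{i,i'}\ketbra{ii}{i'i'}\otimes U_\sigma^i\,\sigma\,U_\sigma^{-i'}.
\]
Substituting $\sigma = \sum_k p_k\sigma_k$ and applying the eigenvalue relation of the previous step yields $U_\sigma^i\sigma U_\sigma^{-i'} = \sum_k p_k\omega^{(i-i')k}\sigma_k$. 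Interchanging the sums leaves the factor $\frac{1}{d}\sum_{i,i'}\omega^{(i-i')k}\ketbra{ii}{i'i'}$ in front of each $\sigma_k$, which is exactly the density matrix $\phi_{0k}=\proj{\phi_{0k}}$ of the vector $(Z^k\otimes\one)\ket{\Phi}$. The desired identity follows.

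The only delicate point is handling $P_{\sigma_\perp}$, but this is automatic because $P_{\sigma_\perp}\sigma = 0$, so the eigenvalue $1$ of $U_\sigma$ on the complement contributes nothing. The whole argument is essentially a discrete Fourier calculation on the maximally correlated subspace, and I anticipate no real obstacle beyond bookkeeping of the phases $\omega^{(i-i')k}$.
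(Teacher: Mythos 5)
Your proposal is correct and is essentially the paper's own argument: both amount to a direct verification of the identity $\twisting(\Phi\otimes\sigma)\twisting^\dagger=\sum_k p_k\,\phi_{0k}\otimes\sigma_k$, with orthogonality of the $\sigma_k$ entering exactly where you use $U_\sigma^i\sigma_k=\omega^{ik}\sigma_k$ (the paper phrases this as inserting support projectors and using $\sum_{k,\alpha}P_{\sigma_\alpha}\sigma_k=\sum_k P_{\sigma_k}\sigma_k$, and runs the chain of equalities in the opposite direction). Your observation that the single identity yields both implications, and that $P_{\sigma_\perp}$ is inert because $P_{\sigma_\perp}\sigma=0$, matches the paper's remarks.
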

        $P_{\sigma _\perp}$ plays no active role,
        it is only needed to complete $\twisting$,
        so that $\sigma$ is not required to have full support.
        \begin{proof}
            \providecommand*{\phagger}{{\phantom\dagger}}
            The following sequence of equalities proves that a state is of the form 
            $\sum p_\kindex \, \phi _{0\kindex} \otimes \sigma_\kindex$,
            thus a {\bell} {\privatebit},
            if and only if it is 
            of the form $\twisting (\Phi^\phagger \otimes \sigma) \twisting^\dagger $
            with  $\sigma$ and $\twisting$ as above.
            \allowdisplaybreaks%
            \renewcommand{\kindex}{k}
            \begin{align*}
            \gamma^m
            &=\sum_k
            p_\kindex \phi _{0\kindex} \otimes \sigma _\kindex
            \\&=\sum_{ijk}
            p_\kindex \frac1{2^m}
            \omega^{i\kindex} \ketbra{ii}{jj} \omega^{-j\kindex}
            \otimes P_{\sigma _\kindex} \sigma _\kindex P_{\sigma _\kindex}
            \\&=\sum_{ijk}
            \frac{1}{2^m}\ketbra{ii}{jj}  \otimes
            (\omega^{ i\kindex}P_{\sigma _\kindex})
            \cdot(p_\kindex\sigma _\kindex)\cdot
            (\omega^{-j\kindex}P_{\sigma _\kindex})
            \\&=\sum_{ijk\alpha\beta}
            \frac{1}{2^m}\ketbra{ii}{jj} \otimes
            (\omega^{ i\alpha}P_{\sigma _\alpha})
            \cdot (p_\kindex\sigma _\kindex) \cdot
            (\omega^{-j\beta}P_{\sigma _\beta})
            \\&=\sum_{ij}
            \frac{1}{2^m}\ketbra{ii}{jj}
            \otimes\\&\qquad\otimes
            {({\textstyle\sum_\alpha}\omega^\alpha P_{\sigma _\alpha}+ P_{\sigma _\perp})}^i
            \cdot\sigma\cdot
            {({\textstyle\sum_\beta}\omega^\beta  P_{\sigma _\beta }+ P_{\sigma _\perp})}^{-j}
            \\&=\sum_{ij}
            \frac{1}{2^m}\ketbra{ii}{jj}  \otimes
            U_\sigma^i
            \sigma
            U_\sigma^{-j}
            \\&=\sum_{ij}
            (\proj i \otimes U_\sigma^i)     \cdot
            \quantity( \tfrac{1}{2^m}\ketbra{ii}{jj} \otimes\sigma )
            \cdot (\proj j \otimes U_\sigma^{j\dagger})
            \\&= \twisting^\phagger ( \Phi^m \otimes \sigma ) \twisting^\dagger
            \end{align*}
            where we used the orthogonality of the $\sigma _\kindex$ in
            the identity
            \[\sum_{\kindex\alpha}P_{\sigma _\alpha} \sigma _\kindex
            = \sum_\kindex P_{\sigma _\kindex} \sigma _\kindex
            \;.\qedhere\]
        \end{proof}
    \end{module}

    \begin{module}
        \providecommand{\X}{Y}
        
        For $|\keyA|=|\keyB|=2$,
        any {\privatebit} admits a \emph{block form}~\cite{paradigms},
        namely it can be written as:
        \[ \gamma^1 = \frac{1}{2}
        \begin{pmatrix}
        \sqrt{\X^\dagger \X} & 0 & 0 & \X^\dagger
        \\0&0&0&0\\0&0&0&0\\
        \X & 0 & 0 & \sqrt{\X \X^\dagger}
        \end{pmatrix}\]
        where $\X$ is any opportune matrix of unit trace norm  ($\norm{\X}_1 = 1$).
        This can be easily seen by recalling
        that any matrix $\X$ admits a singular value decomposition
        and noticing that the decomposition can be used to extract $\sigma$
        and the unitaries $U_0$ and $U_1$ in $\twisting$.
        However, this does not work in higher dimension ($m=\log|\keyA|>1$)
        because then additional unitaries are needed to specify $T$.
        
        This is not true for {\bell} {\privatebit}s, indeed,
        a {\bell} {\privatebit} only needs to specify a single unitary $U_\sigma$.
        This allows one to write a block form for all {\privatebit}s
        by exploiting the fact that $U_\sigma$ and $\sigma$ commute.
        \begin{corollary}%
            \label{lemma:bellpdits-blockform}
            $\bgamma^m$ is a {\bell} {\privatebit} iff
            \[ \bgamma^m =\frac1{2^m} \sum_{ij} \ketbra{ii}{jj}
            \otimes |\X| \quantity( \frac{\X}{|\X|}) ^{i-j}\]
            for some normal $\X$ ($\X^\dagger\X=\X\X^\dagger$) such that $\norm{\X}_1 = 1$ and $\X^{2^m} \geq 0$.
        \end{corollary}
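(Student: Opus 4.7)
The plan is to invoke Lemma~\ref{lemma:bell-twisting}, which already characterizes Bell private bits as $\twisting(\Phi\otimes\sigma)\twisting^\dagger$ with $U_\sigma$ commuting with $\sigma$, and then to package the pair $(U_\sigma,\sigma)$ into a single normal operator $\X \coloneqq U_\sigma\sigma$. Because $U_\sigma$ and $\sigma$ are simultaneously diagonal in the eigenbasis of $\sigma$, the operator $\X$ is normal, its polar decomposition is $\X = U_\sigma|\X|$ with $|\X|=\sigma$, and hence $\norm{\X}_1 = \tr\sigma = 1$. The starting point for the proof is to rewrite the Bell-private-bit expansion obtained inside Lemma~\ref{lemma:bell-twisting} as $\frac{1}{2^m}\sum_{ij}\ketbra{ii}{jj}\otimes U_\sigma^i\sigma U_\sigma^{-j}$, and then use commutativity to collapse the shield block to $\sigma U_\sigma^{i-j} = |\X|(\X/|\X|)^{i-j}$, which is exactly the desired form.

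For the forward direction, once $\X$ is defined as above, normality and $\norm{\X}_1=1$ are immediate. The remaining condition $\X^{2^m}\geq 0$ follows from $\X^{2^m} = U_\sigma^{2^m}\sigma^{2^m} = \sigma^{2^m}$, using that $\omega = e^{2\pi i/2^m}$ is a primitive $2^m$-th root of unity and therefore $U_\sigma^{2^m} = \sum_\kindex\omega^{\kindex 2^m}P_{\sigma_\kindex}+P_{\sigma_\perp} = \one$.

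For the converse, given such an $\X$, I would read off $U_\sigma$ and $\sigma$ from the spectral decomposition $\X=\sum_\lambda \lambda\,\Pi_\lambda$. The hypothesis $\X^{2^m}\geq 0$ forces every nonzero eigenvalue to have phase equal to $\omega^\kindex$ for some $\kindex\in\{0,\dots,2^m{-}1\}$, since $\lambda^{2^m}\geq 0$ iff $\arg\lambda\in\frac{2\pi}{2^m}\integers$. Grouping the spectral projectors by their phase index produces orthogonal positive operators $\sigma_\kindex = \sum_{\lambda:\lambda/|\lambda|=\omega^\kindex}|\lambda|\Pi_\lambda$, whose normalizations $\tilde\sigma_\kindex$ are orthogonal states with weights $p_\kindex = \tr\sigma_\kindex$ satisfying $\sum_\kindex p_\kindex = \norm{\X}_1 = 1$; the corresponding unitary $U_\sigma = \sum_\kindex\omega^\kindex P_{\tilde\sigma_\kindex}+P_{\sigma_\perp}$ is exactly the one featured in Lemma~\ref{lemma:bell-twisting}. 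A direct rearrangement using $\phi_{0\kindex} = \frac{1}{2^m}\sum_{ij}\omega^{\kindex(i-j)}\ketbra{ii}{jj}$ together with $\sigma U_\sigma^{i-j} = \sum_\kindex p_\kindex\,\omega^{\kindex(i-j)}\tilde\sigma_\kindex$ then converts the block form into $\sum_\kindex p_\kindex\,\phi_{0\kindex}\otimes\tilde\sigma_\kindex$, which is a Bell private bit.

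The only real obstacle is notational: making sense of $\X/|\X|$ when $\X$ is singular. I would interpret it via the functional calculus as the partial-isometry factor of the polar decomposition, defined on $\supp|\X|$ and extended arbitrarily on the kernel, so that the factor $|\X|$ appearing in front of $(\X/|\X|)^{i-j}$ annihilates any ambiguity. With that convention in place, the remaining manipulation is a spectral rearrangement that mirrors the displayed calculation in the proof of Lemma~\ref{lemma:bell-twisting}.
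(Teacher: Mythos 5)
Your proposal is correct and takes essentially the same route as the paper: it rests on \Cref{lemma:bell-twisting} together with the identification $Y=\sigma U_\sigma$, so that $|Y|=\sigma$ and $Y/|Y|=U_\sigma$ (with the pseudo-inverse convention on the support). The paper's own proof records only this one-line forward identification, so your explicit verification that $Y^{2^m}\geq 0$ via $U_\sigma^{2^m}=\one$, and your spectral-decomposition argument for the converse, simply fill in details the paper leaves implicit.
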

        \begin{proof}
            Set $Y = \sigma U_\sigma$.  Then $|Y|=\sigma$, $\frac{Y}{|Y|} = U_\sigma$ and
            \[U_\sigma^i \sigma U_\sigma^{-j} = \sigma U_\sigma^{i-j}
            = |Y| \quantity(\frac{Y}{|Y|})^{i-j}\;.\qedhere\]
        \end{proof}
        \noindent In \Cref{lemma:bellpdits-blockform},
        $\X^{-1}$ is intended as pseudo inversion of matrices
        so that $\X$ need not to be full rank.
        Note how in the corollary $m=1$ the corollary implies  $\X=\X^\dagger$.
    \end{module}

    \begin{module}
        \providecommand{\bp}{\mathbf{p}}
        \providecommand{\obgamma}{\mathring\gamma_{{\bell}}}
        \providecommand{\hbgamma}{\hat\gamma_{{\bell}}}
        \providecommand{\all}{{AB}}
        We now consider some simple entropic properties of {\privatebit}s.
        For this purpose let us introduce the following state
        \[\mathring\gamma = \gamma _\key \otimes \gamma _\shield\]
        namely the tensor product of the key and shield marginals of a {\privatebit}.
        For {\bell} {\privatebit}s this has a form similar to the {\keyattacked}:
        \[\obgamma = \sum \nolimits_\kindex p_\kindex \phi _{0\kindex} \otimes \sigma\]
        which indeed gives $\obgamma = \hbgamma$ for uniform probability distributions $\bp$.
        
        We can summarize the difference between $\mathring\gamma$
        and the {\keyattacked} $\hat\gamma$
        using quantum relative entropies ($D(\rho\parallel\sigma) = \tr[\rho\log\rho-\rho\log\sigma]$)
        as follows:
        \begin{align*}
        D(\bgamma^m\|\hbgamma^m)
        &= D(\Phi \|\hat\Phi)
        = m
        \\
        D(\bgamma\|\obgamma)
        &= I(\key{:}\shield)_{\bgamma}
        = H(\bp)
        \end{align*}
        where
        $ I(\key{:}\shield)$ is the quantum mutual information
        and $H(\bp)$ is the entropy of $\bp$.
        Since $\bgamma$ commutes with both $\hbgamma$ and $\obgamma$,
        these values are achieved also
        by performing a global measurement first
        and then computing the classical relative entropies~\cite{petz,variational}.
        In both cases the optimal measurement  operators are
        $\{M_{ab\kindex}\} = \{P_{\phi _{ab}} \otimes P_{\sigma _\kindex}\}$.
    \end{module}

    \begin{module}
        \providecommand{\bp}{{\mathbf{p}}}
        Another simplification of {\bell} {\privatebit}s with respect to
        general {\privatebit}s involves the expression
        for the distillable entanglement in $\key$:
        \begin{lemma}%
            \label{lemma:keymarginal}
            For all {\bell} {\privatebit}s $\bgamma^m$ it holds
            \[E_D(\gamma ^m _{\bell,\key}) = m - H(\bp)\]
            where $\gamma ^m _{\bell,\key} = \tr _\shield \bgamma^m$.
        \end{lemma}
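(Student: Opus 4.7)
The plan is to compute the key marginal explicitly and then sandwich $E_D$ between matching lower and upper bounds. Tracing the shield systems of $\bgamma^m = \sum_k p_k\, \phi_{0k} \otimes \sigma_k$ gives
\[
    \gamma^m_{\bell,\key}
    = \sum_k p_k\, \phi_{0k},
\]
a mixture of $2^m$ mutually orthogonal maximally entangled states. Its marginals on $\keyA$ and $\keyB$ are maximally mixed (since each $\phi_{0k}$ has maximally mixed local reductions), and its von Neumann entropy equals $H(\mathbf{p})$ because the $\phi_{0k}$ are orthogonal pure states.

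For the lower bound I would invoke the hashing protocol of Devetak--Winter~\cite{dw,dist_e}, which distills entanglement from states diagonal in an orthonormal basis of maximally entangled states at the rate of the coherent information. The coherent information here reads $S(\tr_{\keyB}\gamma^m_{\bell,\key}) - S(\gamma^m_{\bell,\key}) = m - H(\mathbf{p})$, giving $E_D(\gamma^m_{\bell,\key}) \ge m - H(\mathbf{p})$.

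For the upper bound I would use the relative entropy of entanglement, which upper bounds $E_D$. A convenient separable state is the measured one: all $\hat\phi_{0k}$ coincide with $\hat\Phi = 2^{-m} \sum_i \proj{ii}$, so $\hat\gamma^m_{\bell,\key} = \hat\Phi$, and $\hat\Phi$ is separable. Since $\gamma^m_{\bell,\key}$ is supported on the maximally correlated subspace, on which $\log\hat\Phi$ acts as $-m\,\one$, one directly computes $D(\gamma^m_{\bell,\key} \parallel \hat\Phi) = -H(\mathbf{p}) + m$, and hence $E_D(\gamma^m_{\bell,\key}) \le m - H(\mathbf{p})$.

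The two bounds match, giving the claim. The only subtle point is confirming that hashing applies to this qudit Bell-diagonal state even though its support contains only $2^m$ of the $2^{2m}$ Bell basis elements; this is standard since the Devetak--Winter protocol applies to any state diagonal in an orthonormal basis of maximally entangled states, and the coherent information remains the achievable rate.
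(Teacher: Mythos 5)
Your proof is correct, and it takes a slightly different route from the paper. The paper does not prove the matching upper bound itself: it invokes the known result (Hiroshima--Hayashi, cited as~\cite{e-maxcorr}) that for any state supported on the maximally correlated subspace the distillable entanglement \emph{equals} the hashing bound $H(\keyB)-H(\key)$, and then simply evaluates the two entropies, exactly as you do ($H(\keyB)=m$ from the maximally mixed marginal, $H(\key)=H(\bp)$ from orthogonality of the $\phi_{0\kindex}$). You instead make the argument self-contained by sandwiching: the Devetak--Winter coherent-information lower bound (which indeed applies to arbitrary states, so your closing worry about the support covering only $2^m$ of the $2^{2m}$ Bell states is correctly dismissed), and the upper bound $E_D\leq E_R\leq D(\gamma^m_{\bell,\key}\,\|\,\hat\Phi)=m-H(\bp)$, using that $\hat\Phi$ is separable and that $\log\hat\Phi$ acts as $-m\,\one$ on the maximally correlated subspace. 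Your explicit upper-bound computation is essentially the content of the cited result specialized to this state, so the two proofs use the same ingredients; yours buys independence from the citation at the cost of invoking (and implicitly relying on) the standard chain $E_D\leq E_R^\infty\leq E_R$, which you should state as the justification for the upper bound rather than leaving it implicit.
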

        \begin{proof}
            For all {\privatebit}s, the reduced state of $\key$ has support only
            on the maximally correlated subspace.
            For such reduced states,it has been shown~\cite{e-maxcorr}
            that the distillable entanglement is equal to the hashing bound~\cite{dw}:
            \[E_D(\tr_\shield \gamma)
            = H(\keyB)_{\gamma} - H(\key)_{\gamma}\;.\]
            For {\bell} {\privatebit}s we have
            \[\tr_\shield \bgamma^m = \sum \nolimits_\kindex p_\kindex\phi _{0\kindex}\;.\]
            The marginal of $\keyB$ will be completely mixed
            so $H(\keyB)_{\gamma}$ will be maximal,
            while $H(\key)_{\gamma}$ is  the entropy of $\bp$:
            \[E_D(\tr_\shield \bgamma^m) = m - H(\bp)\;. \qedhere\]
        \end{proof}
        \noindent
        Of particular interest is the case of uniform $\bp$, then:
        \[E_D(\tr_\shield \bgamma)=0\]
        which is independent of the key systems size.
    \end{module}

\end{module}


\section{{Relative Entropies}}

\begin{module}
    \providecommand{\measurements}{{\mathbb{M}}}
    \providecommand{\maps}{{\mathbb{L}}}
    \providecommand{\qmap}{{\Lambda}}
    \providecommand{\states}{{P}}
    \providecommand{\state}{{\sigma}}
    The \emph{quantum relative entropy} between two states $\rho$ and $\sigma$ is defined as:
    \[D(\rho\parallel\sigma) \coloneqq
    \tr \rho[\log\rho - \log\sigma] 
    \;.\]
    Some interesting entanglement measures are defined using the relative entropy.
    %
    \begin{definition}
        [\cite{relentropy}, Relative entropy of entanglement]
        \[E_R(\rho) \coloneqq \inf _{\sigma \in SEP} D( \rho \parallel \sigma )\]
    \end{definition}
    
    \noindent
    This was generalized as follows.
    
    \begin{definition}
        [\cite{piani}, Relative entropy with respect to $\bm \states$]
        \label{def:relent-P}
        Let $\states$ be any set of states $\states$ containing at least one full rank state.
        Define:
        \[E^\states_R(\rho) \coloneqq \inf _{\sigma \in \states} D( \rho \parallel \sigma )\;.\]
    \end{definition}
    
    \noindent
    By taking the relative entropy with respect to separable states,
    we recover the relative entropy of entanglement.
    The relative entropy with respect to $P$ is asymptotically continuous~\cite{relent-continuity},
    a useful property needed to prove our bounds on rates.
    We will restate asymptotic continuity here in the form we need it,
    using the improved bounds from~\cite[Lemma 7]{tight-continuity}.
    
    \begin{lemma}[\cite{tight-continuity}]%
        \label{lemma:continuity}
        Let $C=\sup_\rho E_R^\states(\rho)$. Then:
        \[|E_R^\states(\rho) - E_R^\states(\sigma) |
        \leq {C}\eps - \eta(\eps) + \eta(1+\eps) \]
        where $\eps=\frac{1}{2}\norm{\rho-\sigma}_1$ and $\eta(x) = x\log x$.
    \end{lemma}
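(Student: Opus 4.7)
The plan is to apply the Alicki--Fannes--Winter coupling argument. I would first set $\eps=\tfrac{1}{2}\|\rho-\sigma\|_1$ and Jordan-decompose $\rho-\sigma=\Delta_+-\Delta_-$, with $\Delta_\pm\geq 0$ orthogonal and each of trace $\eps$. Setting $\omega_\pm\coloneqq\Delta_\pm/\eps$ and $\lambda\coloneqq 1/(1+\eps)$, the state
\[
\omega = \lambda\rho+(1-\lambda)\omega_- = \lambda\sigma+(1-\lambda)\omega_+
\]
couples $\rho$ and $\sigma$ through a common convex mixture.

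The main step is to establish two inequalities for $E_R^\states$. The first is ordinary convexity, inherited from joint convexity of $D(\cdot\|\cdot)$ together with convexity of $\states$. The second is a ``concavity up to entropy'' inequality: for any convex decomposition $\omega=\sum_i p_i\rho_i$,
\[
E_R^\states(\omega) \;\geq\; \sum_i p_i\, E_R^\states(\rho_i) - H(\{p_i\}).
\]
I would prove this in one line by picking $\tau^\star\in\states$ optimal for $\omega$, expanding $D(\omega\|\tau^\star)=\tr\omega\log\omega-\tr\omega\log\tau^\star$, using concavity of the von Neumann entropy $H(\omega)\leq\sum_i p_i H(\rho_i)+H(\{p_i\})$, and bounding $D(\rho_i\|\tau^\star)\geq E_R^\states(\rho_i)$ on the resulting cross term.

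Combining convexity on one side of the coupling with concavity-up-to-entropy on the other then yields
\[
\lambda E_R^\states(\sigma)+(1-\lambda)E_R^\states(\omega_+) \;\geq\; E_R^\states(\omega) \;\geq\; \lambda E_R^\states(\rho)+(1-\lambda)E_R^\states(\omega_-) - h_b(\lambda),
\]
where $h_b(x)=-\eta(x)-\eta(1-x)$ is the binary entropy. Bounding $E_R^\states(\omega_\pm)\leq C$ and rearranging gives $E_R^\states(\rho)-E_R^\states(\sigma)\leq \tfrac{1-\lambda}{\lambda}\,C+\tfrac{h_b(\lambda)}{\lambda}$; interchanging $\rho\leftrightarrow\sigma$ produces the reverse inequality. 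The proof concludes with the elementary identities $(1-\lambda)/\lambda=\eps$ and $h_b(\lambda)/\lambda=(1+\eps)\,h_b(\eps/(1+\eps))=\eta(1+\eps)-\eta(\eps)$, which convert the bound into the stated form. The concavity-up-to-entropy inequality is the only nontrivial ingredient; everything else is algebra matching Winter's sharp constants.
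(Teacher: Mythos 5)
Your proposal is correct and is essentially the paper's own route: the paper proves nothing itself but cites Winter's Lemma 7, whose proof is exactly this Alicki--Fannes--Winter coupling $\omega=\lambda\rho+(1-\lambda)\omega_-=\lambda\sigma+(1-\lambda)\omega_+$ with $\lambda=1/(1+\varepsilon)$, combined with convexity and the almost-concavity (up to the mixing entropy) of the restricted relative entropy, which you verify correctly. The only nitpick is terminological: the inequality $H(\sum_i p_i\rho_i)\leq\sum_i p_i H(\rho_i)+H(\{p_i\})$ is the standard upper bound on the entropy of a mixture, not ``concavity,'' but the inequality you use is the right one and the constants work out to $C\varepsilon-\eta(\varepsilon)+\eta(1+\varepsilon)$ as claimed.
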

    In particular for $\states\supseteq SEP_{A:B}$ and $d=\min\{|A|,|B|\}$,
    we can always take $C=\log d$.
    
    For an arbitrary set of measurements $\measurements$
    we define the \emph{$\measurements$-relative entropy}~\cite{piani}.
    However, for now we allow any set of quantum maps $\maps$
    (completely positive trace preserving maps) in the definition,
    as opposed to~\cite{piani} where $\maps$ only contains measurements.
    
    \begin{definition}[\cite{piani}, $\maps$-relative entropy]%
        \label{def:M-relent}
        Let $\maps$ be any set of quantum maps. Define
        \[D_\maps (\rho\parallel\sigma)
        = \sup _{\Lambda \in \maps}
        D(\Lambda(\rho)\parallel \Lambda(\sigma))\;.\]
    \end{definition}
    Originally the above relative entropy
    was defined only for sets of measurements $\measurements$
    However, for now we will allow any set of quantum maps
    (completely positive trace preserving maps) in the definition,
    as opposed to~\cite{piani} where $\maps$ only contains measurements.
    Combining \Cref{def:relent-P,def:M-relent} we obtain the following definition.
    
    \begin{definition}
        [\cite{piani}, $\maps$-relative entropy with respect to~$\states$]%
        \label{def:M-relent-P}
        Let $\maps$ be any set of quantum maps
        and let $\states$ be any set of states $\states$ containing at least one full rank state. 
        Define
        \[E _{R,\maps}^\states (\rho) =
        \inf _{\sigma \in \states} D_\maps (\rho\parallel\sigma)\;.\]
    \end{definition}

    
    \setlength{\tabcolsep}{6pt}
    \begin{table*}[p]
        \begin{tabular}{lll} 
            \toprule
            \multicolumn{2}{c}{Notation} & \multicolumn{1}{c}{Meaning}
            \\ \cmidrule(r){1-2}
            Full & Short & 
            \\ \midrule
            $\key$& & The key systems of Alice and Bob, $|\keyA|=|\keyB|$.
            We define $m=\log_2 |\keyA|$.
            \\ \midrule 
            $\shield$ & 
            & {The shield systems of Alice and Bob.
                In most the examples we use $|\shieldA|=|\shieldB|=d$.}
            \\ \midrule 
            $\rho,\hat\rho$ & 
            & A state on $\key\shield$ and its {\keyattacked} 
            \\ \midrule 
            $\measurement$ &
            & A measurement.
            \\ \midrule 
            $\Lambda$ &
            & A map/protocol.
            \\ \midrule 
            $LO_A, LO_B$ &
            & {Local operations: any quantum channel acting only
                on Alice's or Bob's systems.}
            \\ \midrule 
            \specialcell[c]{$\locc_\ab$\\$\locc_\atb$} &
            & Local Operations with two-way
            and one-way Classical Communication.
            \\ \midrule 
            $\measurement \in LO_A$ & $M_A$ 
            & A measurement at Alice's side.
            \\ \midrule 
            \specialcell[c]{$\measurement\in\locc_\ab$\\$\measurement\in\locc_\atb$}
            & \specialcell[c]{$\measurement_\ab$\\$\measurement_\atb$}
            & A measurement in $\locc_\ab$ and $\locc^\to_\ab$ respectively.
            \\ \midrule 
            $\measurement\in\locc_{\measurements\ab}$
            & $\measurement_{\measurements\ab} $ 
            & A partial measurement in $\locc_\ab$,
            Alice needs to measure, but not Bob.
            \\ \midrule 
            \specialcell[c]{$\Lambda\in\locc_\ab$\\$\Lambda\in\locc_\atb$}
            & \specialcell[c]{$\Lambda_\ab$\\$\Lambda_\atb$}
            & A protocol in $\locc_\ab$ and $\locc_\atb$ respectively
            \\ \midrule 
            $D_A$ & & \specialcell{Relative entropy restricted to partial measurements
                $\maps = \measurements \otimes \id_{A}$,
                where $\measurements=LO_A$.}
            \\ \midrule 
            $D_\atb$ & 
            & Relative entropy restricted to measurements in $\measurements=\locc_\atb$.
            \\ \midrule 
            $E_{R,\atb}$ & 
            & Relative entropy with measurements
            in $\measurements=\locc_\atb$ and $\states=SEP_\ab$.
            \\ \bottomrule
        \end{tabular}
        \stepcounter{figure}
        \caption{\label{table:notation}%
            General notation for states, maps, measurements, parties, etc.}
    \end{table*}
    
    \setlength{\tabcolsep}{4pt}
    \begin{table*}[p]
        \repeatersystems%
        \begin{tabular}{llp{0.718\linewidth}}
            \toprule
            \multicolumn{2}{c}{Repeater notation} & \multicolumn{1}{c}{Meaning}
            \\ \cmidrule(r){1-2}
            Full & Short & 
            \\ \midrule
            $\charlieA\charlieB$ & $\charlie$
            & {The parties on Charlie's side.
                $\charlieA$ and $\charlieB$ share entanglement 
                with Alice and Bob respectively.}
            \\ \midrule 
            $\keyof\charlieA\keyof\charlieB$ & $\keyof\charlie$
            & The key systems of Charlie's parties.
            \\ \midrule 
            $\shieldof\charlieA\shieldof\charlieB$ & $\shieldof\charlie$
            & The shield systems of Charlie's parties.
            \\ \midrule 
            $\sigma\in\sep_\accb$ & $\sigma_\accb$
            & Quadri-separable states of Alice, Bob, and Charlie's parties.
            \\ \midrule 
            $\sigma\in\sep_\repcutA$ & $\sigma_\repcutA$ 
            & Separable states between Alice
            and the joint systems of Charlie and Bob.
            \\ \midrule 
            $\sigma\in\sep_\repcutB$ & $\sigma_\repcutB$
            & Separable states between Bob and the joint systems of Charlie and Alice.
            \\ \midrule 
            $\locc_\acb$ & 
            & Tripartite LOCC of Alice, Charlie and Bob.
            \\ \midrule 
            $\locc_\ctab$ & 
            & {Tripartite LOCC with only one-way communication from Charlie to Alice/Bob.
                Alice and Bob can communicate freely with each other.}
            \\ \midrule 
            $\measurement\in LO_\charlie$ & $\measurement_\charlie$ 
            & A partial measurement at Charlie's.
            \\ \midrule 
            \specialcell[c]{$\measurement\in\locc_\acb$\\$\measurement\in\locc_\ctab$} 
            & \specialcell[c]{$\measurement_\acb$\\$\measurement_\ctab$}
            & A measurement in $\locc_\acb$ and $\locc_\ctab$ respectively.
            \\ \midrule 
            \specialcell[c]{$\Lambda\in\locc_\acb$\\$\Lambda\in\locc_\ctab$} 
            & \specialcell[c]{$\Lambda_\acb$\\$\Lambda_\ctab$}
            & A protocol in $\locc_\acb$ and $\locc_\ctab$ respectively.
            \\ \midrule 
            $D_\charlie$ & 
            & {Relative entropy restricted to partial measurements
                $\maps = \measurements \otimes \id_{AB}$,
                where $\measurements=LO_\charlie$.}
            \\ \midrule 
            $D_\ctab$ & 
            & Relative entropy restricted to measurements in
            $\measurements= \locc_\ctab$.
            \\ \midrule 
            $E_{R,\ctab}^{\accb} $ & 
            & Relative entropy with $\states=\sep_\accb$
            and measurements in $\measurements=\locc_\ctab$.
            \\ \midrule 
            \specialcell[c]{$E_{R,\ctab}^{\repcutA}$\\$E_{R,\ctab}^{\repcutB}$} & 
            & {Relative entropy with $\states=\sep_\repcutA,\sep_\repcutB$
                and measurements in $\measurements=\locc_\ctab$.}
            \\ \midrule 
            \specialcell{$E_{R,\measurements\cab}^\repcutA$\\$E_{R,\measurements\cab}^\repcutA$} &
            & {Relative entropy with $\states=\sep_\repcutA,\sep_\repcutB$
                and $\maps = \locc_{\measurements\ctab}$,
                which are LOCC protocols of $\cab$ followed by a measurement at Charlie.}
            \\ \bottomrule
        \end{tabular}
        \stepcounter{figure}
        \caption{\label{table:notation-repeater}%
            Additional notation regarding the repeater setting.} 
    \end{table*}

    
    \section{{Regularized relative entropies}}

    Given a function on states $f$
    we can consider the standard regularization
    \[f^\infty(\rho) = \lim_{n\to\infty} \frac{1}{n} f(\rho^{\otimes n})\]
    whenever the limit is well defined.
    Similarly for functions $g$ of two states
    we can consider a regularization
    \[g^\infty(\rho,\sigma) = \lim_{n\to\infty} \frac{1}{n} g(\rho^{\otimes n}, \sigma^{\otimes n})\]
    again whenever the limit is well defined.
    
    Fekete's lemma~\cite{fekete} guarantees that the regularization will be well defined,
    at least for the classes of partial measurements that we consider
    in \Cref{lemma:datahiding-oneway} and
    \Cref{lemma:repeatersep-oneway:appendix}.
    We call a sequence $D_n$ \emph{super-additive} if it satisfies
    \begin{equation}
    \forall n,m: D_{m+n} \geq D_n+D_m
    \;.
    \label{eq:monotonicity-sum}
    \end{equation}
    The lemma states that if a sequence $D_n$ is super-additive,
    then $\frac{1}{n} D_n$ either converges or diverges to infinity,
    more specifically
    $\lim_{n\to\infty} \frac{1}{n}D_n
    = \sup_{n\in\naturals} \frac1{n}{D_n}$.

    \begin{module}
        \newcommand{\hilbert}{{\mathcal{H}}}
        \newcommand{\density}[1]{{\mathcal{D} (#1)}}
        \noindent
        Fekete's lemma guarantees that we can regularize
        the various relative entropies defined above,
        as long as $\maps$ and $\states$
        make these relative entropies super-additive.
        This is the case for classes of maps like $\locc$
        that are defined for states $\rho$ and $\rho^{\otimes n}$ alike,
        and that map separable states to separable state.
        To make this rigorous we need to define $\maps$ and $\states$
        for every $n$ and therefore we need to
        explicitly consider the Hilbert space $\hilbert$
        and the set of density matrices $\density\hilbert$ for $n=1$;
        let $\hilbert$ be finite dimensional.
        
        We say that $\maps$ is a \emph{class closed under tensor products},
        or simply \emph{class}, of quantum maps (or measurements)
        if it is a sequence of $\maps_n$,
        each a set of quantum maps on $\density{\hilbert\tensor n}$, such that: 
        \[\forall\ i,j,
        \ \Lambda\in\maps_i,
        \ \Gamma \in\maps_j:
        \quad
        \Lambda\otimes\Gamma\in\maps_{i+j}
        \;.\]
        In such case it is easy to check that
        $D_{\maps_{n}}(\rho^{\otimes n}\|\sigma^{\otimes n})$
        satisfies \Cref{eq:monotonicity-sum},
        \begin{align*}
        D_{\maps_{i+j}} (\rho\tensor{i+j}\|\sigma\tensor{i+j})
        \geq D_{\maps_{i  }} (\rho\tensor{i  }\|\sigma\tensor{i  })
        +   D_{\maps_{  j}} (\rho\tensor{  j}\|\sigma\tensor{  j})\;,
        \end{align*}
        and thus we can define the following regularization.
        
        \begin{definition}[Regularized $\maps$-relative entropy]%
            \label{def:reg-M-relent}
            Let $\maps$ be a class of quantum maps closed under tensor products.
            Then, define:
            \[D_\maps ^\infty (\rho\parallel\sigma) =
            \lim _{n\to\infty} \smash{\frac{1}{n} }
            D_{\maps_n}(\rho^{\otimes n}\parallel \sigma^{\otimes n})\;.\]
        \end{definition}

        We now restrict $\maps$ to be a class of $\measurements$ measurements only,
        so that we can regularize $D_\measurements^\states$.
        As done in~\cite{piani}, we also need to impose further conditions on $\states$
        with respect to $\measurements$ and on $\states$ itself.
        
        We say that $\states$ is a \emph{class closed under partial trace},
        or simply \emph{class}, of states,
        if it is a sequence of $\states_n$,
        each a convex set of states in $\density{\hilbert\tensor n}$,
        such that for all $n$:
        \[\sigma\in \states_n
        \quad\Rightarrow\quad
        (\tr_I \sigma \in \states_{n-|I|}
        \quad\forall\ I\subset\{1\dots n\})\]
        where $\tr_I$ is the partial trace over
        the systems designated by the index set $I$.
        Furthermore, let $\measurements$ be a class of measurements,
        we say that a class of states $\states$ is \emph{closed under $\measurements$},
        if for all $n>m$, it holds that 
        \[\begin{cases}
        \sigma\in \states_n
        \\
        \measurement\in\measurements_m
        \end{cases}
        \Rightarrow\quad
        \frac{\tr_I M^I_k \sigma}{\tr M_k^I \sigma} \in \states_{n-m}
        \quad\forall\ k,|I|=m
        \]
        where $\{M_k\}$ are the measurements operators of $\measurement$
        and $M^I_k$ are the operators acting on systems $I$
        ($M_k^I \equiv M_k^I \otimes \one$, 
        where the identity acts on the complement of $I$).
        These properties are enough to guarantee
        that the $\measurement$-relative entropy with respect to $\states$
        also satisfies \Cref{eq:monotonicity-sum}.
        While the original statement only considers separable states,
        PPT states, separable measurements and LOCC measurements,
        the same exact proof carries over to general $\states$ and $\measurements$
        
        \begin{lemma}
            [\protect{\cite[Theorem 2(d)]{piani}}]
            Let $\measurements$ be a class of measurements
            and let $\states$ be a class of states closed under $\measurements$.
            Then:
            \[E_{R,\measurements_{m+n}}^\states(\rho\tensor{m+n})
            \geq E_{R,\measurements_{m}}^\states(\rho\tensor{m})
            +  E_{R,\measurements_{n}}^\states(\rho\tensor{n})\;.\]
        \end{lemma}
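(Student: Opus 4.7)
The plan is to deduce the superadditivity of $E_{R,\measurements}^{\states}$ from the classical chain rule for relative entropy, evaluated on product measurements. I would fix an arbitrary infimum target $\sigma\in\states_{m+n}$ and test it against measurements of the form $\measurement_1\otimes\measurement_2$ with $\measurement_1\in\measurements_m$ and $\measurement_2\in\measurements_n$, which sit in $\measurements_{m+n}$ by tensor-product closure of the class. Writing the outcome distributions as $p(k,\ell)=\tr[(M_k\otimes N_\ell)\rho\tensor{m+n}]$ and $q(k,\ell)=\tr[(M_k\otimes N_\ell)\sigma]$, the classical chain rule yields
\[
D(p\|q)=D(p_K\|q_K)+\sum\nolimits_k p_K(k)\,D(p_{L|k}\|q_{L|k}).
\]

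For the marginal term I would exploit that $\rho\tensor{m+n}$ is a product, so $p_K(k)=\tr[M_k\rho\tensor m]$, while $q_K(k)=\tr[M_k\sigma^{(m)}]$ with $\sigma^{(m)}=\tr_{[n]}\sigma\in\states_m$ by closure of $\states$ under partial trace. The marginal term therefore equals $D(\measurement_1(\rho\tensor m)\|\measurement_1(\sigma^{(m)}))$, and taking $\sup$ over $\measurement_1$ gives a quantity bounded below by $E_{R,\measurements_m}^{\states}(\rho\tensor m)$, since $\sigma^{(m)}$ is a valid element of the infimum set.

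For the conditional term, product structure again forces $p_{L|k}=p_L$ to be independent of $k$, while $q_{L|k}$ is $\measurement_2$ applied to the conditional state $\sigma_k^{(n)}=\tr_{[m]}[(M_k\otimes\one)\sigma]/q_K(k)$, which lies in $\states_n$ by closure of $\states$ under $\measurements$. The genuine subtlety is that the convex weights $p_K(k)$ come from $\rho$ whereas the conditional states $\sigma_k^{(n)}$ come from $\sigma$, so the sum is not literally the relative entropy against one element of $\states_n$. I would resolve this via joint convexity of the relative entropy: the sum is at least $D(\measurement_2(\rho\tensor n)\|\measurement_2(\bar\sigma_n))$ with $\bar\sigma_n=\sum_k p_K(k)\,\sigma_k^{(n)}$. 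Since $\sum_k p_K(k)=1$ and each $\states_n$ is convex by hypothesis, $\bar\sigma_n\in\states_n$, so the $\sup$ over $\measurement_2$ exceeds $E_{R,\measurements_n}^{\states}(\rho\tensor n)$.

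Because the marginal contribution is independent of $\measurement_2$ and the conditional bound just obtained is uniform in $\measurement_1$, the two suprema separate, giving for every $\sigma\in\states_{m+n}$
\[
D_{\measurements_{m+n}}(\rho\tensor{m+n}\|\sigma)\geq E_{R,\measurements_m}^{\states}(\rho\tensor m)+E_{R,\measurements_n}^{\states}(\rho\tensor n).
\]
Taking the infimum over $\sigma\in\states_{m+n}$ then produces the claimed superadditivity. The main obstacle is precisely the $\rho$-versus-$\sigma$ mismatch in the chain-rule decomposition, and the convexity of each $\states_n$ is exactly what is needed to bundle the conditional states $\sigma_k^{(n)}$ into a single admissible $\bar\sigma_n$; the edge cases $q_K(k)=0$ are harmless since they either force $D(p\|q)=+\infty$ or contribute nothing to the sum.
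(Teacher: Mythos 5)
Your proof is correct and follows essentially the same route as the paper, which simply invokes Piani's Theorem 2(d) and notes that its proof carries over verbatim: that proof is exactly your chain-rule decomposition on a product measurement $\measurement_1\otimes\measurement_2$, with the marginal term handled by closure of $\states$ under partial trace and the conditional term handled by convexity of the relative entropy in its second argument together with closure of $\states$ under $\measurements$ and convexity of each $\states_n$. Your treatment of the $\rho$-versus-$\sigma$ mismatch and of the zero-probability outcomes matches the standard argument, so there is nothing to add.
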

        
        
        Thanks to this lemma the following regularization is now well defined.
        
        \begin{definition}%
            [Regularized $\measurements$-relative entropy with respect to $\states$]%
            \label{def:reg-M-relent-P}
            Let $\measurements$ be a class of measurements
            and let $\states$ be a class of states closed under $\measurements$.
            Define:
            \[E _{R,\measurements}^{\infty,\states} (\rho) =
            \lim _{n\to\infty} \smash{\frac{1}{n} }
            E _{R,\measurements}^\states(\rho^{\otimes n})\;.\]
        \end{definition}
    \end{module}

    Notice that the class of PPT states
    is closed under the class of PPT measurements
    and consequently under all subclasses of measurements,
    like separable and LOCC measurements.
    Similarly the class of separable states
    is closed under the class of separable measurements,
    and consequently under LOCC measurements.
    Thus \Cref{def:reg-M-relent-P} is always well defined for
    the above combinations of states and measurements.
    
    As it is usually done, we will omit the fact that
    the classes of states and quantum maps/measurements
    are actually sequences in such regularized quantities,
    therefore we will drop the index $n$.
    We will use $\sigma_\states$ and $\measurement_\measurements$
    as short hand notation for $\sigma\in\states$
    and $\measurement\in\measurements$, respectively.
    See \Cref{table:notation} for
    a more detailed list of symbols and notations.
    
    Finally, we highlight that we do not have yet
    a general \Cref{def:reg-M-relent-P}
    for arbitrary classes of quantum maps.
    This in particular includes classes of partial measurements,
    where only some parties are forced to measure their systems.
    Already for partial measurements,
    the only regularized definition that we can use for now
    is \Cref{def:reg-M-relent}.
    
    %
\end{module}%


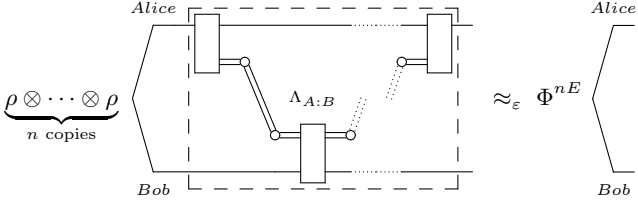
\begin{figure}
    \renewcommand{\achievedrate}{E}
    \newcommand{\circuitC}{.85em}
    \newcommand{\circuitR}{.5em}
    
    \begin{module}
        \providecommand{\circuitC}{.7em}
        \providecommand{\circuitR}{.3em}
        \providecommand{\circuitRfix}{1em}
        \providecommand{\target}{\Phi^{n\achievedrate}}
        \providecommand{\dw}{\ar@{.} [0,-1]&\ar@{.} [0,-1]} 
        
        \[
        \underbrace{\rho\otimes \dots \otimes \rho}_{n\text{ copies}}\hspace{0.25em}
        {\begin{gathered}\Qcircuit  @C=\circuitC @R=\circuitR @!R=\circuitRfix {
                &\ustick{\scriptstyle  {Alice}}
                &\qw&\multigate{1}\empty         &&&&\qw[-4] &\dw&\multigate{1}\empty&\qw
                \\
                &&&\pureghost\empty&\controlo\cw&&&&&\controlo&\pureghost\empty\cw
                \\\ar @{-} [-2,1] \ar @{-} [2,1]&&&&&&\scriptstyle\distillmaps&&&\\
                &&&&&\controlo\ar@{=}[-2,-1]&\pureghost\empty\cw&\controlo\cw&&&
                \\
                &\dstick{\scriptstyle  {Bob}  }
                &&&\push{\rule{0em}{1em}}&\qw[-4]&\multigate{-1}\empty&\qw&\dw&\push{\rule{0em}{.9em}}&\qw[-2]
                \gategroup{1}{4}{5}{11}{.5em}{--}
                \save "2,9"."3,8"  \ar@{:}"4,8" \restore 
                \save "4,9"."3,10" \ar@{:}"2,10"\restore 
            }\end{gathered}}
        \;\;\approx _\eps \;
        \target\hspace{0.25em}
        \begin{gathered}\Qcircuit  @C=\circuitC @R=\circuitR @!R=\circuitRfix {
            &\ustick{\scriptstyle  {Alice}}&\qw
            \\\\\ar @{-} [-2,1] \ar @{-} [2,1]\\\\
            &\dstick{\scriptstyle  {Bob}  }&\push{\rule{0pt}{1.075em}}\qw
        }\end{gathered}
        \]
    \end{module}
    \caption{\label{figure-distillation}%
        Quantum circuit for entanglement distillation.
        The circuit in the dashed box is the map $\Lambda\in\locc_\ab$,
        optimized to output $\Phi^{n\achievedrate}$
        with the highest $n\achievedrate$ possible.}
\end{figure}


\section{{Entanglement and Key distillation}}
\label{sec:distillation-rates}

\begin{module}
    
    When performing distillation,
    the goal is to approximate a desired output state
    by acting on the input via the allowed operations.
    Namely, four elements define a rate:
    the input state, the target states, the allowed operations
    and a measure of approximation.
    Beyond this, all the rates have a common structure.
    Usually, trace norm distance is used to quantify the approximation between two states.
    We will use the notation
    \[\varrho \approx_\eps \varsigma
    \qquad\Leftrightarrow\qquad
    \frac{1}{2} \norm{\varrho - \varsigma}_1 \leq \eps\;.\]
    In what follows we will use the same convention as in~\cite{limits}
    in the definitions of rates, namely all rates are defined as a single formula.
    
    \begin{module}
        \renewcommand{\achievedrate}{E}
        The distillable entanglement of a state $\rho$
        is then formally defined as the \textrate{{\eprbit}s}
        under bipartite LOCC~\cite{logneg}:
        \[E_D(\rho_{AB}) =\distillrate{\rho_{AB}}{\Phi_{AB}}\]
        as illustrated in \Cref{figure-distillation}.
    \end{module}
    
    The distillable key is defined as the \textrate{{\secretbit}s}.
    However it also equals the \textrate{{\privatebit}s}
    under bipartite LOCC~\cite{secure}:
    \[K_D(\rho) = \distillrate\rho\gamma\;.\]
    Note how the expression is almost
    the same as for distillable entanglement,
    the only difference are the desired output states.
    Using the reversible map
    of \Cref{lemma:reversible-rho} 
    it is now also possible to write the distillable key
    as the \textrate{{\bell} {\privatebit}s}:
    \[K_D(\rho) = \distillrate\rho\bgamma\;.\]
    
    \smallskip
    While the target state defines the kind of resource being measured
    (pure entanglement, key, \dots),
    changing the available protocols produces variations of these quantities
    that reflect different scenarios, like in the case of the 
    %
    one-way distillable entanglement.
    This is:
    \renewcommand\achievedrate{E}
    \renewcommand\distillmaps{\distillmap_\atb}
    \[E_D^\to(\rho) =\distillrate\rho\Phi\]
    where the maps are restricted to one-way $\locc$.
    Just like there is a regularized $\locc$ restricted
    relative entropy of entanglement upper bound on $E_D$,
    the same proof carries over to $E_D^\to$ by simply restricting to one-way $\locc$.
    Here below, we give an explicit proof of such bound.
    
    \begin{lemma}[\cite{Emeasures}]%
        \label{lemma:regularized-1Er}
        For any state $\rho$ and any separable state $\sigma$, it holds:
        \[D_A^\infty(\rho\parallel\sigma)
        \geq E_{R,\atb}^\infty(\rho) 
        \geq E_D^\to(\rho)
        \;.\]
    \end{lemma}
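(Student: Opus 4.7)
The statement splits into two independent inequalities. For the first inequality $D_A^\infty(\rho\|\sigma) \geq E_{R,\atb}^\infty(\rho)$, the plan is to appeal to data processing. A partial measurement $\measurement \otimes \id_B$ on Alice's side can always be extended to a full measurement in $\locc_\atb$ by having Bob subsequently measure (possibly conditioned on Alice's outcome), and the extra processing can only decrease the relative entropy. Hence $D_A(\rho\|\sigma) \geq D_\atb(\rho\|\sigma)$, and since $\sigma$ is separable, $D_\atb(\rho\|\sigma) \geq E_{R,\atb}(\rho)$ by definition of the infimum over separable states. Applied to $\rho^{\otimes n}$ against the still-separable $\sigma^{\otimes n}$, and divided by $n$, this passes to the regularized limits unchanged.

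For the second inequality $E_{R,\atb}^\infty(\rho) \geq E_D^\to(\rho)$, I will follow the standard template that converts regularized relative-entropy-type measures into upper bounds on distillable entanglement. Given an achievable rate $E$, pick a sequence of protocols $\Lambda_n \in \locc_\atb$ with $\Lambda_n(\rho^{\otimes n}) \approx_\eps \Phi^{nE}$. Three ingredients combine: (i) $E_{R,\atb}$ is monotone under $\locc_\atb$, because $\locc_\atb$ is closed under composition and preserves separability, so that $E_{R,\atb}(\Lambda(\rho)) \leq E_{R,\atb}(\rho)$ for any $\Lambda \in \locc_\atb$; (ii) $E_{R,\atb}$ is asymptotically continuous by a Winter-type bound analogous to \Cref{lemma:continuity} (the continuity proof only needs the reference set to be separable, not any particular form of the measurement class); (iii) $E_{R,\atb}(\Phi^k) \geq k$. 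Chaining these, $E_{R,\atb}(\rho^{\otimes n}) \geq E_{R,\atb}(\Lambda_n(\rho^{\otimes n})) \geq E_{R,\atb}(\Phi^{nE}) - O(nE\eps) - o(1) \geq nE - O(nE\eps) - o(1)$; dividing by $n$, taking $n\to\infty$ and then $\eps\to 0$ yields the bound.

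The hard part is fact (iii). The unrestricted analog $E_R(\Phi^k) = k$ is classical, but restricting to one-way LOCC is delicate: the natural computational-basis measurement produces identical distributions from $\Phi$ and from the separable $\hat\Phi$, so the optimal measurement must be adapted to each candidate separable state. The route I would take exploits the fact that a maximally entangled state is perfectly correlated between Alice and Bob in every orthonormal basis, while no separable state can reproduce this in more than essentially one basis; a suitably chosen adaptive or averaged one-way LOCC measurement therefore distinguishes $\Phi^k$ from any separable $\tau$ at a rate of $k$ bits per copy, delivering the bound. This step is the technical heart of the proof in \cite{Emeasures}, which I would either invoke directly or adapt in simplified form to the one-way setting.
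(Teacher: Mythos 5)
Your overall architecture is the same as the paper's: the first inequality is proved exactly as in the paper, by decomposing any one-way LOCC measurement into a partial measurement at Alice followed by Bob's conditional measurement and invoking monotonicity of the relative entropy (so $D_A\geq D_\atb\geq E_{R,\atb}$ term by term, then regularize against $\sigma^{\otimes n}$); the second inequality follows the same three-ingredient template the paper uses (one-way LOCC monotonicity of $E_{R,\atb}$, asymptotic continuity as in \cite[Proposition 3]{Emeasures}, and the value of $E_{R,\atb}$ on maximally entangled states).

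The one genuine problem is your ingredient (iii): the claim $E_{R,\atb}(\Phi^k)\geq k$ is false. The exact value from \cite[Proposition 4]{Emeasures}, which is what the paper plugs in, is $E_{R,\atb}(\Phi^k)=\log(2^k+1)-1=k-1+\log(1+2^{-k})$, strictly less than $k$; no one-way (indeed no LOCC) measurement strategy can distinguish $\Phi^k$ from the optimal separable state at $k$ bits, so the heuristic you sketch in your last paragraph (``$k$ bits per copy against any separable $\tau$'') would be trying to prove a false statement and cannot be made to work as stated. The error is harmless for the theorem only because the deficit is a single bit independent of $k$: in the chain one gets $\frac1n E_{R,\atb}(\rho^{\otimes n})\geq E-\frac1n+\frac1n\log(1+2^{-nE})-O(\eps)E-\frac{O(\eps)}{n}$, and the $\frac1n$ terms vanish in the regularized limit, which is precisely how the paper concludes. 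So the fix is simply to replace your claimed lower bound $\geq k$ by the Li--Winter formula (or by the weaker but sufficient $\geq k-1$) and note that the constant deficit is absorbed by the regularization; also note that the asymptotic continuity you invoke is not the set-restricted \Cref{lemma:continuity} for $E_R^P$ but the measured-version statement \cite[Proposition 3]{Emeasures}, whose dimension factor $4\eps\log\frac{3|AB|}{\eps}$ grows like $n\eps$ and is killed only after the final $\eps\to0$ limit, exactly as in the paper.
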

    It should be understood that the direction of the communication in the measurement
    must be the same as the direction in the distillation protocol.
    \begin{proof}
        The proof of $E_{R,\atb}^\infty(\rho) \geq E_D^\to(\rho)$
        mirrors the same result for two-way LOCC from~\cite{Emeasures}.
        We need two results from the same article,
        the value of $E_{R,\atb}$ on {\eprbit}s~\cite[Proposition 4]{Emeasures}:
        \begin{align*}
        E_{R,\atb}(\Phi^m) &= \log (2^m +1) -1 \\&= m + \log (1 + 2^{-m}) -1 \;,
        \end{align*}
        {and its asymptotic continuity~\cite[Proposition 3]{Emeasures}}:
        \[\quantity| E_{R,\atb}(\varrho) - E_{R,\atb}(\varsigma) |
        \leq 4\eps\log{\frac{3|AB|}{\eps}}\]
        for states satisfying $\frac{1}{2}\norm{\varrho-\varsigma}_1 = \eps\leq e^{-1}$.
        Now, for all $\eps$ and for all $n$ let
        $\Lambda$ be the optimal one-way distillation map of $E_D^\to(\rho)$.
        Then: 
        \begin{align*}
        \frac{1}{n} &E_{R,\atb}(\rho^{\otimes n})
        \\
        &\geq \frac{1}{n}
        E_{R,\atb}({\Lambda(\rho^{\otimes n})})
        \\
        &\geq \frac{1}{n}
        E_{R,\atb}({\Phi^{n\achievedrate}})
        - \frac{1}{n} 4\eps \log \frac{3 \,2^{2n\achievedrate}}\eps
        \\
        &=    E + \frac{1}{n} \log\quantity(1+\frac{1}{2^{n\achievedrate}})
        - \frac{1}{n}
        - 8\eps \achievedrate - \frac{4\eps}{n} \log \frac{3}{\eps}
        \end{align*}
        Taking the limit $n\to\infty$ leads to:
        \[ E_{R,\atb}^\infty(\rho) \geq \achievedrate - 8\eps \achievedrate\]
        and taking the limit $\eps\to 0$ ends the first part of the proof.
        
        We now prove
        $D_A^\infty(\rho\parallel\sigma)\geq D_\atb^\infty(\rho\parallel\sigma)$.
        This follows because the optimization in $D_A$
        is made over a larger class of maps,
        because Bob does not necessarily measure.
        More precisely, any measurement in $\measurement\in\locc_\atb$,
        can always be written as a measurement $\measurement'\in LO_A$  followed
        by a measurement on Bob conditioned on Alice's outcome~\cite{dw},
        namely a global measurement $\measurement''$ acting on Bob
        and Alice's measurement outcome.
        The communication is implicit in the fact
        that $\measurement''$ has received the outcome of $\measurement'$
        and is treating $\measurement'(\rho)$ as a global state.
        By the monotonicity of the relative entropy we thus have:
        \begin{align*}
        D(\measurement'(\rho)\parallel \measurement'(\sigma))
        &\geq D(\measurement''\circ \measurement'(\rho)
        \parallel \measurement''\circ \measurement'(\sigma))
        \\&= D(\measurement(\rho)\parallel \measurement(\sigma)) \;.
        \end{align*}
        Thus:
        \[D_A(\rho\parallel\sigma) \geq D_\atb(\rho\parallel\sigma) \]
        and therefore
        \[D_A^\infty(\rho\parallel\sigma) \geq D_\atb^\infty(\rho\parallel\sigma) \;.\]
        Since by definition $D_\atb^\infty(\rho\|\sigma)\geq E_{R,\atb}^\infty(\rho)$,
        this concludes the proof.
    \end{proof}

\end{module}


\section{{Distillable entanglement of {\privatebit}s}}

\providecommand{\zkindex}{{0\kindex}}
%
%
In this Appendix we provide a rigorous proof of \Cref{lemma:datahiding-oneway}.
We start by proving the in-line claims of the main text.

\begin{lemma}%
    \label{lemma:datahiding-belldiagonal}
    Let $\rho$ be a key correlated state of the form
    $\rho = \frac{1}{2^m} \sum _\kindex \phi_\zkindex \otimes \sigma_\kindex$.
    Then it holds:
    \begin{align*}
    E_D^\to(\rho) 
    &\geq \sup  _{\measurement \in LO_A} \frac{1}{2^m} \sum _\kindex
    D(\measurement(\sigma _\kindex)\parallel\measurement(\sigma))
    \end{align*}
    where $\sigma = \sum _\kindex \frac{1}{2^m}  \sigma_\kindex$,
\end{lemma}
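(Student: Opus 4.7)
The plan is to exhibit an explicit one-way LOCC (Alice to Bob) distillation protocol whose asymptotic rate matches the supremand on the right-hand side for each fixed measurement $\mathcal{M}\in LO_A$; the lemma then follows by taking the supremum.

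First I would have Alice apply $\mathcal{M}$, with POVM $\{M_i\}$, to her part of the shield and announce the outcome $i$ to Bob over the one-way channel. Introducing $p_{i|k}=\tr[M_i\sigma_k^{\shieldA}]$, $p_i=\tfrac{1}{2^m}\sum_k p_{i|k}$, $p_{k|i}=p_{i|k}/(2^m p_i)$, and the normalized conditional Bob-shield states $\sigma_{k,i}^{\shieldB}=\tr_{\shieldA}[(M_i\otimes\one)\sigma_k]/p_{i|k}$, the state on $\keyA\keyB\shieldB$ conditioned on outcome $i$ reads
\[
\rho_{|i}\;=\;\sum_k p_{k|i}\,\phi_{0k}^{\key}\otimes\sigma_{k,i}^{\shieldB},
\]
a Bell-diagonal state on $\key$ with correlated quantum side information held by Bob on $\shieldB$.

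Next I would apply the Devetak--Winter hashing protocol to $\rho_{|i}$, with Alice on $\keyA$ and Bob on $\keyB\shieldB$. Since the $\phi_{0k}$ are orthogonal pure states and the $\keyB$-marginal is maximally mixed, a short entropy computation gives the coherent information
\[
I(\keyA\rangle\keyB\shieldB)_{\rho_{|i}}\;=\;m-H(\{p_{k|i}\})+\chi(\{p_{k|i},\sigma_{k,i}^{\shieldB}\}).
\]
Running the protocol on $n$ iid copies of $\rho$ and grouping the outcome string by type, each type class of size $\approx np_i$ can be hashed at the above rate, yielding total rate $\sum_i p_i\bigl[m-H(\{p_{k|i}\})+\chi_i\bigr]=I(K{:}I)+\sum_i p_i\chi_i$, where $K$ denotes the uniform phase-flip index, $I$ the measurement outcome, and I used $H(K)=m$.

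Finally I would match this to the lemma's right-hand side. Writing $\mathcal{M}(\sigma_k)=\sum_i p_{i|k}\ketbra{i}{i}\otimes\sigma_{k,i}^{\shieldB}$ and $\mathcal{M}(\sigma)=\sum_i p_i\ketbra{i}{i}\otimes\bar\sigma_i$ with $\bar\sigma_i=\sum_k p_{k|i}\sigma_{k,i}^{\shieldB}$, the standard classical--quantum split of the relative entropy yields
\[
\tfrac{1}{2^m}\sum_k D(\mathcal{M}(\sigma_k)\|\mathcal{M}(\sigma))\;=\;I(K{:}I)+\sum_i p_i\chi_i,
\]
which matches the achievable rate and completes the argument. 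The main obstacle will be the second step: justifying achievability at the full coherent-information rate when Bob's side information $\sigma_{k,i}^{\shieldB}$ is merely quantum-correlated with $k$ (rather than a classical label), and handling typicality across outcome types when the protocol is run on many iid copies. The first and third steps reduce to straightforward entropy bookkeeping once the correct protocol has been identified.
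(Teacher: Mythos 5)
Your proposal is correct and follows essentially the same route as the paper: Alice measures her shield, announces the outcome one-way, and the Devetak--Winter hashing bound gives a coherent information that rearranges exactly into $\frac{1}{2^m}\sum_k D(\mathcal{M}(\sigma_k)\|\mathcal{M}(\sigma))$. The paper streamlines your second and third steps by keeping the classical outcome register $M$ on Bob's side and applying the hashing bound once to the single state $\tilde\rho=\frac{1}{2^m}\sum_k\phi_{0k}\otimes\mathcal{M}(\sigma_k)$, which also removes your flagged obstacle: the hashing bound holds for arbitrary bipartite states (quantum side information included), so no per-outcome type-class argument is needed.
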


\begin{proof}
    \newcommand{\rhomeasured}{\tilde\rho}
    Let Alice perform a measurement $\measurement:\shieldA\!\to M$
    on her shield and define:
    \begin{equation}
    \rhomeasured \coloneqq
    {\frac{1}{|\keyB|} \sum _\kindex}
    \phi _\zkindex \otimes \measurement(\sigma _\kindex)\;.
    \end{equation}
    We have $E_D^\to(\rho)\geq E_D^\to(\rhomeasured)$.
    Alice now sends the outcome to Bob and
    then by the hashing bound~\cite{dw} we find:
    \[E_D^\to(\rhomeasured) \geq 
    [H(\keyB\shieldB M)_{\rhomeasured} - H(\key\shieldB M)_{\rhomeasured}]
    \;,\]
    where
    $H(S)_\rho \coloneqq - \tr [\rho _S \log\rho _S]$
    is the quantum entropy of $\rho _S$ on system \emph{S}.
    However, because the key systems are in a mixture of Bell states,
    tracing out $\keyA$ will leave $\keyB$ in product with $\shieldB M$.
    Furthermore, the mixture of Bell states is uniform and thus
    $H(\keyB) _{\rhomeasured} = H(\key) _{\rhomeasured}$,
    therefore: 
    \begin{align}
    \label{eq:datahiding-belldiagonal:tmp}
    H(\keyB\shieldB M)_{\rhomeasured}
    = H(\key) _{\rhomeasured} + H(\shieldB M)_{\rhomeasured} \;.
    \end{align}
    We now use that
    \[H(X)_\alpha + H(Y)_\alpha - H(XY)_\alpha
    = D(\alpha _{XY}\parallel \alpha _X \otimes \alpha _Y)\]
    (where $\alpha$ is a state on $XY$) and conclude:
    \begin{align}
    E_D^\to(\rho)
    &\geq
    [H(\key) _{\rhomeasured} + H(\shieldB M)_{\rhomeasured}
    - H(\key\shieldB M)_{\rhomeasured}]
    \nonumber
    \\
    &= 
    D\quantity(
    {\textstyle \frac{1}{|\keyB|} \sum _\kindex}
    \phi _\zkindex {\otimes} \measurement(\sigma _\kindex)
    \middle\| 
    {\textstyle \frac{1}{|\keyB|} \sum _\kindex}
    \phi _\zkindex {\otimes} \measurement(\sigma))
    \nonumber
    \\
    &= 
    \frac{1}{|\keyB|} \sum \nolimits_\kindex
    D(\measurement(\sigma _\kindex) \parallel \measurement(\sigma))
    \end{align}
    where $\sigma = \frac{1}{|\keyB|} \sum_\kindex \sigma_\kindex$.
    Taking the supremum over $\measurement$ proves the claim.
\end{proof}

\begin{corollary}%
    \label{lemma:datahiding-correlated}
    For any key correlated state $\rho$, it holds:
    \begin{align*}
    E_D^\to(\rho)
    &\geq \sup  _{\measurement \in LO_A} \frac{1}{2^m} \sum _\kindex
    D(\measurement(\rho _\kindex)\parallel \measurement(\hat\rho))
    \end{align*}
    where $\rho _\kindex\coloneqq\phaseflip^\kindex_\keyB (\rho)$.
\end{corollary}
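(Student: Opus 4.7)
The plan is to reduce the general key correlated case to the Bell-diagonal case of \Cref{lemma:datahiding-belldiagonal} via the reversible map of \Cref{lemma:reversible-rho}, and then identify the resulting average ``shield'' state with $\hat\rho$.

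First I would invoke \Cref{lemma:reversible-rho} to write
\begin{equation*}
\reversiblemap(\rho) = \frac{1}{|\keyB|}\sum_\kindex \phi_\kindex \otimes \rho_\kindex,
\end{equation*}
with $\rho_\kindex = \phaseflip^\kindex_\keyB(\rho)$. Because $\reversiblemap$ is one-way LOCC and invertible by local operations (invert $\bnot$ and trace out the target), the one-way distillable entanglement is preserved: $E_D^\to(\rho) = E_D^\to(\reversiblemap(\rho))$. Moreover, in $\reversiblemap(\rho)$ Alice's shield is precisely her original $\keyA\shieldA$, so any local measurement on this shield is exactly a measurement in $LO_A$ on the original $\rho$.

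Second I would apply \Cref{lemma:datahiding-belldiagonal} to $\reversiblemap(\rho)$, with the role of $\sigma_\kindex$ now played by $\rho_\kindex$. This yields
\begin{equation*}
E_D^\to(\rho) \;\geq\; \sup_{\measurement \in LO_A} \frac{1}{|\keyB|} \sum_\kindex D(\measurement(\rho_\kindex) \parallel \measurement(\bar\rho)),
\end{equation*}
where $\bar\rho = \frac{1}{|\keyB|} \sum_\kindex \rho_\kindex$.

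Finally I would identify $\bar\rho$ with $\hat\rho$. The average $\frac{1}{|\keyB|}\sum_\kindex Z^\kindex_\keyB (\cdot) Z^{-\kindex}_\keyB$ is dephasing of $\keyB$ in the computational basis, by the character identity $\frac{1}{|\keyB|}\sum_\kindex \omega^{\kindex(j-j')} = \delta_{jj'}$. For a key correlated state, whose support lies in the maximally correlated subspace, $\keyB$-dephasing destroys the remaining computational-basis coherences on both keys simultaneously and reproduces the measurement attack, so $\bar\rho = \hat\rho$. The only step requiring care is this last identification, but it is a short direct computation rather than a real obstacle.
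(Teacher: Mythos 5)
Your proposal is correct and follows essentially the same route as the paper: apply the reversible map of \Cref{lemma:reversible-rho}, use monotonicity/reversibility of one-way LOCC to equate $E_D^\to(\rho)$ with $E_D^\to(\reversiblemap(\rho))$, invoke \Cref{lemma:datahiding-belldiagonal} with $\sigma_\kindex$ replaced by $\rho_\kindex$, and identify $\frac{1}{|\keyB|}\sum_\kindex\rho_\kindex=\hat\rho$. The paper simply asserts this last identity, while you justify it via the dephasing/character-sum argument, which is a fine (and correct) addition but not a different method.
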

\begin{proof}
    We can use \Cref{lemma:datahiding-belldiagonal}
    after using \Cref{lemma:reversible-rho}:
    \begin{align}
    E_D^\to(\rho)
    &= E_D^\to(\reversiblemap(\rho))
    = E_D^\to\quantity({\textstyle \sum_\kindex}
    \tfrac{1}{|\keyB|} \phi_\zkindex 
    \otimes \rho _\kindex)
    \nonumber
    \\&\geq \sup  _{\measurement \in LO_A}
    \frac{1}{|\keyB|} \sum _\kindex
    D(\measurement(\rho _\kindex)\| \measurement(\hat\rho)).
    \end{align}
    where we used that
    $\hat\rho = \frac{1}{|\keyB|} \sum_\kindex \rho _\kindex$.
\end{proof}

These bounds generalize to the two-way distillable
entanglement because we can always apply \Cref{lemma:datahiding-correlated}
after a two-way preprocessing as shown below.

\begin{corollary}%
    \label{lemma:datahiding-correlated-twoway}
    For any key correlated state $\rho$, it holds:
    \begin{align*}
    E_D(\rho) 
    &\geq \sup  _{\measurement_{\measurements\ab}} 
    \frac{1}{2^m} \sum _\kindex
    D(\measurement(\rho _\kindex)
    \parallel \measurement(\hat\rho))
    \end{align*}
    where $LOCC _{\measurements\ab}$ are all two-way LOCC protocol
    that ends with a measurement at Alice's.
\end{corollary}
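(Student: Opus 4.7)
The plan is to reduce the claim to the one-way case (\Cref{lemma:datahiding-correlated}) by using two-way LOCC purely as preprocessing. Since $E_D$ is defined via arbitrary two-way LOCC, one may freely insert any $\locc_{\ab}$ channel between the reversible map $\reversiblemap$ and the hashing step that concluded the proof of \Cref{lemma:datahiding-belldiagonal}.

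First I would apply the reversible map $\reversiblemap$ of \Cref{lemma:reversible-rho}, which is (one-way) LOCC; monotonicity of $E_D$ gives
\begin{equation*}
E_D(\rho) \geq E_D(\reversiblemap(\rho)) = E_D\!\left(\tfrac{1}{|\keyB|}\textstyle\sum_\kindex \phi_{0\kindex} \otimes \rho_\kindex\right),
\end{equation*}
with the Bell states on the fresh ancilla pair introduced by $\reversiblemap$ and $\rho_\kindex = \phaseflip^\kindex_\keyB(\rho)$ living on the original $\key\shield$. Next I would apply the given $\measurement \in LOCC_{\measurements\ab}$ to the original $\key\shield$ systems, acting trivially on the ancilla. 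This is still LOCC on the joint system, and by linearity of $\measurement$ the resulting state is
\begin{equation*}
\tfrac{1}{|\keyB|}\textstyle\sum_\kindex \phi_{0\kindex} \otimes \measurement(\rho_\kindex).
\end{equation*}

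This final state is precisely of the form treated in \Cref{lemma:datahiding-belldiagonal}, but with $\sigma_\kindex$ replaced by $\measurement(\rho_\kindex)$ and with Alice's measurement already absorbed into $\measurement$. The remaining steps of that lemma's proof, namely Alice broadcasting her classical record to Bob and then performing the one-way hashing protocol on the ancilla pair, require only Alice-to-Bob communication and go through verbatim. This yields
\begin{equation*}
E_D(\rho) \geq \tfrac{1}{|\keyB|}\textstyle\sum_\kindex D(\measurement(\rho_\kindex) \parallel \measurement(\hat\rho)),
\end{equation*}
where I used $\hat\rho = \tfrac{1}{|\keyB|}\sum_\kindex \rho_\kindex$ together with linearity of $\measurement$. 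Taking the supremum over $\measurement \in LOCC_{\measurements\ab}$ closes the argument.

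The one subtlety to justify is that applying a two-way $\ab$ protocol $\measurement$ only on the $\key\shield$ part of $\reversiblemap(\rho)$ qualifies as LOCC on the full system including the ancilla; this is immediate since LOCC is closed under tensoring with the identity on a bipartition-compatible auxiliary register. Beyond this structural check, the argument is a direct reuse of \Cref{lemma:datahiding-belldiagonal} with the preprocessed states $\measurement(\rho_\kindex)$ playing the role of the shield marginals.
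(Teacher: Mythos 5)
Your proof is correct and takes essentially the same route as the paper's: reversible map, two-way LOCC preprocessing absorbed into the distillation protocol, then the measure--communicate--hash argument of \Cref{lemma:datahiding-belldiagonal}. The only difference is packaging: the paper writes $\measurement = \measurement'\circ\Lambda$ with $\measurement'\in LO_A$ and invokes \Cref{lemma:datahiding-belldiagonal} as a black box, whereas you apply $\measurement$ wholesale and rerun the lemma's concluding steps, which is equivalent since $\measurement$'s output at Alice is classical.
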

\begin{proof}
    By \Cref{lemma:reversible-rho}:
    \begin{align*}
    E_D(\rho)
    &=    E_D(\reversiblemap(\rho))
    \\&=    E_D({\textstyle \sum_\kindex}
    \tfrac{1}{2^m} \phi_\zkindex \otimes \rho _\kindex)
    \\&\geq \sup_{\Lambda} 
    E_D^\to({\textstyle \sum_\kindex}
    \tfrac{1}{2^m} \phi_\zkindex \otimes \Lambda(\rho _\kindex))
    \refstepcounter{equation}
    \tag\theequation
    \label{eq:implicit-step}
    \intertext{where $\Lambda$ are arbitrary LOCC protocols.
        We then use \Cref{lemma:datahiding-belldiagonal}:}
    E_D(\rho)
    &\geq \sup_{\Lambda} \sup  _{\measurement'}
    \frac{1}{2^m} \sum _\kindex
    D(\measurement'\circ\Lambda(\rho _\kindex)\parallel \measurement'\circ\Lambda(\hat\rho))
    \intertext
    {however without loss of generality we can write
        any measurement in $\locc_\ab$ as $\measurement'\circ\Lambda$,
        where $\measurement'$ is a measurement at Alice's
        and $\Lambda$ an $\locc_\ab$ protocol, therefore}
    E_D(\rho)
    &=    \sup  _{\measurement_{\measurements\ab}} 
    \frac{1}{2^m} \sum _\kindex
    D(\measurement(\rho _\kindex)\parallel \measurement(\hat\rho))
    \;.\qedhere
    \end{align*}
\end{proof}


We can finally give a rigorous proof of the main theorem.

\begin{theorem}[Main text \Cref{lemma:datahiding-oneway}]%
    \label{lemma:datahiding-oneway-full}
    For any key correlated state $\rho$, it holds:
    \begin{align}
    \label{eq:DA-singlecopy}
    E_D^\to(\rho) 
    &\geq
    D_A(\rho\parallel\hat\rho)
    \coloneqq
    \sup  _{\measurement \in LO_A}
    D(\measurement(\rho)\parallel \measurement(\hat\rho))
    \\
    \label{eq:DA-regularized}
    E_D^\to(\rho) 
    &\geq D_A^{\infty}(\rho\parallel\hat\rho)
    \coloneqq
    \lim _{n\to\infty} \frac{1}{n}
    D_A(\rho^{\otimes n}\parallel\hat\rho^{\otimes n})
    \end{align}
    If $\hat\rho$ is also separable then:
    \begin{equation}
    \label{eq:DA-equality}
    E_D^\to(\rho) =    D_A^\infty(\rho\parallel \hat\rho) \;.
    \end{equation}
\end{theorem}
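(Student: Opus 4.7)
The plan is to establish the three claims sequentially, leveraging Corollary~\ref{lemma:datahiding-correlated} for the lower bounds and Lemma~\ref{lemma:regularized-1Er} for the matching upper bound in the separable case. The conceptual heart of the argument is that, because $\hat\rho$ is diagonal in the computational basis of the key systems, the local unitary $\phaseflip_{\keyB}^\kindex$ fixes $\hat\rho$; this will allow the phase-flipped terms in Corollary~\ref{lemma:datahiding-correlated} to be absorbed into a single, $\kindex$-independent quantity.

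For the single-copy bound \eqref{eq:DA-singlecopy}, I would start from Corollary~\ref{lemma:datahiding-correlated} and fix a measurement $\measurement\in LO_A$. Since $\measurement$ acts only on Alice, it commutes with the Bob-side unitary $\phaseflip_{\keyB}^\kindex$, so $\measurement(\phaseflip_{\keyB}^\kindex(\rho)) = \phaseflip_{\keyB}^\kindex(\measurement(\rho))$. From the block form \eqref{eq:rho-corr} of key correlated states it is immediate that $\hat\rho$ is a classical mixture of $\proj{ii}$ on $\key$ tensored with operators on $\shield$, so $\phaseflip_{\keyB}^\kindex(\hat\rho)=\hat\rho$, and therefore $\measurement(\hat\rho) = \phaseflip_{\keyB}^\kindex(\measurement(\hat\rho))$. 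Unitary invariance of the relative entropy then gives
\[
D(\measurement(\phaseflip_{\keyB}^\kindex(\rho))\parallel\measurement(\hat\rho))
= D(\measurement(\rho)\parallel\measurement(\hat\rho))
\]
for every $\kindex$, so the uniform average over $\kindex$ in Corollary~\ref{lemma:datahiding-correlated} collapses to this single $\kindex$-free term. Taking the supremum over $\measurement\in LO_A$ yields \eqref{eq:DA-singlecopy}.

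For the regularized bound \eqref{eq:DA-regularized}, I would apply \eqref{eq:DA-singlecopy} to $\rho^{\otimes n}$. The tensor power of a key correlated state is again key correlated (with $\key^n$ as the new key systems), and measuring in the computational basis commutes with tensor products, so $\widehat{\rho^{\otimes n}} = \hat\rho^{\otimes n}$. Combining with additivity of the asymptotic rate, $E_D^\to(\rho^{\otimes n}) = n E_D^\to(\rho)$, dividing by $n$, and sending $n\to\infty$ produces \eqref{eq:DA-regularized}. Finally, the equality \eqref{eq:DA-equality} is immediate: when $\hat\rho$ is separable it qualifies as a choice of $\sigma$ in Lemma~\ref{lemma:regularized-1Er}, which supplies the reverse inequality $D_A^\infty(\rho\parallel\hat\rho) \geq E_D^\to(\rho)$.

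I do not expect a genuine obstacle; the only steps requiring a moment of verification are the phase-flip invariance of $\hat\rho$ (immediate from \eqref{eq:rho-corr}) and the fact that taking the hat commutes with tensor powers, both of which are essentially bookkeeping on top of the two cited results.
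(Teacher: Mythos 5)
Your proposal is correct and follows essentially the same route as the paper's own proof: the phase-flip invariance $\phaseflip^\kindex_\keyB(\hat\rho)=\hat\rho$ combined with unitary invariance of the relative entropy collapses the average in \Cref{lemma:datahiding-correlated}, the regularized bound comes from applying the single-copy bound to $\rho^{\otimes n}$ (with $\widehat{\rho^{\otimes n}}=\hat\rho^{\otimes n}$ and $E_D^\to$ already regularized), and the separable case closes via \Cref{lemma:regularized-1Er}. The only cosmetic difference is that the paper explicitly invokes Fekete's lemma and super-additivity of $D_A$ to justify that the limit defining $D_A^\infty$ exists, which you leave implicit but which does not affect the validity of the inequality.
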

\begin{proof}
    It is straightforward to check that
    $\phaseflip^\kindex_\keyB(\hat\rho) = \hat\rho$.
    Let $\measurement$ be any measurement $\measurement$ at Alice.
    Because the measurement is local at Alice,
    it commutes with the unitary $\phaseflip^\kindex$ at Bob,
    thus we have:
    \begin{align*}
    D(\measurement (\rho _\kindex)
    \|\measurement (\hat\rho))
    &=  D(\measurement \otimes \mathcal Z_{\keyB}^\kindex (\rho)
    \|\measurement \otimes \mathcal Z_{\keyB}^\kindex (\hat\rho))
    \\&=  D(\measurement (\rho) \|\measurement (\hat\rho)) \;.
    \end{align*}
    where that the relative entropy is unitary invariant.
    We can now rewrite \Cref{lemma:datahiding-correlated} as:
    \begin{align*}
    E_D^\to(\rho)
    &\geq \sup _{\measurement\in LO_A}
    \tfrac 1{2^m} {\textstyle \sum} 
    D(\measurement (\rho _\kindex)
    \|\measurement (\hat\rho))
    \\&=    \sup _{\measurement\in LO_A}
    \tfrac 1{2^m} {\textstyle \sum} 
    D(\measurement (\rho)
    \|\measurement (\hat\rho))
    \\&=    \sup _{\measurement\in LO_A}
    D(\measurement (\rho)
    \|\measurement (\hat\rho))
    \end{align*}
    proving \Cref{eq:DA-singlecopy}.
    For \Cref{eq:DA-regularized} we have:
    \[E_D^\to(\rho) = \frac{1}{n} E_D^\to(\rho^{\otimes n})
    \geq \frac{1}{n} D_A(\rho^{\otimes n}\parallel \hat\rho^{\otimes n})
    \qquad \forall\ n\]
    because the distillable entanglement is already regularized and
    $\rho^{\otimes n}$ is still a key correlated state.
    By Fekete's Lemma~\cite{fekete},
    $\frac{1}{n} D_A(\rho^{\otimes n}\|\hat\rho^{\otimes n}) $
    converges because $D_A$ is super-additive;
    taking the limit $n\to\infty$ proves \Cref{eq:DA-regularized}.
    Equality in \Cref{eq:DA-equality} follows because 
    if $\hat\rho$ is separable,
    then we get the opposite inequality from \Cref{lemma:regularized-1Er}:
    \begin{equation*}
    D_A^\infty(\rho\|\hat\rho) 
    \geq E_D^\to(\rho)
    \geq D_A^\infty(\rho\|\hat\rho) 
    \;.
    \qedhere
    \end{equation*}
\end{proof}

\begin{module}
\bigskip
We now show with an example that using the reversible map is often necessary.
We can provide an example of a {\bell} {\privatebit}
for which the bound of \Cref{lemma:datahiding-belldiagonal} is strictly suboptimal,
while the bound of \Cref{lemma:datahiding-correlated} achieves equality.

\begin{example}
    Let $|\keyA| = |\keyB| = |\shieldA|^2 = |\shieldB|^2 = 2^{2m}$,
    namely the key systems have now $2m$ qubits each
    while the shield have only $m$ qubits each.
    For the following example we need to use
    the whole Bell basis.
    We define the {\privatebit}:
    \begin{align*}
    \gamma^{2m} 
        &= \sum_{ij} \frac{1}{2^{2m}} 
           \phi _{0i} \otimes \phi _{0j} \otimes \phi_{ij}
    \\
    \hat \gamma ^{2m} 
        &= \hat\Phi ^m \otimes \hat\Phi^m \otimes \tau^m
    \end{align*}
    where $\tau^m = \frac{\one}{2^{2m}} = \sum _{ij} \frac{1}{2^{2m}} \phi _{ij} $
    is the maximally mixed state of the shield.
    By the optimality result of~\cite[Equation 8]{accessible-info},
    the bound of \Cref{lemma:datahiding-belldiagonal} computes to:
    \[\sup  _{\measurement_A} 
    \frac{1}{2^{2m}} \sum _{ij}
    D(\measurement(\phi _{ij})\parallel \measurement(\tau^m))
    = m
    \;.\]
    which is achieved measuring the computational basis.
    However, this state is distillable into $2m$ {\eprbit}s
    with just a sequence of unitaries.
    When changing from the computational basis 
    to the conjugate basis on both sides (bilaterally),
    $\phi_{ij}$ converts to $\phi_{ji}$
    and recall that $\bnot \phi_{0j}\otimes \phi_{kl} \bnot^\dagger = \phi_{0,j-l}\otimes \phi_{kl}$ (\Cref{lemma:bnot}).
    The unitary that distills $2m$ {\eprbit}s is then obtained by the following sequence:
    \begin{itemize}
        \item Applying $\bnot$ to the second and third Bell states results in:
         \[\gamma^{2m} \to \sum_{ij} \frac{1}{2^{2m}} \phi _{0i} \otimes \phi _{00} \otimes \phi _{ij}\]
        \item Applying the bilateral change of basis on the third Bell state results in:
        \[\qquad \sum_{ij} \frac{1}{2^{2m}} \phi _{0i} \otimes \phi _{00} \otimes \phi _{ij}
        \to\sum_{ij} \frac{1}{2^{2m}} \phi _{0i} \otimes \phi _{00} \otimes \phi _{ji}\]
        \item Applying $\bnot$ to the first and third Bell states results in:
        \[\qquad \sum_{ij} \frac{1}{2^{2m}} \phi _{0i} \otimes \phi _{00} \otimes \phi _{ij}\to \sum_{ij} \frac{1}{2^{2m}} \phi _{00} \otimes \phi _{00} \otimes \phi _{ij}\]
    \end{itemize}
    Namely, there exist a local unitary achieving the transformation
    \begin{align*}
    \gamma^{2m} \to \phi_{00} \otimes \phi_{00} \otimes \tau^m
    \end{align*}
    thus proving that $E_D(\gamma^{2m}) = 2m$.

    We now compute the bound of \Cref{lemma:datahiding-correlated}.
    We now have the {\privatebit}:
    \begin{align*}
    \mathcal{E}(\gamma^{2m}) 
        &= \sum_{kl} \frac{1}{2^{2m}} 
        \phi _{0k} \otimes \phi _{0l} \otimes \gamma^{2m}_{kl}
    \\
    \mathcal{E}(\hat\gamma ^{2m})
        &= \hat\Phi ^m \otimes \hat\Phi^m \otimes \hat\gamma^{2m}
    \intertext{where:} 
    \gamma^{2m}_{kl} &= \sum_{ij} \frac{1}{2^{2m}} \phi _{0,i+k} \otimes \phi _{0,j+l} \otimes \phi_{ij}
    \\
    \hat \gamma ^{2m}_{kl} &= \hat\Phi ^m \otimes \hat\Phi^m \otimes \tau^m
    \;.
    \end{align*}
    With the the same distillation procedure as above we can achieve the transformation
    \[\gamma^{2m}_{kl} \rightarrow \phi_{0k} \otimes \phi_{0l} \otimes \tau^m \;.\]
    This allows to compute the bound of \Cref{lemma:datahiding-correlated} as:
    \allowdisplaybreaks%
    \begin{align*}
    \;&\!\!
    \sup  _{\measurement_A} 
    \frac{1}{2^{2m}} \sum _{kl} 
    D(\measurement(\gamma^{2m}_{kl})\parallel \measurement(\hat \gamma^{2m}))
    \\&=  \sup  _{\measurement_A} 
    \frac{1}{2^{2m}} \sum _{kl} 
    D(\measurement(\phi_{0k} \otime \phi_{0l} \otime \tau^m)
    \|\measurement(\hat\Phi^m \otime \hat\Phi^m \otime \tau^m))
    \\&=  \sup  _{\measurement_A} 
    \frac{1}{2^{2m}} \sum _{kl} 
    D(\measurement(\phi_{0k} \otime \phi_{0l} )
    \|\measurement(\hat\Phi^m \otime \hat\Phi^m ))
    = 2m
    \end{align*}
    achieved measuring the conjugate basis. 
    Notice that Alice's distilling unitary
    can be done by the measurement,
    while Bob unitary can be done 
    because of unitary invariance of the relative entropy.
    This bound now optimal and performs strictly better than \Cref{lemma:datahiding-belldiagonal}.
\end{example}

\end{module}


\section{{Key repeater}}

\begin{module}
    \repeatersystems%
    \renewcommand{\charlie}{{\bm{C}}}
    \renewcommand{\acb}{{A\mathrlap{:\charlie:B}}}
    In this Appendix we motivate the details leading to \Cref{lemma:rded-oneway}.
    The {\rkey}s from~\cite{limits}
    are defined in the same fashion as the rates from \Cref{sec:distillation-rates}.
    Recall that we now have three parties:
    Alice and Charlie ($A$ and $\charlieA$) share $\rhoA$
    and  Charlie and Bob ($\charlieB$ and $B$) share $\rhoB$.
    The {\rkey} is defined as a \textrate{{\privatebit}s},
    under tripartite LOCC maps, where Charlie ($\charlie=\charlieA\charlieB$)
    is traced out at the end of the protocol, 
    see also \Cref{figure:repeaterappendix}.
    Formally,
    \[R_D(\rhoA,\rhoB) {\coloneqq}
    \distillrate{(\rhoA\otime\rhoB)}{\gamma},\]
    which again equals
    \[R_D(\rhoA,\rhoB) {=}
    \distillrate{(\rhoA\otime\rhoB)}{\!\bgamma}.\]
    
    \begin{figure}
        \newcommand{\circuitC}{.85em}
        \newcommand{\target}{\gamma^{n\achievedrate}}
        
        \begin{module}
            \repeatersystems%
            \providecommand{\puremultigate}[2]{*+<1em,.9em>{\hphantom{#2}} \POS[0,0]="i",[0,0].[#1,0]="e",!C *{#2},"e"+UR;"e"+UL **\dir{-};"e"+DL **\dir{-};"e"+DR **\dir{-};"e"+UR **\dir{-},"i"} 
            \newcommand{\dw}{\ar@{.} [0,-1]&\ar@{.} [0,-1]} 
            \newcommand{\phambda}{{\scriptstyle\phantom\Lambda}}
            \providecommand{\circuitC}{.7em}
            \providecommand{\circuitR}{.3em}
            \providecommand{\circuitRfix}{.7em}
            \providecommand{\target}{\Phi^{ne}}

            \[
            \phantom{{\rhoB}^{\otimes n}}\hspace{.5em}
            \begin{aligned}
            \Qcircuit  @C=\circuitC @R=\circuitR @!R=\circuitRfix {
                &\ustick{\scriptstyle {Alice}\;\;}
                &\qw& \multigate{+1}\empty &&&&\qw[-4]&\dw& \multigate{+1}\empty &\qw&\qw
                \\
                &&&\pureghost\empty&\controlo\cw&&&&&\controlo&\pureghost\empty\cw
                \\\lstick{{\rhoA}^{\otimes n}}\ar @{-} [-2,1] \ar @{-} [2,1]&&&&&&&&&\\
                &&&&&\controlo\ar@{=}[-2,-1]&\puremultigate{4}\empty\cw&\controlo\cw&
                \\
                &&&&&\qw[-4]&\ghost\empty&\qw&\dw&\qw&\multimeasureD{2}{\scriptstyle \trc}
                \\&{\scriptstyle Charlie\quad} &&&&&\pureghost\empty& \scriptstyle \mathrlap\distillmaps\\
                &&&&&\qw[-4]&\ghost\empty&\qw&\dw&\qw&\ghost{\scriptstyle \trc}
                \\
                &&&&&\controlo\ar@{=}[2,-1]&\pureghost        \empty\cw&\controlo\cw&
                \\\lstick{{\rhoB}^{\otimes n}} \ar @{-} [-2,1] \ar @{-} [2,1]&&&&&&&&&\\
                &&&\pureghost\empty&\controlo\cw&&&&&\controlo&\pureghost\empty\cw
                \\
                &\dstick{\scriptstyle Bob \;}
                &\qw& \multigate{-1}\empty &&&&\qw[-4]&\dw& \multigate{-1}\empty &\qw&\qw
                \gategroup{1}{4}{11}{11}{.5em}{--}
                \save "2,9"."3,8"  \ar@{:}"4,8"  \restore 
                \save "4,9"."3,10" \ar@{:}"2,10" \restore 
                \save "10,9"."9,8" \ar@{:}"8,8"  \restore 
                \save "8,9"."9,10" \ar@{:}"10,10"\restore 
            }
            \end{aligned}
            \;\;\approx _\eps
            \target\hspace{.25em}
            \begin{aligned}
            \Qcircuit  @C=\circuitC @R=\circuitR @!R=\circuitRfix {
                &\ustick{\scriptstyle Alice}&\qw
                \\\\\\\\\\
                \ar @{-} [-5,1] \ar @{-} [5,1]
                \\\\\\\\\\ \push{\rule{0pt}{1.08em}}
                &\dstick{\scriptstyle Bob}  &\qw
            }
            \end{aligned}
            \]
        \end{module}
        
        \caption{\label{figure:repeaterappendix}%
            Distillation of key in a single node repeater.\hfill}
        
    \end{figure}
    
    Let us now indicate with $\locc _{\charlie\to\ab}$
    the tripartite \locc\ protocols that have the communication
    between Charlie and Alice/Bob restricted to be one-way from Charlie.
    The corresponding one-way {\rkey} is then equal to:
    {\renewcommand{\distillmaps}{\Lambda_{\charlie\mathrlap{\to\ab}}}
        \[R_D^\to(\rhoA,\rhoB) {=}
        \distillrate{(\rhoA\otime\rhoB)}{\gamma}.\]
    }
    The following reformulation is useful for the proof of the next theorem.    
    \begin{lemma}%
        \label{lemma:repeater-oneway-measurement}
        \renewcommand{\distillmap}{\Lambda{\circ}\measurement}
        \renewcommand{\distillmaps}{\substack{\measurement_{\mathrlap\charlie}%
                \\\Lambda_{\mathrlap\ab}}}
        \[R_D^\to(\rhoA,\rhoB) {=}
        \distillrate{(\rhoA\otime\rhoB)}{\gamma}.\]
    \end{lemma}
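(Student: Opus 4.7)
The plan is to prove the two inequalities separately. The direction $R_D^\to(\rhoA,\rhoB)\geq\sup_{\Lambda\circ\measurement}$ is immediate by inclusion, since any composite map consisting of a measurement $\measurement$ on $\charlie$ followed by a two-way protocol $\Lambda\in\locc_\ab$ (using the classical outcome of $\measurement$ as shared side information) is itself a valid element of $\locc_\ctab$; thus the supremum on the right-hand side is taken over a subclass of the protocols that witness $R_D^\to$ and cannot exceed it.

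For the reverse direction, I would start from an arbitrary finite-round protocol $\Lambda'\in\locc_\ctab$ achieving some rate $\achievedrate$ and reschedule it into the required canonical form. The key observation is that by definition of $\locc_\ctab$, Charlie never receives classical communication from Alice or Bob, so his sequence of local quantum instruments and outgoing broadcasts is completely independent of everything Alice and Bob do, while the operations of Alice and Bob never feed back into Charlie. Hence one can push all of Charlie's local operations to the beginning of the protocol, and defer all of Alice and Bob's local operations (including their two-way exchanges, which may now depend on the entire transcript broadcast by Charlie) to the end, without altering the joint action on $\rhoA\otime\rhoB$.

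After this rescheduling, the protocol factors as: Charlie applies a single local instrument on $\charlie$ producing a classical transcript $k$; the transcript is delivered to Alice and Bob; and Alice and Bob then run a two-way $\locc_\ab$ protocol $\Lambda_k$ that uses $k$ as shared classical side information. Since Charlie's quantum residue is traced out at the end of any {\rkey} protocol, the first step can, without changing the output on $AB$, be replaced by the measurement $\measurement\in LO_\charlie$ that records only the classical transcript $k$ (with the post-measurement state of $\charlie$ discarded immediately). The second step is by construction an element of $\locc_\ab$ parametrized by $k$, so the composite has the stated form $\Lambda\circ\measurement$. Because the rescheduling preserves the final state, every rate achievable in $\locc_\ctab$ is also achievable in the restricted class, yielding the opposite inequality.

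The main obstacle is the bookkeeping for the rescheduling step: one must verify, by induction on the number of rounds and in the standard quantum-instrument formulation of LOCC, that the causal dependency graph of the protocol genuinely permits separating Charlie's nodes from the $AB$ nodes without changing the joint output. This is routine but tedious; since the definition of $R_D^\to$ is insensitive to replacing $\locc_\ctab$ with its dense subclass of finite-round protocols, no subtleties arise from the unbounded-round case.
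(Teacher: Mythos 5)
Your proposal is correct and follows essentially the same route as the paper: the trivial inclusion for one direction, and for the other the canonical decomposition of a $\charlie\to AB$ one-way protocol into an instrument at Charlie followed by an $\locc_\ab$ protocol acting on the classical outcome, with the instrument replaced by the induced measurement $\measurement=\tr_{\bm C} I$ because Charlie is traced out at the end. The only difference is cosmetic: you spell out the round-rescheduling argument that justifies this decomposition, which the paper simply asserts as the structure of one-way LOCC.
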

    \begin{proof}
        An arbitrary one-way protocol $\Lambda'\in\locc_{\charlie\to\ab}$ consists
        of an instrument on $\charlie$ followed by an $\locc_\ab$ protocol
        that is allowed to act on the classical part of the instrument outcome (the communication).
        Let $I$ be the instrument from $\charlie$ to $\charlie M$,
        where $M$ is now the classical register.
        Let $\Lambda\in\locc_{M:\ab}$ be the second part of the protocol.
        Then:
        \begin{align*}
        \trc \Lambda'
        &= \trc \circ (\id_\charlie \otimes \Lambda) \circ (I \otimes \id_{AB})
        \\&= \Lambda \circ (\trc I \otimes \id_{AB})
        \\&= \Lambda \circ (\measurement \otimes \id_{AB})
        \end{align*}
        where we used that tracing the quantum part of the instrument
        gives a measurement $\measurement = \trc I$.
        Since $M$ is classical we have $\locc_{M:\ab}\equiv\locc _{M\ab}$.
        Because every pair $(\measurement_\charlie,\Lambda_\ab)$
        also defines a map in
        $\locc_{\charlie\to\ab}$, we have equality in the claim.
    \end{proof}
    
    We can now formulate the upper bound used to derive \Cref{lemma:rded-oneway},
    this upper bound is a corollary of the following theorem.
    \begin{theorem}[{{\cite[Theorem~4]{limits}}}]%
        \label{lemma:repeatersep-oneway:limits}
        \[R_D^\to(\rhoA, \rhoB)
        \leq E_{R,\charlie\to AB}^{\infty,\accb} (\rhoA \otimes \rhoB)
        \;.\]
    \end{theorem}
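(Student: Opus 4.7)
The plan is to follow the standard recipe for bounding a distillation rate by a regularized relative-entropy measure, adapted to the three-party one-way setting. Fix any rate $R$ achievable in $R_D^\to(\rhoA,\rhoB)$: for every $\eps>0$ and every sufficiently large $n$, there is a protocol $\trc \Lambda$ with $\Lambda \in \locc_{\charlie\to\ab}$ such that $\trc \Lambda(\rhoA^{\otimes n} \otime \rhoB^{\otimes n}) \approx_\eps \gamma^{nR}$. By \Cref{lemma:repeater-oneway-measurement}, this map decomposes as $\Lambda_\ab \circ (\measurement_\charlie \otime \id_\ab)$ for some local measurement $\measurement_\charlie$ at Charlie and some $\Lambda_\ab \in \locc_\ab$ that may depend on Charlie's classical outcome.

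The key structural fact I would establish is that such a protocol maps fully separable states in $\sep_\accb$ into $A{:}B$-separable states. Indeed, for $\sigma = \sum_i q_i\, \sigma_A^i \otime \sigma_{\charlieA}^i \otime \sigma_{\charlieB}^i \otime \sigma_B^i$ in $\sep_\accb$, conditioning on any outcome of the local measurement at Charlie leaves the $AB$-marginal as a convex combination of product states $\sigma_A^i \otime \sigma_B^i$, and the subsequent $\Lambda_\ab \in \locc_\ab$ preserves $A{:}B$-separability. Averaging over outcomes then yields an $A{:}B$-separable output.

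Next I apply monotonicity of the relative entropy. For any $\sigma \in \sep_\accb$,
\[
D\bigl(\trc \Lambda(\rhoA^{\otimes n} \otime \rhoB^{\otimes n}) \,\|\, \trc \Lambda(\sigma)\bigr) \;\leq\; D_{\charlie\to\ab}\bigl(\rhoA^{\otimes n} \otime \rhoB^{\otimes n} \,\|\, \sigma\bigr),
\]
since the overall map is realized by a measurement in the class $\locc_{\charlie\to\ab}$ (its quantum output may then be post-processed or discarded without raising the relative entropy). Taking the infimum over $\sigma \in \sep_\accb$ gives $E_{R,\charlie\to\ab}^\accb(\rhoA^{\otimes n} \otime \rhoB^{\otimes n})$ on the right. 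On the left, because $\trc \Lambda(\sigma)$ is $A{:}B$-separable, the expression is lower bounded by the ordinary relative entropy of entanglement $E_R(\trc \Lambda(\rhoA^{\otimes n} \otime \rhoB^{\otimes n}))$. The known lower bound $E_R(\gamma^m) \geq m$ for {\privatebit}s, combined with the asymptotic continuity of $E_R$ (\Cref{lemma:continuity}) applied to $\trc \Lambda(\rho^{\otimes n}) \approx_\eps \gamma^{nR}$, then gives that the left-hand side is at least $nR - n\,\delta(\eps)$ with $\delta(\eps) \to 0$ as $\eps \to 0$.

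Dividing by $n$, taking $n \to \infty$ (using Fekete's lemma on the right to regularize, valid since $\sep_\accb$ is closed under $\locc_{\charlie\to\ab}$ measurements), and then $\eps \to 0$, yields $R \leq E_{R,\charlie\to\ab}^{\infty,\accb}(\rhoA \otime \rhoB)$. The main technical obstacle is controlling the asymptotic continuity correction: the output dimension of a protocol achieving rate $R$ scales exponentially in $n$, so the constant $C = \sup E_R$ appearing in \Cref{lemma:continuity} grows like $O(n)$; the error $C\eps$ after dividing by $n$ remains $O(\eps)$ and vanishes in the limit $\eps \to 0$, which is precisely what allows the argument to close.
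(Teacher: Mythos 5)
Your outline reproduces the right template (decompose the protocol via \Cref{lemma:repeater-oneway-measurement}, note that the protocol maps quadri-separable states to $A{:}B$-separable ones, apply data processing, then continuity and Fekete), but two of its steps fail as written, and they are exactly where the real work lies. First, the inequality $D\bigl(\tr_C\Lambda(\rho^{\otimes n}\otimes\rho'^{\otimes n})\,\big\|\,\tr_C\Lambda(\sigma)\bigr)\le D_{C\to AB}(\rho^{\otimes n}\otimes\rho'^{\otimes n}\,\|\,\sigma)$ is not an instance of monotonicity: the repeater protocol is not a measurement, whereas the right-hand side is a supremum over \emph{measurements} in the restricted class; data processing gives measured $\le$ full, i.e.\ the opposite comparison, and "post-processing the quantum output" cannot recover the full relative entropy of the output from restricted measurements of the input. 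To reach the fully measured quantity $E_{R,\,C\to AB}^{\infty}$ of the statement you must end the protocol with an LOCC measurement between Alice and Bob and then show that the \emph{LOCC-measured} relative entropy of an approximate {\privatebit} to the measured separable state is still of order $nR$; that is nontrivial (it is the substance of the proof of Theorem 4 in \cite{limits}, which this paper cites rather than reproves) and is precisely the kind of claim that data hiding forbids you to take for granted. What your chain can support, once repaired as below, is the partial-measurement variant in which only Charlie measures and the quantum output at $AB$ is kept and fed to $\Lambda_\ab$ — that is the bound in terms of $D_C^\infty$ which this paper proves directly in \Cref{lemma:repeatersep-oneway:appendix}.

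The second gap is the continuity step. You set the constant $C=\sup E_R$ of \Cref{lemma:continuity} to $O(n)$ on the grounds that the output dimension is exponential in $n$. It is not: the target $\gamma^{nR}$ is a {\privatebit} whose key systems have dimension $2^{nR}$ but whose shield systems have arbitrary, uncontrolled dimension, and the protocol's output lives on key and shield alike. Since the rate takes $n\to\infty$ at fixed $\eps$ before $\eps\to 0$, the per-copy error $\tfrac1n C\eps$ need not vanish, so the estimate "$nR-n\,\delta(\eps)$" is unjustified. This is exactly the obstruction the paper flags in the proof of \Cref{lemma:repeatersep-oneway:appendix}, and its resolution is the idea missing from your proposal: before invoking continuity, apply the untwisting-and-trace-the-shield map $\mathcal{T}$ (as in \cite[Theorem 9]{paradigms}), which sends the exact {\privatebit} to $\Phi^{nR}$, maps the ($A{:}B$-separable) reference states into a fixed convex set $\mathcal{T}(\sep_\ab)$, and reduces the relevant dimension to $2^{nR}$; one then applies asymptotic continuity with constant $nR$ together with the fact that the relative entropy distance of $\Phi^{nR}$ to $\mathcal{T}(\sep_\ab)$ is at least $nR$. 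Citing $E_R(\gamma^m)\ge m$ does not substitute for this, because the continuity argument needed to pass from the $\eps$-approximate output to the exact {\privatebit} is where the unbounded shield dimension bites.
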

    
    \begin{corollary}
        \label{lemma:repeatersep-oneway:appendix}
        For any pair of states $\rhoA$ and $\rhoB$
        and any separable state
        $\sigma$ in $\sep_\repcutA$ or $\sep_\repcutB$:
        \[R_D^\to(\rhoA, \rhoB)
        \leq D_{\charlie}^\infty(\rhoA \otimes \rhoB \parallel \sigma)
        \;.\]
    \end{corollary}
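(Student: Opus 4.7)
The plan is to adapt the argument behind Theorem~\ref{lemma:repeatersep-oneway:limits}, replacing the quadri-separable states by the bi-separable $\sigma$ and replacing the $\locc_\ctab$ measurements by the partial measurement $\measurement_\charlie$ at Charlie alone. Fix $\sigma \in \sep_\repcutA$; the case $\sigma \in \sep_\repcutB$ is symmetric by swapping the roles of $A$ and $B$. By \Cref{lemma:repeater-oneway-measurement}, any one-way tripartite protocol $\trc\Lambda \in \locc_\ctab$ factors as $\trc\Lambda = \Lambda \circ (\measurement \otimes \id_{AB})$ with $\measurement \in LO_\charlie$ a partial measurement at Charlie (whose classical outcome is broadcast to Alice and Bob) and $\Lambda \in \locc_\ab$ acting on Alice and Bob conditioned on that outcome.

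The key observation is that if $\sigma = \sum_i p_i \alpha_i^A \otimes \beta_i^{\charlie B}$ is separable across $A:\charlie B$, then $\measurement(\sigma)$ is separable across $A:MB$ (with $M$ the classical outcome register), and hence $\trc\Lambda(\sigma) \in \sep_\ab$ since $\Lambda \in \locc_\ab$ preserves bipartite separability across $A:B$. Applying the data-processing inequality once at Charlie's measurement and once under $\Lambda$ then yields, for every $n$,
\[
D_\charlie(\rho^{\otimes n} \parallel \sigma^{\otimes n})
\geq D(\measurement(\rho^{\otimes n}) \parallel \measurement(\sigma^{\otimes n}))
\geq D(\trc\Lambda(\rho^{\otimes n}) \parallel \trc\Lambda(\sigma^{\otimes n})),
\]
where $\rho = \rhoA \otimes \rhoB$ and $\measurement,\Lambda$ depend on $n$.

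Now choose $\trc\Lambda$ to witness the one-way repeater rate, so that $\trc\Lambda(\rho^{\otimes n}) \approx_\eps \gamma^{n\achievedrate}$ with $\achievedrate \to R_D^\to(\rhoA,\rhoB)$. Because $\trc\Lambda(\sigma^{\otimes n}) \in \sep_\ab$, we can apply the standard lower bound $E_R(\gamma^{n\achievedrate}) \geq n\achievedrate$ for private bits together with the asymptotic continuity of the relative entropy of entanglement (\Cref{lemma:continuity}) to obtain
\[
D(\trc\Lambda(\rho^{\otimes n}) \parallel \trc\Lambda(\sigma^{\otimes n}))
\geq E_R(\gamma^{n\achievedrate}) - O(n\eps) \geq n\achievedrate - O(n\eps).
\]
Dividing by $n$, taking $n\to\infty$ and then $\eps \to 0$ delivers the corollary. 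The main delicate point is controlling the continuity correction: the private-bit target lives in a Hilbert space whose dimension grows as $2^{n\achievedrate}$ times the shield, so the constant $C$ in \Cref{lemma:continuity} scales linearly in $n$; this is precisely the regime where the $C\eps$ term produces an $O(n\eps)$ correction that vanishes per copy once $\eps\to 0$, so the inequality survives the regularization.
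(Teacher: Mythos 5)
You follow the paper's route up to the very last step: the factorization of the optimal one-way protocol through \Cref{lemma:repeater-oneway-measurement}, the observation that the measured-and-processed $\sigma$ remains separable across $A{:}B$, and the two applications of the data-processing inequality are all exactly as in the paper's proof. The genuine gap is in your continuity argument. You claim the constant $C$ of \Cref{lemma:continuity} ``scales linearly in $n$'' because the target lives in a space of dimension $2^{nR}$ ``times the shield''. But $C$ is only bounded by the logarithm of the \emph{total} local dimension, key plus shield, and the shield dimension of the output {\privatebit} $\gamma^{nR}$ is not controlled by $n$ or by the rate: the supremum defining $R_D^\to$ runs over arbitrary LOCC protocols, whose outputs may carry arbitrarily large shields for any fixed $n$ and $\eps$. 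Consequently the correction term $C\eps/n$ need not vanish in the limits $n\to\infty$, $\eps\to 0$, and your inequality $D(\tr_{C}\Lambda(\rho^{\otimes n})\,\|\,\tr_{C}\Lambda(\sigma^{\otimes n}))\geq nR-O(n\eps)$ is unjustified as stated. This is precisely the obstruction the paper flags explicitly (``the dimensions of the shield systems of $\gamma^{nR}$ are in principle unbounded, so that we cannot argue directly that these factors go to zero'').

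The paper closes this gap by inserting one more channel before invoking continuity: the map $\mathcal{T}$ that undoes the twisting of the target {\privatebit} and traces out the shield. By monotonicity this can be applied to both arguments of the relative entropy; it sends the exact {\privatebit} to $\Phi^{nR}$, the approximate output to a state $\eps$-close to $\Phi^{nR}$, and the separable state into the convex set $\mathcal{T}(\mathrm{SEP}_{A:B})$ (no longer separable, but still a legitimate set for \Cref{lemma:continuity}). Asymptotic continuity is then applied to $E_R^{\mathcal{T}(\mathrm{SEP})}$ on systems of dimension $2^{nR}$, where the constant genuinely is $nR$, and combined with $E_R^{\mathcal{T}(\mathrm{SEP})}(\Phi^{nR})\geq nR$~\cite[Lemma 7]{paradigms} this gives the per-copy bound $(1-\eps)R$ plus vanishing terms, after which the limits go through. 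Without this untwisting step (or some alternative argument bounding the shield, which you do not supply) your proof does not close; with it, the rest of your argument matches the paper's.
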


    \begin{proof}
        \allowdisplaybreaks%
        This is a consequence of \Cref{lemma:repeatersep-oneway:limits}.
        The direct way of proving the claim is to first 
        adapt the original proof to obtain:
        \begin{align*}
        R_D^\to(\rhoA , \rhoB)
        &\leq E_{R,\ctab}^{\infty,\repcutA}
        (\rhoA \otimes \rhoB )
        \\
        R_D^\to(\rhoA , \rhoB)
        &\leq E_{R,\ctab}^{\infty,\repcutB}
        (\rhoA \otimes \rhoB )
        \;,
        \end{align*}
        then restrict the optimization over tensor product separable states $\sigma^{\otimes n}$
        and finally remove the measurement on the receiver side $AB$
        as done in \Cref{lemma:regularized-1Er}.
        
        \providecommand{\untwister}{\mathcal{T}}
        However, we think it is instructive to see a direct proof.
        Without loss of generality let $\sigma\text{ in }\sep_\repcutA$.
        According to \Cref{lemma:repeater-oneway-measurement},
        let the measurement $\measurement$ and map $\Lambda$ be
        the optimal $R_D^\to$ distillation protocols for given $\eps$ and $n$.
        Then, for any $\sigma\in\sep_\repcutA$:
        \begin{align*}
        \frac{1}{n} D_{\charlie}&
        ((\rhoA \otimes \rhoB)^{\otimes n}
        \parallel \sigma ^{\otimes n})
        \\&\geq \frac{1}{n} D(\measurement
        ((\rhoA \otimes \rhoB)^{\otimes n})
        \parallel \measurement(\sigma ^{\otimes n}))
        \\&\geq \frac{1}{n} D(\Lambda\circ \measurement
        ((\rhoA \otimes \rhoB)^{\otimes n})
        \parallel \Lambda\circ \measurement(\sigma ^{\otimes n}))
        \\&\geq \frac{1}{n} D(\tilde\gamma^{n\achievedrate}\parallel \tilde\sigma_n)
        \end{align*}
        {where
            $\tilde\gamma^{n\achievedrate} \coloneqq
            \Lambda\circ \measurement ((\rhoA \otimes \rhoB)^{\otimes n})$
            will be a state $\eps$-close to a {\privatebit} $\gamma^{n\achievedrate}$,
            and $\tilde\sigma_n$ in still a separable state.
            The first inequality follows by definition of $D_\charlie$
            and the second by monotonicity of the relative entropy.
            
            Ideally, at this point we would use asymptotic continuity
            to change $\tilde\gamma^{n\achievedrate}$ into $\gamma^{n\achievedrate}$
            at the cost of some factor that goes to zero
            in the limits $n\to\infty$ and $\eps\to 0$.
            However, the dimensions of the shield systems of $\gamma^{n\achievedrate}$
            are in principle unbounded, so that we cannot argue directly
            that these factors go to zero.
            We need to remove the shield systems first,
            exploiting that, by definition, $\gamma^{n\achievedrate}$
            is a twisted version of a {\eprbit} $\Phi^{nR}$.
            This is the same argument used in~\cite[Theorem 9]{paradigms}.
            
            Let us denote by $\untwister$ the map that inverts the twisting unitary
            and traces the shield. Then, by monotonicity of the trace distance, we have:
            \[\untwister (\tilde\gamma^{n\achievedrate})
            \approx_\eps \untwister(\gamma^{n\achievedrate})
            = \Phi^{nR}\;.\]
            Furthermore, while
            $\untwister(\tilde\sigma_n)$ 
            might not be separable anymore,
            $\untwister$ will still map $\sep_\ab$ into a convex set $\untwister(\ab)$.
            Again we can apply $\untwister$ by monotonicity of the relative entropy, thus we find:}
        \begin{align*}
        \frac{1}{n} D&_{\charlie}((\rhoA\otimes\rhoB)^{\otimes n}\parallel\sigma ^{\otimes n})
        \\&\geq \frac{1}{n}
        D(\untwister(\tilde\gamma^{n\achievedrate})
        \parallel \untwister (\tilde\sigma_n))
        \\&\geq \frac{1}{n} 
        E_R^{\untwister(\ab)}
        (\untwister(\tilde\gamma^{n\achievedrate}))
        \\&\geq \frac{1}{n} 
        E_R^{\untwister(\ab)}
        (\Phi^{n\achievedrate}) 
        - \frac{1}{n}\quantity( n\achievedrate \eps
        - \eta(\eps) + \eta(1+\eps))
        \\&\geq \frac{1}{n} n\achievedrate - \eps\achievedrate
        + \frac{1}{n}\quantity(\eta(\eps) - \eta(1+\eps))
        \\&=    (1-\eps)\achievedrate
        + \frac{1}{n}\quantity(\eta(\eps) - \eta(1+\eps))
        \end{align*}
        {where we used the asymptotic continuity
            of the relative entropy with respect to $\untwister(\ab)$
            (\Cref{lemma:continuity},~\cite{relent-continuity,tight-continuity}),
            and that $E_R^{\untwister(\ab)}(\Phi^{n\achievedrate})
            \geq n\achievedrate$~\cite[Lemma 7]{paradigms}.
            Taking the limit $n\to\infty$ then leads to:}
        \[ D_\charlie^\infty(\rhoA \otimes \rhoB \parallel \sigma)
        \geq (1-\eps)\achievedrate\]
        and taking the limit $\eps\to 0$ concludes the proof.
    \end{proof}
    
    The original version uses states $\sigma$ separable in the quadri-partite cut $\accb$.
    However, it is immediate to see why the argument works
    also for $\repcutA$ and $\repcutB$\,:
    the only thing required in the proof is the separability of
    $\Lambda(\sigma)$ in $\ab$ for any distillation protocol $\Lambda$.

\end{module}


\section{{Key Swapper}}

\noindent
Before we begin to talk about key swapping protocols we need to introduce some definitions.

%
%
\begin{module}
    \providecommand{\loccD}{\locc}
    The class of {\bell} {\privatebit}s is not the only restriction
    one can make to the class of {\privatebit}s.
    The class of \emph{irreducible} {\privatebit}s
    was defined in~\cite{paradigms}:
    \begin{definition}
        [{Irreducible {\privatebit}s~\cite{paradigms}}]
        \hfill\\
        A {\privatebit} $\gamma ^m$ is called irreducible if $K_D( \gamma ^m) = m$.
    \end{definition}
    These are the {\privatebit}s that are actually interesting in the definition of distillable key
    because they are the outcomes of the optimal distillation protocols.
    However, the only feasible way to prove that a {\privatebit} is irreducible
    it to upper bound the distillable key via some other entanglement measure.
    For example one can use the relative entropy of entanglement,
    which in~\cite{paradigms} it was shown to give a further upper bound
    in terms of the relative entropy of entanglement of the {\keyattacked}:
    \begin{equation}
    \label{eq:irred-Er}
    K_D(\gamma^m) \leq E_R(\gamma) \leq m + E_R(\hat\gamma) \;.
    \end{equation}
    In light of this technique and considering that
    we need to require that $\hat \gamma$ is separable
    to argue that $E_D = E_{R,\loccD}^\infty$,
    it is sensible to introduce the following definition:
    \begin{definition}
        [{Strictly irreducible {\privatebit}s}]
        We say a {\privatebit} $\gamma$ is strictly irreducible
        if $\hat \gamma$ is separable.
        We will denote these states with $\gstrict$
        or $\strict \gamma m$.
    \end{definition}
    Of course, all strictly irreducible {\privatebit}s are irreducible.
    Indeed, these are all the {\privatebit}s for which we can prove $K_D(\strict \gamma m)=m$
    via \Cref{eq:irred-Er}. 
    A simple example are all {\privatebit}s for which $\sigma$
    is the maximally mixed state $\tau$;
    indeed, we find immediately that
    \begin{align*}
    \hat\gamma^m
    &= \frac1{2^m} \sum \nolimits_i \ketbra{ii}{ii}
    \otimes U_i^{\phantom\dagger} \tau U _i ^\dagger
    \\
    &= \frac1{2^m} \sum \nolimits_i \ketbra{ii}{ii}
    \otimes \tau U_i^{\phantom\dagger} U _i ^\dagger
    \\
    &= \frac1{2^m} \sum \nolimits_i \ketbra{ii}{ii}
    \otimes \tau
    \\
    &= \hat\Phi^m \otimes  \tau
    \end{align*}
    is always separable.
\end{module}

\bigskip

%
%
\begin{module}
    \repeatersystems%
    \renewcommand{\charlie}{{\bm{C}}}
    \renewcommand{\infty}{{\mathchar"231\vphantom0}} 
    \renewcommand{\delta}{{\smash{\mathchar"10E}\vphantom0}} 
    \newcommand{\rateA}{r}
    \newcommand{\rateB}{\rateA\smash'}
    \newcommand{\securityA}{\epsilon}
    \newcommand{\securityB}{\securityA\smash'}
    \newcommand{\distillerA}{\Gamma}
    \newcommand{\distillerB}{\distillerA\smash'}
    \renewcommand{\distillmaps}{\Lambda_{\charlie\rightarrow\mathrlap\ab}}
    We now show that the one-way distillable entanglement
    upper bounds the one-way {\rkey}
    of all protocols that first distill
    strictly irreducible {\privatebit}s with Charlie
    and then try to apply a general repeater protocol.
    First, we define a rate for such protocols.
    \begin{definition}
        [{One-way \skey}]%
        \label{def:key-swapping}
        \hfill\\
        For all bipartite $\rhoA$ and $\rhoB$, we define the one-way {\skey}
        achieved with one-way key swapping protocols as:
        \newcommand{\phamma}{{\vphantom{\distillerB_{C}}}}
        \renewcommand{\to}{{\smash\rightarrow\phamma}}
        \[ \swapkey (\rhoA,\rhoB) \coloneqq
        \lim _{\substack{\eps\to 0 \\ \securityA \to 0 \\ \securityB\! \to 0 }}
        \lim _{ n  \to\infty}
        \sup _{\substack{\phamma\Gamma _{\charlieA\rightarrow\mathrlap A}
                \\ \distillerB _{\charlieB\!\rightarrow\mathrlap B}
                \\ \phamma \Lambda_\ab,\mathrlap{\measurement_\charlie}}} 
        \quantity{ \achievedrate\colon \!\!
        \substack{ 
            \Lambda \circ \measurement
            (\strict \gamma {n\rateA} \otimes \strict{ \gamma}{n\rateB})
            \approx _\eps \gamma ^{n\achievedrate}
            \\
            \Gamma({\rhoA} ^{\otimes n}) \approx _{\securityA} \strict \gamma {n\rateA}
            \\
            \distillerB({\rhoB} ^{\otimes n}) \approx _{\securityB} \strict \gamma {n\rateB}
        }}.\]
    \end{definition}

    
    \noindent Then we can state the result:

    \begin{theorem}%
        \label{lemma:key-swapping}
        \[\swapkey (\rhoA,\rhoB) \leq E_D^\rightarrow (\rhoA \otimes \rhoB)\]
    \end{theorem}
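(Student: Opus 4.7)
The plan is to apply \Cref{lemma:rded-oneway} to the intermediate strictly irreducible {\privatebit}s produced by the distillation stage of the swapping protocol, and then transport the resulting bound back through the distillation maps by monotonicity of the one-way distillable entanglement together with an asymptotic continuity argument at the intermediate step.

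Fix an achievable rate $R$ in the definition of $\swapkey(\rhoA,\rhoB)$ with witnesses $\Gamma,\Gamma',r,r'$, repeater protocol $\Lambda_\ab\circ\measurement_\charlie$, and approximation parameters $n,\epsilon,\epsilon_1,\epsilon_2$. Because $\strict\gamma{nr}$ and $\strict{\gamma}{nr'}$ are strictly irreducible, their key-attacked marginals are separable across $A{:}\charlieA$ and $\charlieB{:}B$ respectively; hence they are key correlated and the pair satisfies the hypothesis of \Cref{lemma:rded-oneway}, yielding
\[R_D^{\charlie\to\ab}(\strict\gamma{nr},\strict{\gamma}{nr'})\leq E_D^\to(\strict\gamma{nr}\otimes\strict{\gamma}{nr'}).\]
Running $\Lambda\circ\measurement$ in parallel on $k$ copies of $\strict\gamma{nr}\otimes\strict{\gamma}{nr'}$ witnesses that $nR$ is achievable as a rate for $R_D^{\charlie\to\ab}(\strict\gamma{nr},\strict{\gamma}{nr'})$ in the appropriate limits, so $nR\leq E_D^\to(\strict\gamma{nr}\otimes\strict{\gamma}{nr'})$ up to vanishing corrections.

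Next I transport this bound back through the distillation step. The tensor $\Gamma\otimes\Gamma'$ is itself a one-way LOCC protocol from $\charlie$ to $\ab$, mapping $\rhoA^{\otimes n}\otimes\rhoB^{\otimes n}$ to a state that is $(\epsilon_1{+}\epsilon_2)$-close to $\strict\gamma{nr}\otimes\strict{\gamma}{nr'}$. By exact monotonicity of $E_D^\to$ under one-way LOCC combined with $E_D^\to(\sigma^{\otimes n})=nE_D^\to(\sigma)$,
\[E_D^\to\!\bigl((\Gamma\otimes\Gamma')(\rhoA^{\otimes n}\otimes\rhoB^{\otimes n})\bigr)\leq n\,E_D^\to(\rhoA\otimes\rhoB).\]
To bridge from this exact output to $\strict\gamma{nr}\otimes\strict{\gamma}{nr'}$ I use that for strictly irreducible {\privatebit}s one has the sandwich $E_D^\to=E_{R,\atb}^\infty$, which follows by combining \Cref{lemma:datahiding-oneway} with \Cref{lemma:regularized-1Er}; the quantity $E_{R,\atb}^\infty$ inherits from \Cref{lemma:continuity} a single-letter continuity bound whose dimensional prefactor becomes harmless once divided by $n$. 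Putting the two displays together and dividing by $n$ gives $R\leq E_D^\to(\rhoA\otimes\rhoB)+\xi$ with $\xi\to 0$ in the iterated limits of the definition of $\swapkey$.

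The main obstacle is this last continuity step: monotonicity of $E_D^\to$ is an exact statement while $\Gamma\otimes\Gamma'$ produces the target strictly irreducible states only up to trace-distance error $\epsilon_1{+}\epsilon_2$. Identifying $E_D^\to$ with the restricted regularized relative entropy on strictly irreducible {\privatebit}s is the saving grace, since it lets one invoke the Fannes-Audenaert-type continuity of \Cref{lemma:continuity} at the single-letter level where the $\log(\dim)$ prefactor stays benign after the $\tfrac1n$ rescaling. Interleaving the limits $n\to\infty$, $\epsilon,\epsilon_1,\epsilon_2\to 0$ in the same pattern as in the proof of \Cref{lemma:repeatersep-oneway:appendix} then keeps $\xi$ of order $o(1)$ and closes the bound.
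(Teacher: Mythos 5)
Your proposal takes a genuinely different route from the paper, but as written it has two gaps that I do not see how to close without essentially redoing the paper's argument.

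First, \Cref{lemma:rded-oneway} is an asymptotic statement: it bounds the one-way {\rkey} of a pair of input states, a quantity defined through many copies of those inputs \emph{and vanishing error}. In \Cref{def:key-swapping}, however, the repeater map $\Lambda\circ\measurement$ is applied once to the single pair $\strict\gamma{nr}\otimes\strict{\gamma}{nr'}$, with a fixed error $\eps$. Your parallel-repetition step does not convert this into an achievable asymptotic rate: running the fixed protocol on $k$ blocks yields a state only $k\eps$-close to $\gamma^{knK}$, so the error does not vanish as $k\to\infty$ and the defining limits of $R_D^{\to}$ are never met for fixed $\eps$; the claimed ``vanishing corrections'' are exactly what is missing. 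What is needed is a one-shot converse (``any single application with error $\eps$ gives at most $E_D^\to$ plus explicitly controlled error terms''), which is what the paper proves directly, mirroring \Cref{lemma:repeatersep-oneway:appendix}: it lower bounds $E_D^\to$ of the reversibly mapped, measured, distilled states by a coherent information, rewrites it as a relative entropy with respect to the separable state built from the {\keyattacked}s, pushes $\Lambda$ through by monotonicity, and finishes with the untwisting argument of \cite{paradigms} to extract $K$ up to $O(\eps)$.

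Second, your continuity step fails for dimensional reasons. The identification $E_D^\to=E_{R,\atb}^\infty$ on key correlated states with separable {\keyattacked} is fine (\Cref{lemma:datahiding-oneway} combined with \Cref{lemma:regularized-1Er}), but the continuity bounds you would invoke (\Cref{lemma:continuity}, or the asymptotic continuity of $E_{R,\atb}$ from \cite{Emeasures}) carry a prefactor of the logarithm of the dimension of the \emph{whole} state, including the shield systems of $\strict\gamma{nr}\otimes\strict{\gamma}{nr'}$. Nothing ties that shield dimension to $n$: the optimal distillation maps may output arbitrarily large shields, so dividing by $n$ does not render the term harmless. The paper's proof is engineered precisely around this obstacle: asymptotic continuity is applied to the coherent information $I(X\rangle Y)$, whose continuity constant involves only the key part of Charlie's systems, $\log|X|=n(r+r')$, and the comparison with the target private state at the end again avoids any dependence on shield dimensions via the untwisting trick. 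Without these two ingredients your chain of inequalities does not close; with them, you would in effect be reproducing the paper's proof rather than shortcutting it through \Cref{lemma:rded-oneway}.
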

    \begin{proof}
        \newcommand{\untwister}{\mathcal{T}}
        \newcommand{\bstrict}[2]{\strict{\bm{#1}}{#2}}
        \newcommand{\reversibleAB}{\bm{\reversiblemap}}
        \newcommand{\rhoAB}{\bm\rhoA}
        \newcommand{\rateAB}{\bm\rateA}
        \newcommand{\securityAB}{{\bm\securityA}}
        \newcommand{\distillerAB}{\bm\distillerA}
        \newcommand{\Mshield}{{\measurement}} 
        \allowdisplaybreaks%
        First notice that the tensor product of two strictly irreducible {\privatebit}s
        is still a strictly irreducible {\privatebit}, namely
        $\strict \gamma a \otimes \strict \gamma b  = \strict \gamma {a+b}$.
        
        Then, for the sake of the proof,
        let us introduce the following convenient bold shorthand notation:
        \begin{align*}
        \rateAB     &\coloneqq \rateA + \rateB
        &
        \rhoAB      &\coloneqq \rhoA \otimes \rhoB
        \\
        \bstrict \gamma {n \rateAB}
        &\coloneqq \strict \gamma {n \rateA} \otimes \strict \gamma {n \rateB}
        &
        \distillerAB&\coloneqq \Gamma \otimes \distillerB
        \end{align*}
        then with $\securityAB \coloneqq \securityA + \securityB$ we have
        \begin{align}
        \distillerAB(\rhoAB^{\otimes n}) \approx_\securityAB \bstrict \gamma {n\rateAB}
        \label{eq:lemma:swap-boldeps}
        \;.
        \end{align}
        We also define:
        \[\reversibleAB = \reversiblemap _{\keyA    \keyof    \charlieA}
        \otimes \id        _{\shieldA \shieldof \charlieA}
        \otimes \reversiblemap _{\keyB    \keyof    \charlieB}
        \otimes \id        _{\shieldB \shieldof \charlieB} \]
        where $\reversiblemap$ is the reversible map
        of \Cref{lemma:reversible-rho}. 
        
        Proof idea: just like in the proof of \Cref{lemma:datahiding-belldiagonal},
        we exploit the idea of using the distillable entanglement as an upper bound on
        the protocol that performs a measurement on the shield followed by hashing.
        However, we need to insert a step in the proof
        to substitute the approximate private state
        with exact private state, otherwise the proofs of
        \Cref{lemma:datahiding-correlated} and \Cref{lemma:datahiding-oneway}
        do not work. We will do this at the level of the coherent information
        using its asymptotic continuity.
        
        First, we lower bound the one-way distillable entanglement
        with the coherent information, just like in \Cref{lemma:datahiding-correlated}.
        Let $\Lambda$, $\measurement$ and $\distillerAB$
        be the optimal key swappping maps for given $\securityAB$, $\eps$, and $n$:
        \allowdisplaybreaks%
        \begin{align*}
        E_D^\rightarrow(\rhoAB)
        &=
        \frac{1}{n} E_D^\rightarrow(\rhoAB ^{\otimes n})
        \geq
        \frac{1}{n}
        E_D^\rightarrow(\reversibleAB \circ \distillerAB(\rhoAB ^{\otimes n}))
        \\&\geq
        \frac{1}{n}
        I({\keyof{\charlie}\rangle ABM})
        _{\measurement\circ\reversibleAB \circ \distillerAB (\rhoAB ^{\otimes n})}
        \end{align*}
        where $I(X\rangle Y) = H(Y) - H(XY)$ is the coherent information,
        $\keyof\charlie = \keyof C \keyof \charlieB$ and $AB=\key\shield$.
        Here we used that $E_D^\rightarrow$ entanglement is a one-way LOCC monotone,
        that $\reversibleAB \circ \distillerAB $ is one-way LOCC,
        and we used the hashing protocol after performing the measurement $\measurement$
        on the shield systems at Charlie.
        
        Here is where we want to change the approximate {\privatebit}s into exact {\privatebit}s
        as mentioned before.
        In the form of~\cite{tight-continuity},
        asymptotic continuity of the coherent information~\cite{condinfo1}
        says that:
        \[|I({X\rangle Y})_\varrho - I({X\rangle Y})_\varsigma|
        \leq 2\securityAB \log |X| - \eta(\securityAB) + \eta(1+\securityAB)
        \]
        for arbitrary bipartite states such that
        $\varrho _{XY} \approx_{\securityAB} \varsigma _{XY}$;
        here $\eta(x) = x\log x$.
        Since combining \Cref{eq:lemma:swap-boldeps}
        and the monotonicity of the trace distance gives
        \[\measurement\circ\reversibleAB (\distillerAB (\rhoAB^{\otimes n}))
        \approx_\securityAB
        \measurement\circ\reversibleAB (\bstrict \gamma {n \rateAB})\;,
        \]
        we can use the asymptotic continuity,
        where dimension factor is now $\log |\keyof\charlie| = {n\rateAB}$,
        and get:
        \begin{align*}
        {E_D^\rightarrow(\rhoAB)}
        &\geq
        \frac{1}{n}
        I({{\keyof{\charlie}\rangle ABM}})
        _{\measurement\circ\reversibleAB( \bstrict \gamma {n \rateAB})}
        \\&\quad
        - \frac{1}{n} \quantity
        ( n\rateAB \securityAB - \eta(\securityAB) + \eta(1+\securityAB))
        \\&=
        \frac{1}{n}
        I({{\keyof{\charlie}\rangle ABM}})
        _{\measurement\circ\reversibleAB( \bstrict \gamma {n \rateAB})}
        + O(\securityAB)
        \;.
        \end{align*}
        
        Now, as shown in
        \Cref{lemma:datahiding-correlated} and \Cref{lemma:datahiding-oneway},
        we can rewrite the conditional information as a relative entropy
        and then, by the unitary invariance of the relative entropy,
        correct the phase flip on the $AB$ side. This results in:
        \begin{align*}
        {E_D^\rightarrow(\rhoAB)}
        &\geq
        \frac{1}{n}
        D(       \measurement(\bstrict       \gamma  {n \rateAB})
        \parallel\measurement(\bstrict {\hat \gamma} {n \rateAB}))
        + O(\securityAB)
        \\&\geq
        \frac{1}{n}
        D(       \Lambda\circ\measurement(\bstrict      \gamma  {n \rateAB})
        \parallel\Lambda\circ\measurement(\bstrict {\hat\gamma} {n \rateAB}))
        + O(\securityAB)
        \end{align*}
        where the last inequality holds
        because of the monotonicity of the relative entropy.
        Notice that $\Lambda\circ\measurement(\bstrict {\hat\gamma} {n \rateAB})$
        is a separable state, because we distilled to strictly irreducible {\privatebit}s.
        
        At this point, just like in~\cite[Theorem 9]{paradigms}
        and \Cref{lemma:repeatersep-oneway:appendix},
        because $\Lambda\circ\measurement(\bstrict\gamma  {n \rateAB})
        \approx_\eps \gamma^{n\achievedrate}$,
        we can lower bound the relative entropy as follows:
        \begin{align*}
        {E_D^\rightarrow(\rhoAB)}
        &\geq \achievedrate
        - \frac{1}{n} \quantity
        (n\achievedrate \eps - \eta(\eps) + \eta(1+\eps))
        + O(\securityAB)
        \\&\geq \achievedrate
        + O(\eps)
        + O(\securityAB)
        \end{align*}
        Taking the limits $n\to\infty$, $\securityAB\to 0$ and $\eps\to 0$
        concludes the proof.
    \end{proof}
\end{module}


\section{{Single-copy Key Repeater}}

\begin{module}
    \repeatersystems%
    We consider now yet another variation of the repeater:
    the \emph{single copy} {\rkey} $R_D^{sc}$.
    Instead of letting Charlie act jointly on arbitrary many copies,
    we restrict him to act only on a single copy of the states.
    This should model, for example, memory-less repeater stations
    that perform their operations fast and do not allow for distillation.
    Alice and Bob then proceed to distill key as usual with the outcome of the
    single copy protocol with Charlie.
    The single copy {\rkey} has been defined in~\cite{limits} as follows.
    
    \begin{definition}[Single-copy $\bm{R_D}$~\cite{limits}]
        \begin{align*}
        R_D^{sc} (\rhoA,\rhoB)\coloneqq
        &\sup _{\Lambda_\acb} K_D(\trc\Lambda(\rhoA\otimes\rhoB))
        \;.
        \end{align*}
    \end{definition}
    
    It is possible to prove an upper bound
    in terms of a single copy relative entropy measure.
    We follow the proof of a similar upper bound that can be found in~\cite{limits}
    for the general {\rkey}, the main difference is that in this case
    there is no regularization, which would make the bound intractable for our purposes.
    We will later combine this bound with \Cref{lemma:datahiding-correlated}
    to express it in terms of the distillable entanglement.
    
    \providecommand{\trcab}{{\trc\cab}}
    \providecommand{\measurecab}{{\measurements\cab}}
    \begin{theorem}%
        \label{lemma:repeatersep-singlecopy}
        \newcommand{\distinguishmaps}{\measurecab}
        For all states $\rhoA$ and $\rhoB$ it holds:
        \begin{align*}
        R_D^{sc} (\rhoA,\rhoB) \leq E_{R,\distinguishmaps}^\repcutA(\rhoA\otimes\rhoB)
        \\
        R_D^{sc} (\rhoA,\rhoB) \leq E_{R,\distinguishmaps}^\repcutB(\rhoA\otimes\rhoB)
        \end{align*}
        where
        \begin{itemize}
            \item $E_{R,\distinguishmaps}^\repcutA$ and $E_{R,\distinguishmaps}^\repcutA$
            are relative entropies with $P=\sep_\repcutA $ and $P=\sep_\repcutB$
            respectively, and $\maps=\locc_\measurecab$;
            \item $\locc_\measurecab$ are all LOCC protocols of $\charlie{:}AB$
            that end with a measurement at Charlie.
        \end{itemize}
    \end{theorem}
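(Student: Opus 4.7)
The plan is to mirror the proof of \Cref{lemma:repeatersep-oneway:appendix} specialised to the single-copy setting, where the outer regularisation disappears because Charlie operates on a single instance of $\rhoA\otimes\rhoB$. Fix an achievable rate $K$: by definition there exist $\eps>0$, an integer $n$, a single-copy tripartite map $\Lambda\in\locc_\acb$, and an $n$-copy bipartite distillation map $\Gamma\in\locc_\ab$ such that $\Gamma(\sigma^{\otimes n})\approx_\eps\gamma^{nK}$ for some {\privatebit} $\gamma^{nK}$, where $\sigma\coloneqq\trc\Lambda(\rhoA\otimes\rhoB)$. I prove the $\sep_\repcutA$ inequality; the $\sep_\repcutB$ case is symmetric.

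Fix a candidate $\sigma'\in\sep_\repcutA$. Since $\locc_\acb$ preserves separability across the $\repcutA$ cut (with $C$ and $B$ regarded as a joint party), $\Lambda(\sigma')\in\sep_\repcutA$, so $\tau\coloneqq\trc\Lambda(\sigma')\in\sep_\ab$, and consequently $\Gamma(\tau^{\otimes n})\in\sep_\ab$. Data processing under $\Gamma$ and additivity of the relative entropy on tensor products give
\[n\,D(\sigma\parallel\tau)=D(\sigma^{\otimes n}\parallel\tau^{\otimes n})\geq D(\Gamma(\sigma^{\otimes n})\parallel\Gamma(\tau^{\otimes n}))\geq E_R^\ab(\Gamma(\sigma^{\otimes n})).\]
To turn the right-hand side into $nK-o(n)$ despite the unbounded shield dimension of $\gamma^{nK}$, I would apply the untwisting map $\mathcal{T}$ exactly as in the proof of \Cref{lemma:repeatersep-oneway:appendix}: monotonicity of the relative entropy under $\mathcal{T}$, asymptotic continuity (\Cref{lemma:continuity}) now applied in the controlled dimension $\log|\Phi^{nK}|=nK$, and the bound $E_R^{\mathcal{T}(\ab)}(\Phi^{nK})\geq nK$ from~\cite[Lemma~7]{paradigms} combine to yield $D(\sigma\parallel\tau)\geq(1-\eps)K+o(1)$ as $n\to\infty$.

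It remains to recast the left-hand side in terms of $\rhoA\otimes\rhoB$ and $\sigma'$ themselves. The key observation is that $\trc\Lambda=\tr_M\circ\Lambda'$, where $\Lambda'$ is obtained from $\Lambda$ by appending an arbitrary measurement at Charlie and storing its outcome in a classical register $M$; since $\locc_\acb\subseteq\locc_\cab$, the augmented protocol $\Lambda'$ belongs to $\locc_\measurecab$. Monotonicity of the relative entropy under $\tr_M$ therefore gives
\[D(\sigma\parallel\tau)\leq D(\Lambda'(\rhoA\otimes\rhoB)\parallel\Lambda'(\sigma'))\leq D_\measurecab(\rhoA\otimes\rhoB\parallel\sigma').\]
Combining the two displays, taking the infimum over $\sigma'\in\sep_\repcutA$ and the limits $n\to\infty$, $\eps\to 0$ yields $K\leq E_{R,\measurecab}^\repcutA(\rhoA\otimes\rhoB)$, and then the supremum over $\Lambda$ and $\Gamma$ delivers the claim. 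The main obstacle is the asymptotic continuity step: the shield of $\gamma^{nK}$ is a priori unbounded, so a naive application of \Cref{lemma:continuity} diverges; the resolution imported from~\cite[Theorem~9]{paradigms} and \Cref{lemma:repeatersep-oneway:appendix} via the untwisting map $\mathcal{T}$ is what makes the single-copy bound go through without any regularisation over copies of $\rhoA\otimes\rhoB$.
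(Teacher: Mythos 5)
Your proposal is correct and structurally the same as the paper's proof: for a fixed single-copy protocol $\Lambda$ and a fixed separable test state, you lower-bound the measured relative entropy by choosing the map ``$\Lambda$ followed by a measurement at Charlie whose outcome is discarded'', use that the image of the separable state remains separable across $A{:}B$, and conclude that the resulting relative entropy dominates the distillable key of $\tr_C\Lambda(\rho\otimes\rho')$. The only deviation is that where the paper simply cites the single-copy bound $K_D\le E_R$ from~\cite{paradigms}, you re-derive that step inline against the specific separable state via data processing, untwisting and asymptotic continuity exactly as in \Cref{lemma:repeatersep-oneway:appendix}, which is valid but not needed.
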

    \begin{proof}
        \allowdisplaybreaks%
        Without loss of generality let $\sigma\in\sep_\repcutA$.
        Furthermore, for any map $\Lambda \in \locc_\acb$
        it holds that:
        \begin{align*}
        &\trc\Lambda\in\locc_\measurecab
        \\
        \tilde\sigma=
        &\trc\Lambda(\sigma) \in\sep_\ab
        \;.
        \end{align*}
        Therefore, for any such $\sigma$ and $\Lambda$ we have:
        \begin{align*}
        D_\measurecab&(\rhoA\otimes\rhoB\parallel \sigma ) 
        \\& = \sup_{\measurement _\measurecab}
        D(\measurement(\rhoA\otimes\rhoB)\parallel \measurement(\sigma )) 
        \\& \geq
        D(\trc\Lambda(\rhoA\otimes\rhoB)\parallel \trc\Lambda(\sigma)) 
        \\& =
        D(\trc\Lambda(\rhoA\otimes\rhoB)\parallel \tilde\sigma) 
        \\& \geq    
        E_R(\trc\Lambda(\rhoA\otimes\rhoB))
        \\& \geq 
        K_D (\trc\Lambda(\rhoA\otimes\rhoB))
        \end{align*}
        where we used that $E_R$ is a known upper bound on $K_D$~\cite{paradigms}.
        Taking the supremum over all $\Lambda$'s we find:
        \[  D_\measurecab(\rhoA\otimes\rhoB\parallel \sigma_\repcutA)
        \geq R_D^{sc} (\rhoA,\rhoB) \;.\]
        Taking the infimum over $\sigma$'s we end the proof.
    \end{proof}

    Recall now that \Cref{lemma:datahiding-correlated}
    generalizes to the two-way case in the following way:
    \begin{align*}
    E_D(\rho) 
    &\geq
    \sup _{\measurement_A \in \locc_\ab}
    \frac{1}{2^m} \sum _\kindex
    D(\measurement(\rho _\kindex)\parallel \measurement(\hat\rho))
    \\&=
    \sup _{\measurement_{\measurements\ab}}
    \frac{1}{2^m} \sum _\kindex
    D(\measurement(\rho _\kindex)\parallel \measurement(\hat\rho))
    \end{align*}
    
    As a direct application
    of \Cref{lemma:datahiding-correlated-twoway}
    to \Cref{lemma:repeatersep-singlecopy},
    we have now the following corollary.
    
    \begin{corollary}%
        \label{lemma:rded-singlecopy}
        Let $\rhoA$ 
        and $\rhoB$ 
        be any pair of key correlated states
        with at least one separable {\keyattacked}. Then:
        \[R_D^{sc}(\rhoA,\rhoB) \leq {|\keyB|}\cdot E_D(\rhoA \otimes \rhoB)\;.\]
    \end{corollary}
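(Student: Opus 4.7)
The plan is to bracket $R_D^{sc}(\rhoA,\rhoB)$ between two inequalities that share the same reference state $\sigma=\hat\rhoA\otimes\hat\rhoB$: an upper bound coming from \Cref{lemma:repeatersep-singlecopy} and a lower bound coming from the two-way datahiding inequality \Cref{lemma:datahiding-correlated-twoway}. The separability hypothesis on at least one of the {\keyattacked}s is exactly what guarantees that $\sigma$ is usable in \Cref{lemma:repeatersep-singlecopy}.

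For the first step I would invoke \Cref{lemma:repeatersep-singlecopy}. Without loss of generality assume $\hat\rhoA$ is separable in $A\!:\!\charlieA$; then $\sigma=\hat\rhoA\otimes\hat\rhoB$ is separable across $A\!:\!\charlie B$, so $\sigma\in\sep_\repcutA$, and the theorem delivers
\[
R_D^{sc}(\rhoA,\rhoB)\;\leq\;D_\measurecab(\rhoA\otimes\rhoB\,\|\,\hat\rhoA\otimes\hat\rhoB).
\]
Since the computational-basis measurement on the key systems factorises across tensor products, $\hat\rhoA\otimes\hat\rhoB=\widehat{\rhoA\otimes\rhoB}$, so the right-hand side is in precisely the form of the quantity that appears in \Cref{lemma:datahiding-correlated-twoway}.

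For the second step I would view $\rhoA\otimes\rhoB$ as a single bipartite state across the $\charlie\!:\!AB$ cut. Both factors are key correlated in their own cuts, and the tensor product of their maximally correlated subspaces is, up to a reordering of tensor factors, the maximally correlated subspace on the combined key systems of $\charlie$ and of $AB$; hence $\rhoA\otimes\rhoB$ is itself key correlated with respect to $\charlie\!:\!AB$. The measurements in $\locc_{\measurements\charlie:AB}$ appearing in \Cref{lemma:datahiding-correlated-twoway} coincide with those in $\locc_\measurecab$. Applying the lemma, retaining only the $\kindex=0$ term in the positive sum (so that $(\rhoA\otimes\rhoB)_0=\rhoA\otimes\rhoB$), and then taking the supremum over measurements gives
\[
|\keyB|\cdot E_D(\rhoA\otimes\rhoB)\;\geq\;D_\measurecab(\rhoA\otimes\rhoB\,\|\,\widehat{\rhoA\otimes\rhoB}),
\]
which chained with the bound from the first step yields the corollary.

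The main obstacle is the careful bookkeeping of the key-system structure on the bipartite $\charlie\!:\!AB$ cut and the correct identification of the prefactor that \Cref{lemma:datahiding-correlated-twoway} produces, namely checking that it really is $|\keyB|$ once the combined Bob-side key is identified and the $\kindex=0$ phase-flip term is singled out from the full sum. The remaining ingredients---the WLOG choice between $\sep_\repcutA$ and $\sep_\repcutB$, the commutation of the hat operation with the tensor product, and dropping all but one term from a non-negative sum---are routine and add no conceptual difficulty.
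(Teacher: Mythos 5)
Your proposal is correct and follows essentially the same route as the paper's own proof: take $\sigma=\hat\rho\otimes\hat\rho\,'$ in \Cref{lemma:repeatersep-singlecopy} (separable in the required cut by the hypothesis on the {\keyattacked}s), then bound the resulting measured relative entropy by adding the nonnegative phase-flipped terms and applying \Cref{lemma:datahiding-correlated-twoway} to $\rho\otimes\rho\,'$ viewed as a key correlated state across the Charlie-versus-$AB$ cut. The prefactor bookkeeping you single out as the main obstacle (whether the lemma applied to the combined state yields $|\keyB|$ or the dimension of the full non-measured key register $\keyA\keyB$) is treated at exactly the same level of detail in the paper, whose proof likewise just asserts the factor in the final inequality, so your argument matches the published one.
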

    \begin{proof}
        \begin{align*}
        R_D^{sc}(\rhoA,\rhoB)
        &\leq  D_\measurecab(\rhoA\otimes\rhoB\parallel \hat\rhoA \otimes \hat{\rhoB})
        \\&=     \sup _{\measurement_\measurecab} D(\measurement(\rhoA\otimes\rhoB)\parallel \measurement(\hat\rhoA \otimes \hat{\rhoB}))
        \\&\leq  \sup _{\measurement_\measurecab} \sum_{jk} D(\measurement(\rhoA_j\otimes\rhoB_k)
        \parallel \measurement(\hat\rhoA \otimes \hat\rhoB))
        \\&\leq  |\keyB| \cdot E_D(\rhoA \otimes \rhoB)
        \qedhere
        \end{align*}
    \end{proof}
    
\end{module}


\section{{Examples}}

\begin{module}
    \newcommand{\fourier}{{\mathbb{U}^\Gamma}} 
    \newcommand{\pptg}{\xi_\fourier}
    \newcommand{\pptinvg}{\xi_\Gamma}
    
    \newcommand*{\flip}{\mathbb{S}}
    \newcommand{\flipg}{\gamma_\flip}
    \begin{example}[{The Swap {\privatebit}s $\bm\flipg$~\cite{secure}}]\hfill\\
        This is a class of {\bell} {\privatebit}s with $m=1$,
        for each dimension $d=|\shieldA|=|\shieldB|$ it defines:
        \begin{equation}
        \label{def:flip-gamma}
        \flipg
        = \frac{1}{2} \quantity( 1 + \frac{1}{d} ) \phi_+ \otimes \rho _{s}
        + \frac{1}{2} \quantity( 1 - \frac{1}{d} ) \phi_- \otimes \rho _{a}
        \end{equation}
        for each dimension $d>1$,
        where $\rho _{s}$ and $\rho _{a}$
        are  the symmetric and anti-symmetric states
        in $\complex^d \otimes \complex^d$,
        the extreme Werner states~\cite{wernerstates}.
        %
        In {\privatebit} form, they are defined by:
        \begin{align*}
        \sigma &
        = \frac{\one}{d^2}
        &
        \twisting &= \one_2 \otimes (\proj 0 \otimes \one + \proj 1 \otime \flip)
        \end{align*}
        where $\flip = \sum _{i,j=0}^{d-1} \ketbra{ij}{ji}$ is the \emph{swap} operator.
        Notice that the swap is unitary and hermitian,
        thus $\flip^\dagger\flip = \flip\flip^\dagger = \flip^2=\one$;
        this gives the following block form:
        \[
        \flipg = \frac{1}{2} \frac{1}{d^2}
        \matrixquantity
        [\one & 0 & 0 & \flip \\
        0 & 0 & 0 & 0 \\
        0 & 0 & 0 & 0 \\
        \flip & 0 & 0 & \one]
        \;.\]
        This is a strictly irreducible {\privatebit} so $K_D(\flipg) = 1$.
        By the log-negativity upper bound
        on distillable entanglement~\cite{logneg}, we have
        \[E_D(\flipg)\leq E_N(\flipg) = \log_2 \quantity(1+\frac{1}{d} )
        \]
        which vanishes for large enough $d$.
    
        For the Swap {\privatebit}s
        we can immediately apply \Cref{lemma:rded-oneway}.
        This is the same example from the main text.
        \begin{corollary}%
            \label{lemma:flipg}
            \[ R_D^\to(\flipg,\flipg) \leq 2 \log_2 \quantity(1+\frac{1}{d} )
            \leq 2\frac{\log_2 e}{d} \;.\]
        \end{corollary}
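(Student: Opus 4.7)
The plan is to apply Corollary \ref{lemma:rded-oneway} directly, noting that both hypotheses are satisfied for $\flipg$. First I would check that $\flipg$ is key correlated: it is a {\bell} {\privatebit} by construction (see \Cref{def:flip-gamma}), so by the main text discussion it has support in the maximally correlated subspace on $\key$. Next I would verify that $\hat\flipg$ is separable. Since $\hat\phi_+=\hat\phi_-=\hat\Phi$, the phase flip is erased by the measurement on $\key$, so
\[
\hat\flipg
= \hat\Phi\otimes\bigl[\tfrac{1}{2}(1+\tfrac{1}{d})\rho_s+\tfrac{1}{2}(1-\tfrac{1}{d})\rho_a\bigr]
= \hat\Phi\otimes\tfrac{\one}{d^2},
\]
which is manifestly separable (both factors are classical/maximally mixed). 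So the hypothesis of \Cref{lemma:rded-oneway} holds, and we get
\[
R_D^{\charlie\to\ab}(\flipg,\flipg)
\leq E_D^{\charlieA\to A}(\flipg\otimes\flipg).
\]

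From here the main inequality follows from standard properties of the log-negativity. The log-negativity $E_N$ is a known upper bound on the one-way distillable entanglement, is additive under tensor products, and, by the block form recalled in the example, satisfies $E_N(\flipg)=\log_2(1+\tfrac{1}{d})$. Chaining these, $E_D^{\charlieA\to A}(\flipg\otimes\flipg)\leq E_N(\flipg\otimes\flipg)=2E_N(\flipg)=2\log_2(1+\tfrac{1}{d})$, which gives the first stated bound. The second bound is the elementary estimate $\ln(1+x)\leq x$ for $x\geq 0$, applied with $x=1/d$.

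There is no real obstacle here: once \Cref{lemma:rded-oneway} is in place, the only non-trivial input is the computation of $E_N(\flipg)$, which is already cited in the example, and the separability of $\hat\flipg$, which is immediate from the collapse of the phase-flipped Bell states under measurement in the computational basis. The proof is essentially one display plus two short inequalities.
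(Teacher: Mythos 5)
Your proposal is correct and follows essentially the same route as the paper: verify that the state is key correlated with a separable {\keyattacked} (the weighted mixture of the extreme Werner states being maximally mixed), apply \Cref{lemma:rded-oneway}, bound the resulting one-way distillable entanglement by the additive log-negativity to get $2\log_2(1+1/d)$, and finish with $\ln(1+x)\le x$. The only cosmetic difference is that the paper certifies separability of the {\keyattacked} via strict irreducibility (the shield state being maximally mixed), whereas you verify it by direct computation on the Bell {\privatebit} form.
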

        \noindent Aside from being the first example
        of an upper bound on {\rkey} for NPT states,
        \Cref{lemma:flipg} also improves on the
        {single copy {\rkey}} upper bound
        previously known~\cite{limits}:
        \[ R_D^{sc}(\flipg,\flipg)
        \leq  O\quantity(\frac{\log_2 d}{d}) 
        \;.\qedhere\]
    \end{example}
    
    \bigskip
    
    \begin{module}
        \providecommand{\fourier}{{\mathcal{U}}} 
        \providecommand{\flower}{{\mathbb{U}}} 
        
        In general, any unitary matrix in $d$ dimensions can be used to define a {\privatebit}~\cite{lowdimensional}.
        Here we focus only on the following special case:
        \begin{align}
        U &= \sum \nolimits_{ij} \frac{1}{\sqrt d} u_{ij} \ketbra{i}{j}
        \label{eq:U-fourier-flower}
        \intertext{such that $|u_{ij}| =1$, an example being
            the discrete Fourier transform.
            For each such $U$ we then define the following operators,
            to be used in the examples to follow:}
        \flower  &\coloneqq \sum \nolimits_{ij} u_{ij} \ketbra{ii}{jj}
        \nonumber
        \\
        \fourier &\coloneqq \sum \nolimits_{ij} u_{ij} \ketbra{ij}{ji}
        \nonumber
        \end{align}
        {where $(\cdot) ^\Gamma$ denotes the partial transpose. 
            Notice that $\fourier$ is a unitary and $\frac{\flower}{\sqrt{d}}$ is unitary
            in the maximally correlated subspace, namely}
        \begin{align*}
        \frac{\flower}{\sqrt d} \frac{\flower ^\dagger}{\sqrt d}
        = \frac{\flower ^\dagger}{\sqrt d} \frac{\flower}{\sqrt d} 
        &= \one_\corr
        \\
        \fourier ^\dagger \fourier = \fourier \fourier ^\dagger 
        &= \one\;.
        \end{align*}

        \begin{example}
            [{The Fourier {\privatebit}s $\bm{\gamma_\fourier}$~\cite{lowdimensional}}]
            The class of Fourier {\privatebit}s defines for $m=1$,
            for each $d=|\shieldA|=|\shieldB|$
            and for each $U$ as in \Cref{eq:U-fourier-flower}:
            \begin{align*}
            \sigma &= \frac{\one}{d^2}
            &
            \twisting &= \one_2 \otimes (\proj 0 \otimes \one + \proj 1 \otimes \fourier)
            \end{align*}
            or in block form:
            \begin{equation}
            \label{def:fourier-gamma}
            \gamma_\fourier = \frac{1}{2} \frac{1}{d^2}
            \matrixquantity
            [\one & 0 & 0 & \fourier \\
            0 & 0 & 0 & 0 \\
            0 & 0 & 0 & 0 \\
            \fourier^\dagger & 0 & 0 & \one] \;.
            \end{equation}
            Notice that in general these are not {\bell} {\privatebit}s
            because $\fourier$ is in general not hermitian.
            $U$ is usually taken to be the discrete Fourier transform, thus the name.
            
            This is also a strictly irreducible {\privatebit}, thus $K_D(\gamma_\fourier) = 1 $,
            and again we have an upper bound on distillable entanglement via the log-negativity:
            \[ E_D(\gamma_\fourier) \leq E_N(\gamma_\fourier)
            = \log_2 \quantity(1+ \frac{1}{\sqrt d} )
            \;.\]
            
            \begin{corollary}
                \[R_D^\to(\gamma_\fourier,\gamma_\fourier)
                \leq 2 \log_2 \quantity(1+ \frac{1}{\sqrt d} )
                \leq 2 \frac{\log_2 e}{\sqrt{d}} \;.\qedhere\]
            \end{corollary}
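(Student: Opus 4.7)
The plan is to mirror the argument of the preceding Swap {\privatebit} example: invoke Corollary~\ref{lemma:rded-oneway} to reduce the one-way {\rkey} to a one-way distillable entanglement across a tensor product, and then bound the latter from above by the log-negativity.

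First I would verify that the hypotheses of Corollary~\ref{lemma:rded-oneway} are met for the pair $(\gamma_\fourier, \gamma_\fourier)$. Being a {\privatebit}, $\gamma_\fourier$ is key correlated, since the twisting $\twisting$ acts as the identity on the key systems and therefore $\gamma_\fourier$ inherits the support of $\Phi$ on the maximally correlated subspace of $\key$. Reading off the block form~\eqref{def:fourier-gamma}, or equivalently observing that $\sigma = \one/d^2$ commutes with every unitary, the key-attacked state is
\[\hat\gamma_\fourier = \hat\Phi \otimes \frac{\one}{d^2},\]
which is manifestly separable in the $\ab$ cut.

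Next I would apply Corollary~\ref{lemma:rded-oneway} to obtain
\[R_D^\to(\gamma_\fourier,\gamma_\fourier) \leq E_D^\to(\gamma_\fourier \otimes \gamma_\fourier),\]
and then chain the log-negativity bound $E_D^\to(\rho) \leq E_N(\rho)$~\cite{logneg} with additivity of $E_N$ under tensor products, which gives
\[E_D^\to(\gamma_\fourier \otimes \gamma_\fourier) \leq 2\,E_N(\gamma_\fourier) = 2\log_2\!\left(1 + \frac{1}{\sqrt d}\right),\]
the value of $E_N(\gamma_\fourier)$ being already recorded in the example statement. The elementary estimate $\log_2(1+x)\leq x\log_2 e$ valid for $x\geq 0$ then delivers the second stated inequality.

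The argument presents no genuine obstacle, being structurally identical to the Swap case; the only point that warrants a second look is the separability of $\hat\gamma_\fourier$. Here the non-hermiticity of $\fourier$ could in principle have caused concern, but the key-attacked state depends only on the diagonal blocks of~\eqref{def:fourier-gamma}, in which $\fourier$ does not appear at all, so separability of $\hat\gamma_\fourier$ is immediate and no additional work is required.
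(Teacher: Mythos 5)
Your proposal is correct and follows essentially the same route as the paper: check that $\gamma_\fourier$ is key correlated with $\hat\gamma_\fourier=\hat\Phi\otimes\one/d^2$ separable (the paper notes this via strict irreducibility, since $\sigma$ is maximally mixed), apply \Cref{lemma:rded-oneway}, and then bound the one-way distillable entanglement by the log-negativity. Your use of additivity of $E_N$ on $\gamma_\fourier\otimes\gamma_\fourier$ is in fact a slightly cleaner way to reach the factor of $2$ than the paper's intermediate step $2E_D^\to(\gamma_\fourier)$, but the argument is the same in substance.
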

        \end{example}

        \begin{example}
            [{The Flower {\privatebit}s $\bm{\gamma_\flower}$~\cite{lowdimensional}}]
            \hfill\\
            Similarly, the class of Flower {\privatebit}s defines
            for $m=1$, for each $d=|\shieldA|=|\shieldB|$
            and for each unitary $U$:
            \begin{align*}
            \sigma & = \frac{\one_\corr}{d}
            &
            \twisting &= \one_{2} \otimes (\proj 0 \otimes \one + \proj 1 \otimes \flower)
            \end{align*}
            or in block form:
            \begin{equation}
            \label{def:flower-gamma}
            \gamma_\flower = \frac{1}{2} \frac{1}{d}
            \matrixquantity
            [\one_\corr & 0 & 0 & \frac\flower{\sqrt d} \\
            0 & 0 & 0 & 0 \\
            0 & 0 & 0 & 0 \\
            \frac{\flower^\dagger}{\sqrt d} & 0 & 0 & \one_\corr]\;.
            \end{equation}
            Again, these are not {\bell} {\privatebit}s in general.
            The first such an example was the \emph{flower state}~\cite{paradigms},
            which is obtained when $U$ is tensor products of the Hadamard transform. 
            
            They are still strictly irreducible {\privatebit}s, thus $K(\gamma_\flower) = 1 $.
            However, for the same reason that makes the log-negativity of the Fourier {\privatebit}s small,
            the log-negativity of the Flower {\privatebit}s becomes large
            and thus it cannot be used to find a meaningful bound
            on the distillable entanglement:
            \[E_N(\gamma_\flower) = \log_2 \quantity(1+ {\sqrt d} ) \;.\]
            Indeed, this can be far from the relative entropy of entanglement.
            Just like for the flower state,
            $E_R(\gamma_\flower)=1$
            because the relative entropy of entanglement is non lockable
            and $\gamma_\flower$ becomes separable
            after measuring either key system
            in the computational basis~\cite{flowerstate}.
            On the other hand, we can actually compute
            the distillable entanglement explicitly
            via the hashing bound $H(B) - H(AB)$,
            because $\gamma_\flower$ has support only
            on the maximally correlated subspace of $\key\shield$~\cite{e-maxcorr}.
            Since $\frac{\fourier}{\sqrt d}$ is a unitary
            in the maximally correlated subspace,
            it can be diagonalized in this subspace
            with all diagonal elements of unit module.
            In short, we find
            \[H(\key\shield)_{\gamma_\flower} = \log_2 d\]
            while the marginals are maximally mixed so
            \[H(\keyB\shieldB)_{\gamma_\flower} = 1 + \log_2 d\;.\]
            Therefore, for the class of Flower {\privatebit}s:
            \begin{equation}
            \label{eq:E-flower-gamma}
            E_D(\gamma_\flower) 
            = R(\gamma_\flower,\gamma_\flower) 
            = K(\gamma_\flower) 
            = E_R(\gamma_\flower) = 1\;.
            \end{equation}
            While they might not seem interesting, we will need these states for the PPT invariant examples.
        \end{example}

    \end{module}
    
    \begin{module}
        \providecommand{\fourier}{{\mathcal{U}}} 
        \providecommand{\flower}{{\mathbb{U}}} 
        
        \providecommand{\pptg}{\xi_\fourier}
        
        \begin{example}[{The PPT (noisy) {\privatebit}s $\bm\pptg$~\cite{limits}}]%
            \label{def:pptinvg}
            \newcommand{\phrt}{\vphantom{\sqrt{d}}}
            These are not exact {\privatebit}s,
            they are approximate {\privatebit}s
            that can be made arbitrarily close to the Fourier {\privatebit}s
            while still being PPT\@.
            The class of PPT {\privatebit}s defines
            (for $m=1$, for each $d=|\shieldA|=|\shieldB|$
            and for each  $U$ as in \Cref{eq:U-fourier-flower}):
            \begin{equation}
            \label{def:ppt-gamma}
            \pptg = \frac{1}{1 + \frac{1}{\sqrt d}}
            \quantity( \gamma_\fourier + \frac{1}{\sqrt d} 
            X_{\keyA} \hat\gamma_\flower X_{\keyA} )
            \end{equation}
            where the function of the local bit flip
            is to move the {\keyattacked} $\hat\gamma_\flower$ in the orthogonal subspace.
            Namely, in block form:
            \[
            \pptg = \frac{1}{2}  \frac{1}{1 + \frac{1}{\sqrt d}}
            \matrixquantity
            [\frac{\one}{d^2} & 0 & 0 & \frac{\fourier}{d^2} \\
            0 & \frac{1}{\sqrt d} \frac{\one_\corr}{d\phrt} & 0 & 0 \\
            0 & 0 & \frac{1}{\sqrt d} \frac{\one_\corr}{d\phrt} & 0 \\
            \frac{\fourier^\dagger}{d^2} & 0 & 0 & \frac{\one}{d^2}]
            \;.\]
            One can check that this noise is just enough to make them PPT,
            and that remarkably, the amount of noise needed in the mixture goes  to zero for large $d$.
            The PPT {\privatebit}s are engineered to become close to the set of separable states after partial transposition.
            Indeed, since $\one$ and $\one_\corr$ are PPT invariant, we find
            \[
            \pptg ^\Gamma = \frac{1}{2}  \frac{1}{1 + \frac{1}{\sqrt d}}
            \matrixquantity
            [\frac{\one}{d^2} & 0 & 0 & 0 \\
            0 & \frac{1}{\sqrt d} \frac{\one_\corr}{d\phrt} &
            \frac{1}{\sqrt d} \frac{\phantom d \flower}{d \sqrt d} & 0 \\
            0 & \frac{1}{\sqrt d} \frac{\phantom d \flower^{\mathrlap\dagger}}{d \sqrt d} &
            \frac{1}{\sqrt d} \frac{\one_\corr}{d\phrt} & 0 \\
            0 & 0 & 0 & \frac{\one}{d^2}]
            \;.\]
            and thus:
            \[\pptg^\Gamma =  \frac{1}{1 + \frac{1}{\sqrt d}} \quantity
            (\hat\gamma_\fourier + \frac{1}{\sqrt d} X_{\keyA} \gamma_\flower X_{\keyA})
            \]
            which is suddenly mostly a separable {\keyattacked}
            with a vanishing mixture of a Flower {\privatebit}.
            
        Until now, the PPT {\privatebit}s were the only example in the literature
        for which the {\rkey} could be upper bound by a computable quantity.
        Because these {\privatebit}s are PPT, 
        the distillable entanglement is zero ($E_D(\pptg)=0$);
        however, $\pptg$ are not exact {\privatebit},
        so we cannot use \Cref{lemma:rded-oneway} directly anymore.
        Instead, we need to exploit
        the monotonicity of the {\rkey} under one-way LOCC operations
        and the fact that PPT {\privatebit}s are obtained
        by mixing the Fourier {\privatebit}s with noise
        via a one-way LOCC operation, thus we find:
        
        \begin{corollary}%
            \label{lemma:rd-ppt-gamma}
            \[R_D^\to(\pptg,\pptg)
            \leq 2\log_2 \quantity(1+ \frac{1}{\sqrt d} )
            \leq 2 \frac{\log_2 e}{\sqrt d }\;.\]
        \end{corollary}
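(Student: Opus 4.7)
The plan is to reduce to the already-proven bound for the Fourier pbits $\gamma_\fourier$ via a one-way LOCC monotonicity argument; one cannot apply \Cref{lemma:rded-oneway} to $\pptg$ directly because $\pptg$ is not key correlated (it has weight in the anti-correlated, bit-flipped sector). The first observation I would make is to compute $\hat\gamma_\flower$ directly from \Cref{def:flower-gamma}: measuring the key in the computational basis kills the off-diagonal blocks containing $\flower$ and yields
\[
\hat\gamma_\flower = \hat\Phi \otimes \frac{\one_\corr}{d},
\]
which is manifestly $\ab$-separable because both $\hat\Phi$ and $\one_\corr/d$ are diagonal in the computational basis, i.e.\ classical-classical across the $A{:}B$ cut. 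The local bit flip $X_{\keyA}$ preserves separability, so the ``noise'' term $X_{\keyA}\hat\gamma_\flower X_{\keyA}$ appearing in \Cref{def:ppt-gamma} is $\ab$-separable.

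Consequently, $\pptg$ is a convex combination of $\gamma_\fourier$ and an $\ab$-separable state. I would realize this mixture as a one-way LOCC preprocessing map applied to $\gamma_\fourier$: using shared randomness that Charlie distributes by sending a biased classical bit to Alice (and independently to Bob on the other link), the parties either keep $\gamma_\fourier$ with probability $\tfrac{1}{1+1/\sqrt d}$ or, with the complementary probability, discard it and locally re-prepare the separable state $X_{\keyA}\hat\gamma_\flower X_{\keyA}$ (which is possible precisely because that state is separable and the classical correlations needed to prepare it can be supplied by Charlie's randomness). Both preprocessing maps lie within one-way LOCC from Charlie, hence inside the class allowed for $R_D^\to$.

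Monotonicity of $R_D^\to$ under such one-way LOCC preprocessing on the two links then gives
\[
R_D^\to(\pptg, \pptg) \leq R_D^\to(\gamma_\fourier, \gamma_\fourier),
\]
and applying the Fourier pbit corollary proved just above bounds the right-hand side by $2\log_2(1+1/\sqrt d)$. The final estimate follows from the elementary inequality $\log(1+x)\leq x$ in base $2$.

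The main (mild) obstacle is verifying that the preprocessing can be honestly folded into the one-way constraint of $R_D^\to$: this is where the separability of $\hat\gamma_\flower$ is essential, because it lets Charlie supply all the classical correlation needed for the noise contribution via his outgoing one-way channel, rather than requiring a back-channel from Alice or Bob. Without this separability, the bit-flip noise could not be absorbed into free preprocessing and the reduction from $\pptg$ to $\gamma_\fourier$ would fail.
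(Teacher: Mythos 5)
Your proof is correct and takes essentially the same route as the paper's: the paper likewise observes that the PPT private bit is obtained from the Fourier private bit by mixing in (separable) noise via a one-way LOCC operation, invokes monotonicity of the one-way key repeater rate under such preprocessing, and then applies the bound already proved for the Fourier private bits together with $\log_2(1+x)\leq x\log_2 e$. Your explicit computation of the key attacked Flower state and the check that Charlie's outgoing communication suffices to implement the mixing merely spell out the one-line reduction the paper states.
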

        
        Because Charlie is traced out at the end
        of the key repeater distillation protocol,
        the {\rkey} is invariant
        under transposition of the input on Charlie's systems,
        thus giving the following upper bound on the {\rkey}~\cite{limits}:
        {\repeatersystems%
            \begin{equation}
            \label{bound:keypartiatranspose}
            R_D(\rhoA,\rhoB)
            \leq \min\{K_D({\rhoA}^\Gamma),K_D({\rhoB}^\Gamma)\}\;.
            \end{equation}}
        The previous upper bound on $R_D(\pptg,\pptg)$ was computed
        by estimating an upper bound on $K_D(\pptg^\Gamma)$:
        \[R_D(\pptg,\pptg) \leq 
        O\quantity(\frac{\log_2 d}{\sqrt{d}})
        \;.\]
        However this bound is not optimal;
        used properly, \Cref{bound:keypartiatranspose},\
        yields the following bound.
        
        \begin{corollary}
            \[R_D(\pptg,\pptg) \leq \frac{1}{\sqrt{d} + 1}\;.\]
        \end{corollary}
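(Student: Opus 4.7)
The plan is to combine the partial-transpose bound~\eqref{bound:keypartiatranspose} with convexity of the relative entropy of entanglement applied to the explicit decomposition of $\pptg^\Gamma$ given just above the statement. By~\eqref{bound:keypartiatranspose},
\[
R_D(\pptg,\pptg) \leq K_D(\pptg^\Gamma),
\]
so the whole task reduces to upper-bounding $K_D(\pptg^\Gamma)$.

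Reading off the block form displayed in the example, $\pptg^\Gamma$ is the explicit mixture
\[
\pptg^\Gamma = \frac{1}{1+\tfrac{1}{\sqrt d}} \,\hat\gamma_\fourier \;+\; \frac{1}{\sqrt d + 1}\, X_{\keyA}\,\gamma_\flower\, X_{\keyA},
\]
with weights $p_0 = 1/(1+1/\sqrt d)$ and $p_1 = 1/(\sqrt d+1)$. The first component $\hat\gamma_\fourier$ is a key-attacked state of a Fourier private bit and is separable (this is exactly why $\pptg$ was engineered to be PPT). The second component is the Flower private bit $\gamma_\flower$ up to the local unitary $X_{\keyA}$, and~\eqref{eq:E-flower-gamma} gives $E_R(\gamma_\flower)=1$, hence $E_R(X_{\keyA}\gamma_\flower X_{\keyA})=1$ by invariance of $E_R$ under local unitaries.

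Now use $K_D \leq E_R$ together with convexity of $E_R$ (both standard):
\[
K_D(\pptg^\Gamma) \leq E_R(\pptg^\Gamma) \leq p_0\, E_R(\hat\gamma_\fourier) + p_1\, E_R(X_{\keyA}\gamma_\flower X_{\keyA}) = 0 + \frac{1}{\sqrt d +1}.
\]
Chaining with the partial-transpose bound yields the claim. There is no real obstacle: the only things to be careful about are (i) identifying the two components from the displayed block form and their weights correctly, and (ii) citing the right facts, namely separability of $\hat\gamma_\fourier$, the computed $E_R(\gamma_\flower)=1$ from the Flower example, convexity and local-unitary invariance of $E_R$, and the bound $K_D\le E_R$. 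Everything else is an immediate composition of these inequalities.
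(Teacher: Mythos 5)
Your proposal is correct and follows essentially the same route as the paper: the chain $R_D(\pptg,\pptg)\leq K_D(\pptg^\Gamma)\leq E_R(\pptg^\Gamma)$, followed by convexity of $E_R$ applied to the decomposition of $\pptg^\Gamma$ into the separable $\hat\gamma_\fourier$ and the locally rotated Flower private bit with $E_R(\gamma_\flower)=1$, with the same weights. The only cosmetic difference is that you make the local-unitary invariance of $E_R$ explicit, which the paper leaves implicit.
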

        \begin{proof}
            \renewcommand{\qedsymbol}{\ensuremath{\square}}
            By \Cref{bound:keypartiatranspose}
            and convexity of the relative entropy of entanglement we find:
            \begin{align*}
            R_D(\pptg,\pptg)
            &
            \leq K_D(\pptg^\Gamma)
            \\&
            \leq E_R(\pptg^\Gamma)
            \\&
            \leq \frac{1}{1 + \frac{1}{\sqrt d}} \quantity
            (E_R(\hat\gamma_\fourier)+\frac{1}{\sqrt d} E_R(\gamma_\flower))\;.
            \end{align*}
            However $\hat\gamma_\fourier$ is separable
            and, according to \Cref{eq:E-flower-gamma}, $E_R (\gamma_\flower)=1$.
            Therefore:
            \[R_D(\pptg,\pptg) \leq \frac{1}{\sqrt{d}+1}\;.\qedhere\]
        \end{proof}
        This bound is still better than \Cref{lemma:rd-ppt-gamma}
        and holds for two-way protocols.
        \end{example}

    \end{module}
    
    \begin{module}
        \providecommand{\fourier}{{\mathcal{U}}} 
        \providecommand{\flower}{{\mathbb{U}}} 
        
        \providecommand{\pptg}{\xi_\fourier}
        \providecommand{\pptinvg}{\xi_\Gamma}

        \begin{example}
            [{The PPT invariant (noisy) {\privatebit}s $\bm\pptinvg$~\cite{lowdimensional}}]
            \label{example:ppt-invariant}
            \newcommand{\phrt}{\vphantom{\sqrt{d}}}
            By substituting the {\keyattacked}
            with the Flower {\privatebit}
            in \Cref{def:ppt-gamma},
            the expression becomes PPT invariant.
            Namely, the class of PPT invariant {\privatebit}s defines
            \begin{equation}
            \pptinvg =  \frac{1}{1 + \frac{1}{\sqrt d}}
            \quantity( \gamma_\fourier + \frac{1}{\sqrt d} 
            X_{\keyA} \gamma_\flower X_{\keyA}^\dagger )
            = (\pptinvg) ^\Gamma\;.
            \end{equation}
            (where $X_{\keyA} $ is the bitflip on ${\keyA}$) with block form
            \[ \pptinvg =
            \frac{1}{2}  \frac{1}{1 + \frac{1}{\sqrt d}}
            \matrixquantity
            [\frac{\one}{d^2} & 0 & 0 & \frac{\fourier}{d^2} \\
            0 & \frac{1}{\sqrt d} \frac{\one_\corr}{d\phrt} &
            \frac{1}{\sqrt d} \frac{ \flower}{d \sqrt d} & 0 \\
            0 & \frac{1}{\sqrt d} \frac{ \flower^{\mathrlap\dagger}}{d \sqrt d} &
            \frac{1}{\sqrt d} \frac{\one_\corr}{d\phrt} & 0 \\
            \frac{\fourier^\dagger}{d^2} & 0 & 0 & \frac{\one}{d^2}]
            \]
            which is clearly PPT invariant.

        \newcommand{\sqrnorm}{\frac{1}{1 + \frac{1}{\sqrt{d}}}}

        The fact that these {\privatebit}s are PPT invariant makes
        \Cref{bound:keypartiatranspose} useless,
        but a bound can still be computed combining
        one-way LOCC monotonicity,  \Cref{lemma:rded-oneway}
        and the Rains bound:
        \begin{corollary}%
            \label{lemma:pptinvg}
            \[R_D^\to(\pptinvg,\pptinvg) \leq 2\frac{1 + \log_2 e }{1 + \sqrt d}\;.\]
        \end{corollary}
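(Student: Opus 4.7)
The plan is to decompose $\pptinvg$ into its two private-bit constituents, use convexity of $R_D^\to$ to reduce the repeater rate to a weighted sum of component-wise rates, apply \Cref{lemma:rded-oneway} pairwise, and then bound the resulting single-copy distillable entanglements via the Rains bound. Writing
\begin{equation*}
\pptinvg \;=\; p\,\gamma_\fourier \;+\; (1-p)\, X_\keyA \gamma_\flower X_\keyA^\dagger
\end{equation*}
with $p = \sqrt{d}/(\sqrt{d}+1)$ and $1-p = 1/(\sqrt{d}+1)$, note that both $\gamma_\fourier$ and $\gamma_\flower$ are strictly irreducible {\privatebit}s, hence key correlated with separable \keyattacked, so \Cref{lemma:rded-oneway} applies to every pair chosen from $\{\gamma_\fourier,\gamma_\flower\}$.

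The convexity of $R_D^\to$ in each argument---obtained by appending a classical flag on each side recording the selected summand, so that the mixture is recovered by the local operation of tracing the flag while the classical-quantum extension distills at exactly the weighted average of the componentwise rates (achievability by conditioning; converse because no protocol can exploit coherence in a purely classical register)---yields
\begin{equation*}
R_D^\to(\pptinvg,\pptinvg) \;\leq\; \sum_{i,j\in\{0,1\}} p_i p_j\, R_D^\to(\alpha_i,\alpha_j),
\end{equation*}
with $\alpha_0=\gamma_\fourier$, $\alpha_1=X_\keyA \gamma_\flower X_\keyA^\dagger$, and $(p_0,p_1)=(p,1-p)$. Local unitary invariance of $R_D^\to$ strips the $X_\keyA$ flip (local at Alice in the Alice-Charlie copy, local at Charlie in the Charlie-Bob copy), so each summand reduces to $R_D^\to(\beta_i,\beta_j)$ with $\beta_0=\gamma_\fourier$ and $\beta_1=\gamma_\flower$, and \Cref{lemma:rded-oneway} then gives $R_D^\to(\beta_i,\beta_j)\leq E_D^\to(\beta_i\otimes\beta_j)$.

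The Rains bound $R$ now controls these single-copy entanglements: it upper bounds $E_D^\to$ and is subadditive, so $E_D^\to(\beta_i\otimes\beta_j)\leq R(\beta_i)+R(\beta_j)$. For $\gamma_\fourier$ log-negativity already suffices, $R(\gamma_\fourier)\leq E_N(\gamma_\fourier)=\log_2(1+1/\sqrt{d})$; but log-negativity is too loose for $\gamma_\flower$ (which has $E_N(\gamma_\flower)=\log_2(1+\sqrt{d})$), whereas the Rains bound gives $R(\gamma_\flower)\leq E_R(\gamma_\flower)=1$ by \Cref{eq:E-flower-gamma}---this is precisely where the Rains bound is indispensable. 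Summing the four weighted contributions yields $R_D^\to(\pptinvg,\pptinvg)\leq 2p\log_2(1+1/\sqrt{d}) + 2(1-p)$, after which $\log_2(1+x)\leq x\log_2 e$ together with $p=\sqrt{d}/(\sqrt{d}+1)$ produces the stated bound $2(1+\log_2 e)/(1+\sqrt{d})$.

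The main obstacle is justifying the two-argument convexity of $R_D^\to$, which is not stated explicitly in the paper. It is the natural analogue of the convexity of $E_D$ and reduces to the claim that a classical-quantum input distills at exactly the weighted average of the componentwise rates; one must additionally check that appending and subsequently tracing classical flags is consistent with the one-way communication constraint from Charlie, which holds because local preparations and partial traces are available to each party unconditionally and are one-way LOCC in either direction.
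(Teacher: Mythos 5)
You have the right downstream ingredients (applying \Cref{lemma:rded-oneway} to strictly irreducible {\privatebit}s, the subadditive Rains bound, $E_N$ for the Fourier states, $E_R(\gamma_{\mathbb U})=1$ for the Flower states), and your arithmetic reproduces the stated constant, but the argument breaks at the step you yourself flag as the main obstacle: the two-argument convexity $R_D^\to(\sum_i p_i\rho_i,\sum_j p_j\rho'_j)\le\sum_{ij}p_ip_j\,R_D^\to(\rho_i,\rho'_j)$. Your flag construction only gives the easy direction, $R_D^\to(\text{mixture})\le R_D^\to(\text{flagged extension})$, by monotonicity under tracing the flags. The claimed converse — that the flagged cq-state repeats key at exactly the weighted average of the component rates — is an additivity-type statement that is not known for operational rates such as $E_D$, $K_D$ or the {\rkey}, and the obstruction is not coherence in the classical register: after all flags are measured on $n$ copies, the residual state is a heterogeneous tensor product $\bigotimes_k\rho_{i_k}\otimes\rho'_{j_k}$ across the blocks, and nothing rules out a joint protocol on different blocks exceeding the weighted sum of the individual rates (exactly the cross-block activation that blocks proofs of convexity of $E_D$ and related rates). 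Nothing in the paper supplies such a convexity property for $R_D^\to$, so the reduction to the four pairwise rates is unsupported.

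The paper's proof is designed to avoid precisely this. It keeps the flag in place as a quantum system forming part of the shield of a single exact state $\alpha\propto\proj0\otimes\gamma_{\mathcal U}+\tfrac1{\sqrt d}\proj1\otimes\gamma_{\mathbb U}$, verifies that $\alpha$ is a strictly irreducible {\privatebit}, and notes that $\xi_\Gamma$ is obtained from $\alpha$ by one-way LOCC (a controlled bit flip followed by tracing the flag), so that $R_D^\to(\xi_\Gamma,\xi_\Gamma)\le R_D^\to(\alpha,\alpha)\le 2E_D^\to(\alpha)\le 2E_R^{PPT}(\alpha)$ by a single application of \Cref{lemma:rded-oneway} and the Rains bound. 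The mixture is decomposed only at the level of the relative entropy, $E_R^{PPT}(\alpha)\le D(\alpha\Vert\tilde\alpha)$ with the hand-picked PPT state $\tilde\alpha\propto\proj0\otimes\xi_{\mathcal U}+\tfrac1{\sqrt d}\proj1\otimes\hat\gamma_{\mathbb U}$, where block-diagonality makes the splitting into $D(\gamma_{\mathcal U}\Vert\xi_{\mathcal U})$ and $D(\gamma_{\mathbb U}\Vert\hat\gamma_{\mathbb U})$ an exact identity rather than an unproven convexity of an operational rate. If you wish to keep your structure, the fix is to move your flag into the shield of one private state in this way, not into a claimed linearity of $R_D^\to$.
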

        
        \begin{proof}
            \renewcommand{\qedsymbol}{\ensuremath{\square}}
            We introduce the following new states:
            \begin{align*}
            \alpha  &= \sqrnorm \quantity(  \proj{0} \otimes \gamma_\fourier
            + \frac{1}{\sqrt d} \proj{1} \otimes \gamma_\flower  )
            \\
            \tilde\alpha
            &= \sqrnorm \quantity(  \proj{0} \otimes \pptg
            + \frac{1}{\sqrt d} \proj{1} \otimes \hat\gamma_\flower  )
            \;.\end{align*}
            Who holds the additional qubit is irrelevant,
            but by making it part of the shield it is possible to show
            that $\alpha$ is actually a strictly irreducible {\privatebit}.
            Indeed the {\keyattacked} is separable and we have 
            \begin{equation*}
            \alpha {\propto} \matrixquantity[ 
                {\proj{0}} \otime \one {+} {\proj{1}} \otime \frac{\one_\corr}{\sqrt d} 
                & 0 & 0 &
                {{\proj{0}} \otime \fourier {+} {\proj{1}} \otime \flower} \\
                0 & 0 & 0 & 0 \\
                0 & 0 & 0 & 0 \\
                \quantity( {\proj{0}} \otime \fourier {+} {\proj{1}} \otime \flower )^\dagger & 0 & 0 &
                {\proj{0}} \otime \one {+} {\proj{1}} \otime \frac{\one_\corr}{\sqrt d}
                .]
            \end{equation*}
            We can check that this is a {\privatebit} by checking that the above matrix satisfies the block form, as explained in \Cref{sec:privatestates}. This reduces to check the following equality 
            \begin{align*}
            \proj{0}& \otime \one {+} \proj{1} \otime \frac{\one_\corr}{\sqrt d} = 
            \\=&\sqrt{
                {\quantity(\proj{0}\otime\fourier + \proj{1}\otime\flower) ^\dagger}
                {\quantity(\proj{0}\otime\fourier + \proj{1}\otime \flower)}}
            \end{align*}
            which proves that $\alpha$ is a {\privatebit}.
            One can obtain $\pptinvg$ from $\alpha$ via LOCC: 
            use the additional qubit to bit flip $\gamma_\flower$ but not $\gamma_\fourier$
            and then trace the qubit.
            Furthermore, $\tilde\alpha$ is clearly PPT,
            since it is mixture of the PPT states $\pptg$ and $\hat\gamma_\flower$.
            We find that $\alpha$ is close to $\tilde\alpha$:
            \begin{align}
            D(\alpha\|\tilde\alpha)
            &
            = \sqrnorm \quantity (  D(\gamma_\fourier \| \pptg)
            + \frac{1}{\sqrt d} D(\gamma_\flower  \| \hat\gamma_\flower ))
            \nonumber
            \\
            &
            = \sqrnorm \quantity(  \log \quantity( 1 + \frac{1}{\sqrt d} )
            + \frac{1}{\sqrt d} E_R(\gamma_\flower))
            \nonumber
            \\&
            \leq \frac{\sqrt d}{1 + \sqrt d} \quantity
            ( \frac{1}{\sqrt d \ln 2} + \frac{1}{\sqrt d} )
            \nonumber
            \\&
            = \frac{1 + {\log_2 e} }{1 + \sqrt d} \;.
            \label{eq:fourier-gamma-bound}
            \end{align}
            Now,  we use one-way LOCC monotonicity of $R_D^\to$,
            \Cref{lemma:rded-oneway},
            $E_D^\to(\rho)\leq E_R^{PPT}(\rho)$~\cite{rains-set},
            the fact that $\tilde\alpha$ is PPT
            and \Cref{eq:fourier-gamma-bound},
            in this order,  to show the claim:
            \begin{align*}
            R_D^\to(\pptinvg,\pptinvg)
            &
            \leq R_D^\to(\alpha,\alpha)
            \\&
            \leq 2 E_D^\to(\alpha)
            \\&
            \leq 2 E_R^{PPT}(\alpha) 
            \\&
            \leq 2 D(\alpha\|\tilde\alpha)
            \\&
            \leq 2\frac{1 + {\log_2 e}}{1 + \sqrt d}
            \tag*\qedhere
            \end{align*}
        \end{proof}        
        \end{example}
    \end{module}
\end{module}


\begin{thebibliography}{99}
\newcommand{\bibauthor}[2]{\bibinfo{author}{\StrLeft{#1}{1}.~{#2}}}
\newcommand{\bibtitle}[1]{\newblock\bibinfo{title}{#1}. \newblock}

\newcommand{\arXiv}[1]{arXiv: \href{http://arxiv.org/abs/#1}{#1}}

\newcommand{\journalplain}[4]
    {\emph{\bibinfo{journal}{#1}}\ \textbf{\bibinfo{volume}{#2}},\ \bibinfo{pages}{#3}\ (\bibinfo{year}{#4})}
\newcommand{\proceedingplain}[2]
    {\emph{\bibinfo{journal}{#1}},\ \bibinfo{pages}{#2}}

\newcommand{\printurl}{http://dx.doi.org/}

\newcommand{\journal}   [5]{\href{\printurl#5}{\journalplain{#1}{#2}{#3}{#4}}}
\newcommand{\proceeding}[3]{\href{\printurl#3}{\proceedingplain{#1}{#2}}}


\bibitem{dist_e}
\bibauthor{Charles H.}{Bennett},
\bibauthor{David P.}{DiVincenzo},
\bibauthor{John A.}{Smolin},
\bibauthor{William K.}{Wootters},
\bibtitle{Mixed-state entanglement and quantum error correction}
\journal{Phys.\ Rev.\ A}{54}{3824}{1996}{10.1103/PhysRevA.54.3824},
\arXiv{quant-ph/9604024}.

\bibitem{secure}
\bibauthor{Karol}{Horodecki},
\bibauthor{Micha\l}{Horodecki},
\bibauthor{Pawel}{Horodecki},
\bibauthor{Jonathan}{Oppenheim},
\bibtitle{Secure key from bound entanglement}
\journal{Phys.\ Rev.\ Lett.}{94}{160502}{2005}{10.1103/PhysRevLett.94.160502},
\arXiv{quant-ph/0309110}.

\bibitem{realpbit}
\bibauthor{Krzysztof}{Dobek},
\bibauthor{Michal}{Karpi\'nski},
\bibauthor{Rafal}{Demkowicz{-}Dobrza\'n\-ski},
\bibauthor{Konrad}{Banaszek},
\bibauthor{Pawel}{Horodecki},
\bibtitle{Experimental Extraction of Secure Correlations from a Noisy Private State}
\journal{Phys.\ Rev.\ Lett.}{106}{030501}{2011}{10.1103/PhysRevLett.106.030501},
\arXiv{1010.4575}.

\bibitem{limits}
\bibauthor{Stefan}{B\"auml},
\bibauthor{Matthias}{Christandl},
\bibauthor{Karol}{Horodecki},
\bibauthor{Andreas}{Winter},
\bibtitle{Limitations on Quantum Key Repeaters}
\journal{Nat.\ Commun.}{6}{6908}{2014}{10.1038/ncomms7908},
\arXiv{1402.5927}.

\bibitem{datahiding1}
\bibauthor{Barbara M.}{Terhal},
\bibauthor{David P.}{DiVincenzo},
\bibauthor{Debbie W.}{Leung},
\bibtitle{Hiding bits in Bell states}
\journal{Phys.\ Rev.\ Lett.}{86}{5807}{2001}{10.1103/PhysRevLett.86.5807},
\arXiv{quant-ph/0011042}.

\bibitem{datahiding2}
\bibauthor{David P.}{DiVincenzo},
\bibauthor{Barbara M.}{Terhal},
\bibauthor{Debbie W.}{Leung},
\bibtitle{Quantum data hiding}
\journal{IEEE Trans.\ Inf.\ Theory}{48}{No.\ 3, 580}{2002}{10.1109/18.985948}.
\arXiv{quant-ph/0103098}.

\bibitem{wernerstates}
\bibauthor{Reinhard F.}{Werner},
\bibtitle{Quantum states with Einstein-Podolsky-Rosen correlations
    admitting a hidden-variable model}
\journal{Phys.\ Rev.\ A}{40}{4277}{1989}{10.1103/PhysRevA.40.4277}.

\bibitem{wernerhiding}
\bibauthor{Tilo}{Eggeling},
\bibauthor{Reinhard F.}{Werner}
\bibtitle{Hiding classical data in multipartite quantum states}
\journal{Phys.\ Rev.\ Lett.}{89}{097905}{2002}{10.1103/PhysRevLett.89.097905},
\arXiv{quant-ph/0203004}.

\bibitem{lowdimensional}
\bibauthor{Karol}{Horodecki},
\bibauthor{\L.}{Pankowski},
\bibauthor{Micha\l}{Horodecki},
\bibauthor{Pawel}{Horodecki},
\bibtitle{Low dimensional bound entanglement with one-way distillable cryptographic key}
\journal{IEEE Trans.\ Inf.\ Theory}{54}{2621}{2008}{10.1109/TIT.2008.921709},
\arXiv{quant-ph/0506203}.

\bibitem{dw}
\bibauthor{Igor}{Devetak},
\bibauthor{Andreas}{Winter},
\bibtitle{Distillation of secret key and entanglement from quantum states}
\journal{Proc.\ R.\ Soc.\ Lond.\ A}{461}{207}{2005}{10.1098/rspa.2004.1372},
\arXiv{quant-ph/0306078}.

\bibitem{piani}
\bibauthor{Marco}{Piani},
\bibtitle{Relative Entropy of Entanglement and Restricted Measurements}
\journal{Phys.\ Rev.\ Lett.}{103}{160504}{2009}{10.1103/PhysRevLett.103.160504},
\arXiv{0904.2705}.

\bibitem{fekete}
\bibauthor{Michael}{Fekete},
\bibtitle{\"Uber die Verteilung der Wurzeln bei gewissen algebraischen
    Gleichungen mit ganzzahligen Koeffizienten}
\journal{Math. Z.}{17}{(1) 228}{1923}{10.1007/BF01504345}.

\bibitem{regularization}
\bibauthor{Patrick M.}{Hayden},
\bibauthor{Micha\l}{Horodecki},
\bibauthor{Barbara M.}{Terhal},
\bibtitle{The asymptotic entanglement cost of preparing a quantum state}
\journal{J.\ Phys.\ A}{34}{(35) 6891}{2001}{10.1088/0305-4470/34/35/314}, 
\arXiv{quant-ph/0008134}.

\bibitem{Emeasures}
\bibauthor{Ke}{Li},
\bibauthor{Andreas}{Winter},
\bibtitle{Relative entropy and squashed entanglement}
\journal{Commun.\ Math.\ Phys.}{326}{(1) 63}{2014}{10.1007/s00220-013-1871-2}, 
\arXiv{1210.3181}.

\bibitem{repeater}
\bibauthor{Hans Juergen}{Briegel},
\bibauthor{Wolfgang}{D\"ur},
\bibauthor{Juan Ignacio}{Cirac},
\bibauthor{Peter}{Zoller},
\bibtitle{Quantum Repeaters: The Role of Imperfect Local Operations in Quantum Communication}
\journal{Phys.\ Rev.\ Lett.}{81}{5932}{1998}{10.1103/PhysRevLett.81.5932},
\arXiv{quant-ph/9803056}.

\bibitem{concatenated}
\bibauthor{Emanuel}{Knill},
\bibauthor{Raymond}{Laflamme},
\bibtitle{Concatenated Quantum Codes}
\arXiv{quant-ph/9608012}.

\bibitem{repeater-generations}
\bibauthor{Sreraman}{Muralidharan},
\bibauthor{Linshu}{Li},
\bibauthor{Jungsang}{Kim},
\bibauthor{Norbert}{L\"utkenhaus},
\bibauthor{Mikhail D.}{Lukin},
\bibauthor{Liang}{Jiang},
\bibtitle{Optimal architectures for long distance quantum communication}
\journal{Sci. Rep.}{6}{20463}{2016}{10.1038/srep20463}.

\bibitem{alex-capacities}
\bibauthor{Matthias}{Christandl},
\bibauthor{Alexander Gerd C}{M\"uller-Hermes},
\bibtitle{Relative Entropy Bounds on Quantum,
    Private and Repeater Capacities}
\journal{Commun.\ Math.\ Phys.}{353}{821}{2017}{10.1007/s00220-017-2885-y}, 
\arXiv{1604.03448}.

\bibitem{paradigms}
\bibauthor{Karol}{Horodecki},
\bibauthor{Micha\l}{Horodecki},
\bibauthor{Pawel}{Horodecki},
\bibauthor{Jonathan}{Oppenheim},
\bibtitle{General paradigms for distilling classical key from quantum states}
\journal{IEEE Trans.\ Inf.\ Theory}{55}{1898}{2009}{10.1109/TIT.2008.2009798},
\arXiv{quant-ph/0506189}.

\bibitem{logneg}
\bibauthor{Guifre}{Vidal},
\bibauthor{Reinhard F.}{Werner},
\bibtitle{Computable measure of entanglement}
\journal{Phys.\ Rev.\ A}{65}{032314}{2002}{10.1103/PhysRevA.65.032314},
\arXiv{quant-ph/0102117}.

\bibitem{renner2005information}
\bibauthor{Renato}{Renner},
\bibauthor{Nicolas}{Gisin},
\bibauthor{Barbara}{Kraus},
\bibtitle{Information-theoretic security proof for quantum-key-distribution protocols}
\journal{Phys. Rev. A}{72}{012332}{2005}{10.1103/PhysRevA.72.012332},
\arXiv{quant-ph/0502064}.

\bibitem{kretschmann2004tema}
\bibauthor{Dennis}{Kretschmann},
\bibauthor{Reinhard F.}{Werner},
\bibtitle{Tema con variazioni: quantum channel capacity}
\journal{New J. Phys.}{6}{26}{2004}{10.1088/1367-2630/6/1/026} 
\arXiv{quant-ph/0311037}.

\bibitem{ppt2}
\bibauthor{Matthias}{Christandl},
{PPT square conjecture (problem G)}, in
\emph{Banff International Research Station workshop:
        Operator structures in quantum information theory},
\\\href{https://www.birs.ca/workshops/2012/12w5084/report12w5084.pdf}{\nolinkurl{https://www.birs.ca/workshops/2012/12w5084/report12w5084.pdf}}.









\bibitem{petz}
\bibauthor{D\'enes}{Petz},
\bibtitle{Sufficient subalgebras and the relative entropy of states of a von Neumann algebra}
\journal{Commun.\ Math.\ Phys.}{105}{(1) 123}{1986}{10.1007/BF01212345}.


\bibitem{relentropy}
\bibauthor{Vlatko}{Vedral},
\bibauthor{Martin Bodo}{Plenio},
\bibauthor{M.~A.}{Rippin},
\bibauthor{Peter L.}{Knight},
\bibtitle{Quantifying entanglement}
\journal{Phys.\ Rev.\ Lett.}{78}{12 2275}{1997}{10.1103/PhysRevLett.78.2275},
\arXiv{quant-ph/9702027}.

\bibitem{ppt-be}
\bibauthor{Micha\l}{Horodecki},
\bibauthor{Pawel}{Horodecki},
\bibauthor{Ryszard}{Horodecki},
\bibtitle{Mixed-state entanglement and distillation: is there a ``bound'' entanglement in nature?}
\journal{Phys.\ Rev.\ Lett.}{80}{5239}{1998}{10.1103/PhysRevLett.80.5239},
\arXiv{quant-ph/9801069}.

\bibitem{relent-continuity}
\bibauthor{Matthew J.}{Donald},
\bibauthor{Micha\l}{Horodecki},
\bibtitle{Continuity of Relative Entropy of Entanglement}
\journal{Phys.\ Lett.\ A}{264}{1999}{257}{10.1016/S0375-9601(99)00813-0}, 
\arXiv{quant-ph/9910002}.

\bibitem{rains-set}
\bibauthor{Koenraad}{\!Audenaert},
\bibauthor{Bart}{\!De \!Moor},
\bibauthor{Karl Gerd H.}{\!Vollbrecht},
\bibauthor{Reinhard F.}{\!Werner},
\bibtitle{Asymptotic Relative Entropy of Entanglement for Orthogonally Invariant States}
\journal{Phys.\ Rev.\ A}{66}{032310}{2002}{10.1103/PhysRevA.66.032310},
\arXiv{quant-ph/0204143}.

\bibitem{accessible-info}
\bibauthor{Piotr}{Badziag},
\bibauthor{Micha\l}{Horodecki},
\bibauthor{Aditi}{Sen}, 
\bibauthor{Ujjwal}{Sen},
\bibtitle{Locally accessible information: How much can the parties gain by cooperating?}
\journal{Phys.\ Rev.\ Lett.}{91}{117901}{2003}{10.1103/PhysRevLett.91.117901},
\arXiv{quant-ph/0304040}.

\bibitem{condinfo1}
\bibauthor{Robert}{Alicki},
\bibauthor{Mark}{Fannes},
\bibtitle{Continuity of quantum conditional information}
\journal{J.\ Phys.\ A}{37}{L55}{2004}{10.1088/0305-4470/37/5/L01}, 
\arXiv{quant-ph/0312081}.

\bibitem{e-maxcorr}
\bibauthor{Tohya}{Hiroshima},
\bibauthor{Masahito}{Hayashi},
\bibtitle{Finding a maximally correlated state - 
    Simultaneous Schmidt decomposition of bipartite pure states}
\journal{Phys.\ Rev.\ A}{70}{030302}{2004}{10.1103/PhysRevA.70.030302},
\arXiv{quant-ph/0405107}.

\bibitem{flowerstate}
\bibauthor{Karol}{Horodecki},
\bibauthor{Micha\l}{Horodecki},
\bibauthor{Pawel}{Horodecki},
\bibauthor{Jonathan}{Oppenheim},
\bibtitle{Locking entanglement measures with a single qubit}
\journal{Phys.\ Rev.\ Lett.}{94}{200501}{2005}{10.1103/PhysRevLett.94.200501},
\arXiv{quant-ph/0404096}.

\bibitem{tight-continuity}
\bibauthor{Andreas}{Winter},
\bibtitle{Tight Uniform Continuity Bounds for Quantum Entropies:
    Conditional Entropy, Relative Entropy Distance and Energy Constraints}
\journal{Commun.\ Math.\ Phys.}{347}{291}{2016}{10.1007/s00220-016-2609-8}, 
\arXiv{1507.07775}.

\bibitem{variational}
\bibauthor{Mario}{Berta},
\bibauthor{Omar}{Fawzi},
\bibauthor{Marco}{Tomamichel},
{On Variational Expressions for Quantum Relative Entropies}, in
\proceeding{2016 IEEE International Symposium on Information Theory
(ISIT)}{(IEEE, Barcelona, 2016) pp. 2844-2848}{10.1109/ISIT.2016.7541818}.
\arXiv{1512.02615}.




\end{thebibliography}
\end{document}